\documentclass{article}
\usepackage[preprint]{src/neurips_2026}

\usepackage[utf8]{inputenc} 
\usepackage[T1]{fontenc}    
\usepackage{hyperref}       
\usepackage{url}            
\usepackage{booktabs}       
\usepackage{amsfonts}       
\usepackage{nicefrac}       
\usepackage{microtype}      
\usepackage{xcolor}         
\usepackage{todonotes}

\usepackage[table]{xcolor}
\usepackage{enumitem}
\usepackage{graphicx}
\usepackage{amsmath}
\usepackage{multirow}
\usepackage{pifont}
\usepackage{subcaption}
\usepackage{tikz}
\usepackage{amsthm}
\usepackage{tabularx}  
\usepackage{threeparttable}
\usepackage{titletoc}
\usepackage{caption}

\usepackage{pgfplots}
\usepgfplotslibrary{groupplots}
\usepgfplotslibrary{fillbetween}
\usetikzlibrary{intersections}
\pgfplotsset{compat=1.18}

\theoremstyle{plain}
\newtheorem{theorem}{Theorem}[section]
\newtheorem{proposition}[theorem]{Proposition}

\newtheorem{corollary}[theorem]{Corollary}
\theoremstyle{definition}
\newtheorem{definition}[theorem]{Definition}

\theoremstyle{remark}
\newtheorem{remark}[theorem]{Remark}

\newcommand{\tc}[2]{\textcolor{#1}{#2}}
\newcommand{\cmark}{\textcolor{green!70!black}{\ding{51}}} 
\newcommand{\xmark}{\textcolor{red!90!black}{\ding{55}}}   

\title{Higher-Order Fourier Neural Operator: Explicit Mode Mixer for Nonlinear PDEs}
\author{%
Alex Colagrande$^1$\qquad
  Paul Caillon$^1$\qquad
Eva Feillet$^2$\qquad
Alexandre Allauzen$^{1,3}$\\
$^1$Miles Team, LAMSADE, Université Paris Dauphine-PSL, Paris, France \\
$^2$Université Paris-Saclay, CNRS, LISN, 91400, Orsay, France \\
$^3$ESPCI PSL, Paris, France \\
\texttt{alex.colagrande@dauphine.psl.eu}
}

\begin{document}

\maketitle

\begin{abstract}
    Neural operators provide deep neural networks for learning mappings between function spaces. Among them, the Fourier Neural Operator (FNO) is particularly effective: its spectral convolution relies on low-dimensional Fourier-domain representations and can handle inputs at different resolutions. This design aligns well with settings where the Fourier basis diagonalizes the underlying operator, such as linear, constant-coefficient PDEs on periodic domains, in which Fourier modes evolve independently. However, nonlinear PDEs may benefit from an additional inductive bias, as they exhibit structured interactions between modes, governed by polynomial nonlinearities. To capture this inductive bias, we introduce the \textbf{Higher-Order Spectral Convolution}, a spectral mixer that extends FNO from diagonal modulation to explicit $n$-linear mode mixing, aligned with the dynamics of nonlinear PDEs. Our experiments on standard benchmarks show that the proposed Higher-Order FNO (HO-FNO) retains the efficiency of FNO-based architectures and consistently improves over other spectral neural operators. 
    HO-FNO also performs on par with or better than state-of-the-art transformers and state-space models on several datasets, with stronger gains in highly nonlinear regimes, such as the Poisson equation with polynomial forcing, where a single HO-FNO layer outperforms FNO models with up to $16$ layers. We open-source our code for reproducibility at: \url{https://github.com/AlexColagrande/HO-FNO}.
\end{abstract}

\section{Introduction}

Partial differential equations (PDEs) are central to modeling physical and engineering systems, including fluid dynamics, porous-media transport, solid mechanics, and atmospheric flows~\citep{staniforth2022global}. Since closed-form solutions are rarely available, these problems typically require numerical approximation.
\newline
Over the past decades, traditional numerical, physically grounded methods such as the finite element method~\citep{FEM_Johnson}, the finite volume method~\citep{FVM_LeVeque}, and the finite difference method~\citep{FDM_LeVeque} have achieved both accuracy and interpretability. 
Despite their strengths, these methods face two main limitations: very high computational cost due to fine spatiotemporal discretization and reliance on complete knowledge of the governing PDEs. 
Recently, motivated by the remarkable achievements of deep learning for modeling complex functions, numerous data-driven PDE solvers have been introduced to overcome the limitations of traditional numerical methods. Among these approaches, the operator learning framework~\citep {kovachki_survey,Survey_NO_new} stands out as the most physically grounded. 
Neural operators, in particular, aim to approximate the underlying solution operator that maps input functions, such as coefficients, forcing terms, or initial conditions, to output solutions, thereby providing an efficient alternative to classical discretization-based schemes. 
The Fourier Neural Operator (FNO)~\citep{FNO} is inspired by spectral methods, which provide the highest spatial accuracy and exponential convergence on regular grids. 
\newline
The FNO and its variants explicitly model the input function in a spectral basis via the standard or a generalized Fourier transform, with or without encoder–decoder components. Such models typically combine linear layers with nonlinear activations. These linear layers are usually implemented as global convolutions over a truncated set of modes, and they act independently on each Fourier coefficient, i.e., without explicit mode mixing. \textit{In this work, to increase the modeling capacity of spectral layers in neural operators, we propose an $n$-order spectral convolution that implements an $n$-linear global mixing of Fourier coefficients while retaining the computational efficiency of a Fourier truncation.}
Our contributions are the following:
\begin{enumerate}[nosep, topsep=-4pt, ]
    \item \textbf{Higher-Order Fourier Neural Operator.} We design the first spectral neural operators modeling explicitly the mode interaction of non-linear PDEs.
    \item \textbf{Interaction on different geometries.} We showcase the effect of modeling order $2$ interactions on spherical data by combining our method with Spherical Harmonic convolutions.
    \item \textbf{Experiments and ablation studies.} Through extensive experiments, we show the advantages of the proposed method in standard benchmarks, including regular grids, structured meshes, and point clouds. We also show that the model provides additional benefits in highly non-linear settings, for which it is designed, and that higher-order interactions maintain FNO's ability to generalize across resolution. 
\end{enumerate}

\begin{figure*} 
\centering
\captionsetup{font=small}
\includegraphics[width=0.9\linewidth]{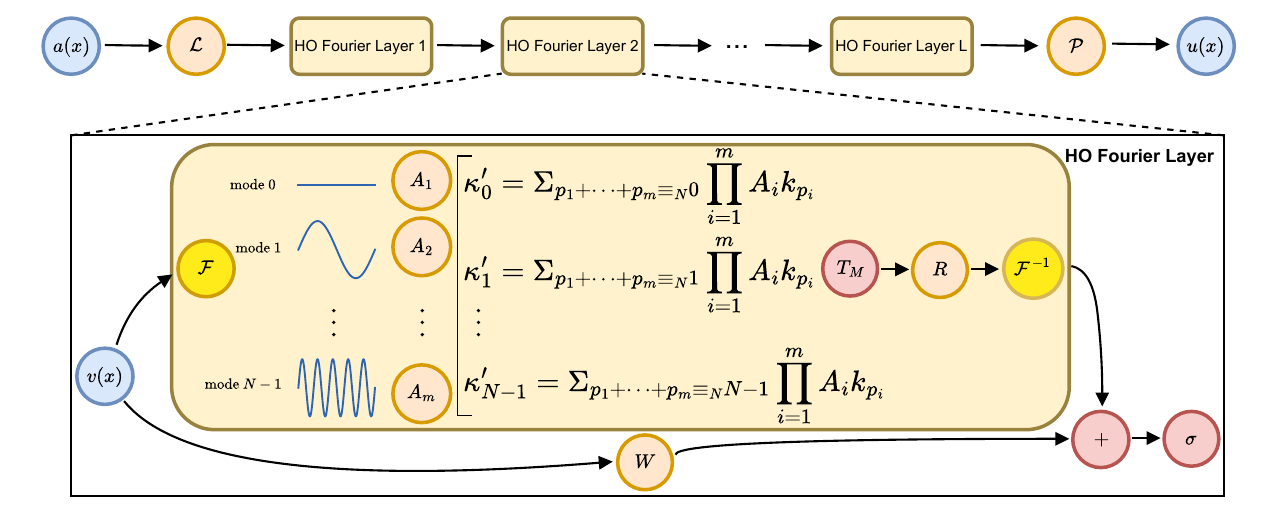}
    \caption{\textbf{Overview of the proposed HO-FNO architecture}, adapted from~\citep{FNO}. 
    \textbf{Top:} the input $a$ is lifted by $\mathcal{L}$, processed by $L$ HO-FNO layers, and projected by $\mathcal{P}$ to obtain the output $u$. 
    \textbf{Bottom:} each HO-Fourier layer transforms an intermediate representation $v$ to Fourier space, mixes the $N$ Fourier modes into higher-order pseudo-modes, keeps the lowest $M$ pseudo-modes, applies a learned linear transform $R$, and maps the result back with an inverse Fourier transform, $\mathcal{F}^{-1}$. 
    The output is combined with a local linear skip connection $W$, followed by a nonlinearity $\sigma$.
    }
    \label{fig:summary_figure}
\end{figure*}

\section{Related work}

\paragraph{Spectral neural operators} Among neural operators, FNO stands out for modeling dynamical systems on equally spaced meshes and for its ability to transfer across resolutions without retraining, a consequence of its explicit representation in the Fourier basis, which remains consistent under mesh refinement. 
For complex geometries, several variants of FNO have been introduced by changing the spectral basis, for instance, the  Spherical Fourier Neural Operator (SFNO) ~\citep{SFNO} on the sphere, and NORM ~\citep{NORM} on general Riemannian manifolds. 
Furthermore, extensions to irregular meshes have been proposed by mapping them onto regular grids, either via a non-uniform Fourier transform (GNO~\citep{GNO}), a learnable diffeomorphism implemented via message-passing (GEO-FNO~\citep{GEO-FNO}), or an optimal transport map (OTNO~\citep{OTNO}). 
In a complementary direction, DSFNO~\cite{DSFNO} replaces FNO’s static spectral kernel with hypernetwork-generated dynamic convolutions from truncated activations, but yields largely unstructured kernels. \emph{We build on this idea of using an input-dependent convolution by explicitly structuring the kernel to reflect the interaction patterns of solution operators for polynomial nonlinear PDEs.}

In the following, we refer to FNO and its variants as \emph{Spectral Neural Operators} (SNOs).

The spectral convolution of a Fourier-based neural operator closely mimics the action of the Green function, a kernel whose convolution yields the solution of linear PDEs with constant coefficients on periodic domains~\citep{stakgold2011green}. 
However, for nonlinear PDEs, the Green function no longer provides a useful representation, whereas the composition of linear spectral convolutions with nonlinear activations endows SNOs with universal approximation capabilities~\citep{FNO_theory}. 

\paragraph{Attention-based models}
Transformer models have become ubiquitous not only in language and vision but also in physical modeling~\citep{UPT, MANO}. 
 
A key factor behind the success of transformers is the attention's ability to capture pairwise interactions within the input sequence. 
This mechanism has recently been generalized to model interactions among an arbitrary number $n$ of entities, opening a line of work on higher-order attention~\citep{2-simplex_attention_logic}.
Despite their $O(N^n)$ complexity in the sequence length $N$, these higher-order variants of the transformer architecture show better scaling laws~\citep{2-simplex_attention_meta} and exponentially improved depth efficiency ~\citep{higher_order_attention_theory} compared to standard attention-based models.
\emph{Following this line of work, we introduce a new framework that realizes $n$-order interactions between coefficients directly in the Fourier domain, providing the spectral analog of higher-order attention, which operates in the Dirac domain.} 
Crucially, our method avoids the $O(N^n)$ complexity blow-up of higher-order attention. Instead, it matches the complexity of FFT-based operators with a cost of $O(N \log(N))$ per layer.

Finally, the triangular attention mechanism of the Edge Transformer ~\citep{triangular_attention-edge_transformer} and the triangle attention of AlphaFold2 ~\citep{alphafold2} are also close to our work. 
In both cases, the triangle refers to three-way interactions in the \textit{spatial domain}: given a triplet of nodes, triangular attention models the dependencies along the edges of the corresponding triangle, enabling richer modeling power. 
\emph{In our work, the triadic spectral convolution (order $n=2$) realizes the analog of triangular attention in the Fourier domain}. So the triangle corresponds to a triplet of frequency modes whose wavevectors satisfy a closure relation (e.g. $k_1+k_2=k_3$). The motivation is to capture the nonlinear triadic interactions that govern, for instance, energy transfer in PDE dynamics. 
For $n > 2$, our method is the Fourier analog of an $n$-symplicial extension of these attention mechanisms.
\section{Background} \label{sec:setting}

We consider a time-dependent PDE on a spatial domain
$\Omega\subset\mathbb{R}^d$, with boundary $\partial\Omega$, and on the
time interval $[0,T]$. A solution
$u:\Omega\times[0,T]\to\mathbb{R}$ satisfies
\begin{equation}\label{eq:PDE}
\begin{aligned}
&\forall x\in\Omega,\ \forall t\in(0,T],\qquad
\frac{\partial u}{\partial t}(x,t)
=
F\!\left(
\nu,t,x,u(x,t),\nabla_x u(x,t),\nabla_x^2u(x,t),\dots
\right),
\\
&\forall x\in\partial\Omega,\ \forall t\in(0,T],\qquad
\mathcal{B}(u)(x,t)=0,
\qquad
\forall x\in\Omega,\qquad
u(x,0)=u^0(x),
\end{aligned}
\end{equation}
where $\nu$ denotes PDE coefficients, $\mathcal{B}$ encodes the boundary
conditions, and $u^0\sim p^0$ is sampled from a probability distribution
on $L^2(\Omega;\mathbb{R})$, the space of real valued square-integrable functions on the spatial domain $\Omega$. 

\paragraph{Neural Operator.}
For dynamics governed by fixed coefficients $\nu$, the operator learning task consists of approximating the one-step solution operator $\mathcal{G}^\dagger$, for a fixed time step $\Delta t>0$ and $t\in[0,T-\Delta t]$:
\begin{equation}\label{eq:operator}
\begin{aligned}
\mathcal{G}^\dagger:
L^2(\Omega;\mathbb{R})
&\to
L^2(\Omega;\mathbb{R}),
\\
u(\cdot,t)
&\mapsto
u(\cdot,t+\Delta t).
\end{aligned}
\end{equation}

Following~\citep{kovachki_survey}, we approximate $\mathcal{G}^\dagger$ by a neural operator $\mathcal{G}_\theta$ composed of a pointwise lifting network $\mathcal{L}$, $L$ learnable operator layers $\mathcal{Q}_1,\dots,\mathcal{Q}_L$, and a pointwise projection network $\mathcal{P}$:
\begin{equation}\label{eq:NO_layers}
\mathcal{G}_\theta
=
\mathcal{P}
\circ
\mathcal{Q}_L
\circ\cdots\circ
\mathcal{Q}_1
\circ
\mathcal{L}.
\end{equation}
Let $v_\ell\in L^2(\Omega;\mathbb{R}^{c_\ell})$ denote the hidden representation at layer $\ell$, with $v_0=\mathcal{L}(u(\cdot,t))$.
For $\ell=1,\dots,L$, the layer $\mathcal{Q}_\ell:v_{\ell-1}\mapsto v_\ell$ is defined by
\begin{equation}\label{eq:Q_l}
v_\ell(x)
=
\mathcal{Q}_\ell(v_{\ell-1})(x)
=
\sigma\!\left(
W_\ell v_{\ell-1}(x)
+
\mathcal{K}_\ell(v_{\ell-1})(x)
+
b_\ell(x)
\right),
\qquad x\in\Omega,
\end{equation}
where $W_\ell$ is a pointwise linear map, $b_\ell$ is a bias term, $\mathcal{K}_\ell$ is a nonlocal integral operator, and $\sigma$ is a pointwise activation function. 
In the following, when there is no ambiguity, we drop the layer index.

\paragraph{Fourier Neural Operator (FNO).}
FNO~\citep{FNO} implements the nonlocal operator $\mathcal{K}_\ell$ as a global convolution parameterized in the Fourier domain, after a truncation of the input in the spectral domain. 
Let $\mathcal{M}\subset\mathbb{Z}^d$ denote the retained set of Fourier modes. Assuming a periodic rectangular domain,
with the convention  $\Omega=\mathbb{T}^d=\mathbb{R}^d/(2\pi\mathbb{Z})^d$, the Fourier truncation is
\begin{equation}
\mathsf{T}_{\mathcal{M}}v(x)
=
\sum_{k\in\mathcal{M}}
\widehat{v}(k)e^{ik\cdot x}.
\end{equation}
Then, for a given layer $\ell$, the spectral convolution can be written in physical space as in Eq. \ref{eq:spectral_conv}: 
\begin{equation}\label{eq:spectral_conv}
\mathcal{C}_{\theta_\ell}(v)(x)
=
\int_\Omega
\kappa_{\theta_\ell}(x-y)\mathsf{T}_{\mathcal{M}}v(y)\,dy.
\end{equation}
Equivalently, it is implemented in Fourier space by
\begin{equation}\label{eq:FNO}
\widehat{\mathcal{C}_{\theta_\ell}(v)}(k)
=
\begin{cases}
R_\ell^{(k)}\widehat{v}(k), & k\in\mathcal{M},\\
0, & k\notin\mathcal{M},
\end{cases}
\end{equation}
where $R_\ell^{(k)}$ is a learned complex-valued matrix acting on the channel dimension of the $k$-th Fourier mode. The output is then transformed back to the physical space with inverse FFT.
We remark that, \textit{within each linear spectral convolution, retained Fourier modes are updated independently of each other and cross-mode interactions are induced only through the pointwise nonlinearities between FNO layers}.

\paragraph{Polynomial nonlinearities in PDEs.}
Let $\alpha=(\alpha_1,\dots,\alpha_d)\in\mathbb{N}_0^d$ be a
multi-index, with $|\alpha|=\alpha_1+\cdots+\alpha_d$, and define
\begin{equation}
D_x^\alpha u
=
\frac{\partial^{|\alpha|}u}
{\partial x_1^{\alpha_1}\cdots\partial x_d^{\alpha_d}}.
\end{equation}
We use the convention $D_x^0u=u$. For a PDE of spatial order $m$, the collection of spatial derivatives up to order $m$ is
$
\{D_x^\alpha u:0\leq|\alpha|\leq m\}.
$
Thus Eq.~\ref{eq:PDE} can be written in multi-index form as
\begin{equation}\label{eq:PDE_multiindex}
\begin{aligned}
&\forall x\in\Omega,\ \forall t\in(0,T],\qquad
\frac{\partial u}{\partial t}(x,t)
=
F\!\left(
\nu,t,x,
\{D_x^\alpha u(x,t):0\leq|\alpha|\leq m\}
\right),
\\
&\forall x\in\partial\Omega,\ \forall t\in(0,T],\qquad
\mathcal{B}(u)(x,t)=0,
\qquad
\forall x\in\Omega,\qquad
u(x,0)=u^0(x).
\end{aligned}
\end{equation}

When the right-hand side is polynomial in $u$ and its spatial derivatives, it can be decomposed into homogeneous components:
\begin{equation}\label{eq:decoupled_PDE}
\frac{\partial u}{\partial t}(x,t)
=
\sum_{j=0}^{n}
\mathsf{P}_{\mathcal{I}_j,j}[u](x,t).
\end{equation}
Here $\mathsf{P}_{\mathcal{I}_j,j}$ denotes the $j$-homogeneous component of the PDE nonlinearity. More precisely, denoting as $\boldsymbol{\alpha}$ a set of multi-indices 
$(\alpha^{(1)},\dots,\alpha^{(j)}) \in (\mathbb{N}_0^d)^j,$
where each $\alpha^{(r)}$ is a spatial multi-index, 
let $\mathcal{I}_j\subset(\mathbb{N}_0^d)^j$ be the set of multi-index tuples appearing in degree-$j$ terms. Then
\begin{equation}
\mathsf{P}_{\mathcal{I}_j,j}[u](x,t)
=
\sum_{\boldsymbol{\alpha}\in\mathcal{I}_j}
c_{\boldsymbol{\alpha}}
\prod_{r=1}^{j}
D_x^{\alpha^{(r)}}u(x,t).
\end{equation}
We refer to the maximal value of $j$ in Eq.~\ref{eq:decoupled_PDE} as \textit{the polynomial degree of the PDE nonlinearity} and denote it by $n$. 
For instance, quadratic nonlinearities appear in Burgers' equation and in the advective term of the incompressible Navier--Stokes equations.

Now, we focus on the degree-$n$ component,
\begin{equation}\label{eq:n_linear_part}
\mathsf{P}_{\mathcal{I}_n,n}[u](x,t)
=
\sum_{\boldsymbol{\alpha}\in\mathcal{I}_n}
c_{\boldsymbol{\alpha}}
\prod_{r=1}^{n}
D_x^{\alpha^{(r)}}u(x,t).
\end{equation}
On the torus $\Omega=\mathbb{T}^d$, we expand scalar functions as
\begin{equation}\label{eq:fourier_basis}
u(x,t)
=
\sum_{k\in\mathbb{Z}^d}
\widehat{u}(k,t)e^{ik\cdot x},
\qquad
\widehat{u}(k,t)\in\mathbb{C}.
\end{equation}
Taking the Fourier transform of Eq.~\ref{eq:n_linear_part} gives
\begin{equation}\label{eq:FT_polynomial_main}
\widehat{\mathsf{P}_{\mathcal{I}_n,n}[u]}(k,t)
=
\sum_{\boldsymbol{\alpha}\in\mathcal{I}_n}
\sum_{\substack{k_1+\cdots+k_n=k\\ k_1,\dots,k_n\in\mathbb{Z}^d}}
C_{\boldsymbol{\alpha}}(k_1,\dots,k_n)
\prod_{r=1}^{n}
\widehat{u}(k_r,t),
\end{equation}
with
$\qquad
C_{\boldsymbol{\alpha}}(k_1,\dots,k_n)
=
c_{\boldsymbol{\alpha}}
\prod_{r=1}^{n}
(ik_r)^{\alpha^{(r)}},
\qquad
(ik_r)^{\alpha^{(r)}}
:=
\prod_{q=1}^{d}
(ik_{r,q})^{\alpha^{(r)}_q}.
$

The inner sum in Eq.~\ref{eq:FT_polynomial_main} is an $n$-fold convolution of Fourier modes. It makes explicit how polynomial nonlinearities mix frequencies, whereas standard spectral neural operators use linear
spectral convolutions that act mode-wise and rely on pointwise nonlinearities to induce such mixing. 
This is precisely this mixing that our proposed higher-order spectral convolution is designed to model.
\section{Beyond pointwise non-linearities}
\label{sec:beyond}

\paragraph{Motivation.}
Pointwise non-linearities already induce interactions between Fourier coefficients. However, as discussed previously, current approaches do not distinguish between different orders of non-linearity nor between the individual factors appearing in the nonlinear terms. 
Indeed, as noted by~\cite {azeglio2024convolution}, the current formulation of spectral neural operators modulates all interactions with the same set of weights inherited from the preceding linear layer. This limitation becomes clear by expanding a pointwise non-linearity as a polynomial series. 
Consider, for simplicity, the expansion of a scalar non-linearity applied after a linear combination:
\begin{equation}
\begin{aligned}
    \sigma\!\left(\tc{blue}{w_1} x_1 + \tc{green!60!black}{w_2} x_2\right) \approx & \lambda_0 
    + \tc{purple!90!black}{\lambda_1} (\tc{blue}{w_1} x_1 + \tc{green!60!black}{w_2} x_2)  + \tc{orange!80!red}{\lambda_2} (\tc{blue}{w_1} x_1 + \tc{green!60!black}{w_2} x_2)^2
    + \ldots \\
    =& \lambda_0 
    + \tc{purple!90!black}{\lambda_1} \tc{blue}{w_1} x_1 
    + \tc{purple!90!black}{\lambda_1} \tc{green!60!black}{w_2} x_2 + \tc{orange!80!red}{\lambda_2} \tc{blue}{w_1}^2 x_1^2
    + \tc{orange!80!red}{\lambda_2} \tc{green!60!black}{w_2}^2 x_2^2  + 2 \tc{orange!80!red}{\lambda_2} \tc{blue}{w_1} \tc{green!60!black}{w_2} x_1 x_2
    + \ldots
\end{aligned} \label{eq:eq_example}
\end{equation}
In Eq. \ref{eq:eq_example}, all quadratic interactions are tied through the same weights $\tc{blue}{w_1}, \tc{green!60!black}{w_2}$ and coefficient $\tc{orange!80!red}{\lambda_2}$. As a result, the model cannot independently control the different higher order terms (e.g., $x_1^2$, $x_2^2$, or $x_1 x_2$), leading to a weight-sharing constraint across nonlinear orders and factors.
\emph{In contrast, our Higher-Order Convolution explicitly parameterizes each nonlinear interaction term}. We argue that this distinction is crucial when modeling nonlinear operators arising in partial differential equations. For instance, modeling the quadratic advection term in Navier-Stokes requires controlling distinct bilinear interactions between components and derivatives, which a linear layer followed by a pointwise non-linearity can only capture implicitly with enough depth.
Next, we present, as a concrete example, the non-linear interactions on the incompressible Navier-Stokes equation. 

\paragraph{Example of Navier-Stokes equation.}\label{sec:navier}
We express the PDE in the vorticity form  as follows in Equation \ref{eq:vorticity}:
\begin{equation}\label{eq:vorticity}
\begin{aligned}
    \partial_t w(x,t) &= \nu \Delta w(x,t)
    - \colorbox{orange!20}{$\displaystyle(\nabla^\perp \Delta^{-1} w)(x,t)\cdot \nabla w(x,t)$}
    + f(x), \\
    u(x,t) &= \nabla^\perp \Delta^{-1} w(x,t), \qquad
    \nabla \cdot u(x,t)=0, \qquad
    w(x,0)=w_0(x).
\end{aligned}
\end{equation}
where $x \in (0,1)^2$, $t \in (0,T]$. Throughout this example, we assume that $w$ has zero spatial mean, so that $\Delta^{-1}$ is well-defined on the nonzero Fourier modes.
\color{black}
To observe the interaction of the Fourier modes of the vorticity, we take the Fourier transform, for $k \in \mathbb{Z}^2, \ t \in (0, T]$:
\begin{equation}\label{eq:Fourier_Navier-Stokes}
\begin{aligned}   \partial_t(\widehat{w})(k, t) = - \nu (2\pi)^2 \lvert k \rvert^2 \widehat{w}(k, t)
    - \colorbox{orange!20}{$\displaystyle
  \sum_{p + q = k}
  \frac{(p + q) \cdot p^\perp}{\lvert p \rvert^2}
  \widehat{w}(p, t)\widehat{w}(q, t)
  $} + \widehat{f}(k, t) .
\end{aligned}
\end{equation}
with  $p=(p_1, p_2)$ and $p^\perp=(-p_2, p_1)$. In Fourier space, the \colorbox{orange!20}{nonlinear advection term} of Eq. \ref{eq:Fourier_Navier-Stokes} takes the form of a convolution, inducing triadic interactions that redistribute kinetic energy across Fourier modes. We refer to the appendix \ref{app:extended_navier} for a more detailed discussion on Navier-Stokes equation.
\newline
As highlighted by \citet{triad_interaction_NS}, the primary difficulty in working with the spectral Navier–Stokes equations described in Eq.\ref{eq:Fourier_Navier-Stokes} is to appropriately account for all nonlinear interactions. 
An analytical treatment requires some means of tracking energy transfer from two arbitrary modes $p$ and $q$ into a third mode $k$. 
Therefore, it motivates the use of architectures that go beyond diagonal modulation of Fourier coefficients by explicitly parameterizing higher-order interactions in the spectral domain.

\textbf{Higher-Order Fourier Neural Operators (HO-FNO).}
We extend the kernel map to incorporate explicit $m$-linear interactions via the following Higher-Order Spectral Convolution, given in Eq. \ref{eq:HO_FNO}, where each $A_i$ is a learnable linear operator acting channel-wise in physical space. 
In this work, we instantiate the $A_i$ as point-wise linear maps shared across spatial locations but not across layers. We report in Appendix~\ref{app:structure_matrices_ablation} ablation results about the structure of the $A_i$ matrices.
The $m$-linear point-wise products of Eq. \ref{eq:HO_FNO} in physical space induce a structured $m$-linear global mixing among Fourier coefficients as given by Eq.~\ref{eq:HO_FNO_mixing}.
\begin{equation}
\label{eq:HO_FNO}
    \big(\mathcal{H}_\theta u\big)(x) = \int_\Omega k_\theta(x-y) \mathsf{T}_M \big( (A_1 u) (A_2 u) \cdots (A_m u)\big) (y) dy
\end{equation}
\begin{equation} \label{eq:HO_FNO_mixing}
    (\widehat{\mathcal{H}_\theta v})(k) = R^{(k)} \sum_{k_1 + \ldots + k_m = k} A_1 \widehat{v}(k_1) A_2 \widehat{v}(k_2) \cdots A_m \widehat{v}(k_m)
\end{equation}
Thus, each mode $k$ aggregates all $m$-tuples of modes with indices summing to $k$, mirroring the nonlinear interaction structure of PDEs with polynomial nonlinearities.
We provide further illustrations of the method in Appendix~\ref{app:additional_method_illustrations}, and  formalize the interest of the proposed inductive-bias in Appendix~\ref{sec:inductive}. 
There, we introduce a local expansion of the Fourier-truncated one-step solution operator, define the tied class of higher-order interactions induced by pointwise nonlinearities and the order-$n$ HO-FNO class realized by equation~\ref{eq:HO_FNO_mixing}. 
We show that in the matched case $m=n$, HO-FNO exactly matches the degree-$n$ Fourier interaction while shallow FNO is restricted to modeling a tied subclass of interactions. Hence, this proves that a local approximation gap remains between FNO and HO-FNO whenever the target interaction lies outside that class of interactions. 
\newline
Regarding point clouds, we build on the Geometry-Aware Fourier Neural Operator framework \cite{GEO-FNO} to extend our spectral neural operator to this setting.

\section{Experiments}\label{sec:experiments}
To showcase the interest of the higher-order spatial convolution, we introduce in subsection \ref{subsec: poly_Poisson} a simulated dataset consisting of a Poisson equation with polynomial forcing.
On this dataset, HO-FNO with a single layer outperforms FNO with multiple layers (up to $16$). 
Then, we evaluate the proposed HO-FNO across a wide range of PDE problems and compare it with over 20 recent operator-learning methods, including attention-based and state-space models. 
Table~\ref{tab:datasets_summary} provides a summary of the datasets, which span different geometry types, number of spatial dimensions, and problem sizes. We provide further information on the benchmarks and hyperparameter settings in Appendix~\ref{app:implementation_details}. 
We provide several ablations to assess the zero-shot super-resolution capabilities of HO-FNO in Appendix\ref{app:resolution_equivariance}, measure its improved efficiency compared to Transformers and state-space models, and isolate the specific contribution of higher-order spectral convolutions relative to other experimental choices. 
Finally, we ablate the matrix structure used to induce higher-order mixing in Appendix~\ref{app:structure_matrices_ablation}.

\subsection{A single HO-FNO layer is worth 16 FNO layers: Poisson Equation with Polynomial source}\label{subsec: poly_Poisson}
We introduce the Polynomial-Source Poisson dataset, a controlled nonlinear elliptic PDE benchmark designed to evaluate whether neural operators can efficiently represent multiplicative field interactions of increasing order. 
The dataset is parameterized by an integer degree $p$ that indicates the order of non-linearity of the equation. For each sample, the model receives $p$ independent input fields $u_1, \ldots, u_p$ and must predict the solution $v$ of a Poisson equation whose source term is the pointwise product of these input fields. 
The PDE is the following:
\begin{align}
    -\Delta v(x,y)
    =
    \prod_{\ell=1}^{p} u_{\ell}(x,y)
    -
    \frac{1}{|\Omega|}
    \int_{\Omega}
    \prod_{\ell=1}^{p} u_{\ell}(x',y')
    \,dx'\,dy',
    \qquad (x,y) \in \Omega.
\end{align}
The benchmark is motivated by a common structure in physics-based PDEs: nonlinear interactions between fields often appear as source terms, forcing terms, or closure terms\citep{multiple_conditioning_ref}, typical in electrostatics\citep{poisson_boltzmann_electrostatic, poisson_boltzmann_biomolecular}, phase-field models\citep{poisson_phase_field}, and pressure recovery in incompressible fluid dynamics\citep{poisson_fluid}. A detailed description of the equation and data-generation procedure is provided in Appendix ~\ref{app: poly poisson}.
\paragraph{Results} Across all considered Poisson datasets in Figure~\ref{fig:Polynomial-Source Poisson_layers}, every HO-FNO variant consistently outperforms FNO with the same number of layers. Moreover, on all the nonlinear Poisson datasets ($p \in \{2,3,5\}$) a single-layer HO-FNO of order $p$ already outperforms FNOs with up to $16$ layers. The largest gap is observed on the most challenging dataset, $p=5$, where HO-FNO achieves a loss that is two orders of magnitude lower than that of FNO. Since each HO-FNO layer has roughly the same number of parameters as an FNO layer, this means that, on these benchmarks, HO-FNO can outperform FNO models with up to $16$ times more layers and parameters. 
On the Poisson equation with linear forcing ($p=1$), the trend is similar, but the gap between FNO and HO-FNO variants is much smaller than in the nonlinear cases. This suggests that HO-FNO can also help on linear PDEs, while its largest gains emerge in nonlinear regimes.
\definecolor{FNOColor}{RGB}{220,75,60}       
\definecolor{HO2Color}{RGB}{46,160,67}       
\definecolor{HO3Color}{RGB}{30,136,229}      
\definecolor{HO5Color}{RGB}{142,68,173}      
\begin{figure*}[!h]
\centering
\resizebox{\textwidth}{!}{%
\begin{tabular}{cccc}
\begin{tikzpicture}
\begin{axis}[
    width=4.4cm, height=4.1cm,
    xlabel={\textbf{Number of Layers}},
    ylabel={\small \textbf{Test MSE}},
    title={$p=1$},
    xmin=0.8, xmax=16.2,
    ymin=1e-3, ymax=1e-2,
    ymode=log,
    xtick={1,2,4,8,16},
    ymajorgrids=true,
    grid style=dashed,
    title style={font=\footnotesize},
    label style={font=\footnotesize},
    tick label style={font=\scriptsize},
    unbounded coords=jump
]

\addplot[name path=FNOupperp1, draw=none] coordinates {
    (1,0.00516) (2,0.00577) (4,0.00423) (8,0.00316) (16,0.00391)
};
\addplot[name path=FNOlowerp1, draw=none] coordinates {
    (1,0.00398) (2,0.00533) (4,0.00327) (8,0.00266) (16,0.00333)
};
\addplot[
    fill=FNOColor,
    fill opacity=0.18,
    draw=none,
    forget plot
] fill between[of=FNOupperp1 and FNOlowerp1];
\addplot[
    color=FNOColor,
    mark=square*,
    mark size=1.5pt,
    line width=0.8pt
] coordinates {
    (1,0.00457) (2,0.00555) (4,0.00375) (8,0.00291) (16,0.00362)
};

\addplot[name path=HO2upperp1, draw=none] coordinates {
    (1,0.00727) (2,0.00184) (4,0.00269) (8,0.00170) (16,0.00309)
};
\addplot[name path=HO2lowerp1, draw=none] coordinates {
    (1,0.00601) (2,0.00166) (4,0.00189) (8,0.00166) (16,0.00229)
};
\addplot[
    fill=HO2Color,
    fill opacity=0.18,
    draw=none,
    forget plot
] fill between[of=HO2upperp1 and HO2lowerp1];
\addplot[
    color=HO2Color,
    mark=square*,
    mark size=1.5pt,
    line width=0.8pt
] coordinates {
    (1,0.00664) (2,0.00175) (4,0.00229) (8,0.00168) (16,0.00269)
};
\addplot[name path=HO3upperp1, draw=none] coordinates {
    (1,0.00295) (2,0.00246) (4,0.00337) (8,0.00342) (16,0.00370)
};
\addplot[name path=HO3lowerp1, draw=none] coordinates {
    (1,0.00251) (2,0.00188) (4,0.00251) (8,0.00222) (16,0.00324)
};
\addplot[
    fill=HO3Color,
    fill opacity=0.18,
    draw=none,
    forget plot
] fill between[of=HO3upperp1 and HO3lowerp1];
\addplot[
    color=HO3Color,
    mark=square*,
    mark size=1.5pt,
    line width=0.8pt
] coordinates {
    (1,0.00273) (2,0.00217) (4,0.00294) (8,0.00282) (16,0.00347)
};

\addplot[name path=HO5upperp1, draw=none] coordinates {
    (1,0.00638) (2,0.00212) (4,0.00327) (8,0.00231) (16,0.00342)
};
\addplot[name path=HO5lowerp1, draw=none] coordinates {
    (1,0.00522) (2,0.00200) (4,0.00247) (8,0.00189) (16,0.00258)
};
\addplot[
    fill=HO5Color,
    fill opacity=0.18,
    draw=none,
    forget plot
] fill between[of=HO5upperp1 and HO5lowerp1];
\addplot[
    color=HO5Color,
    mark=square*,
    mark size=1.5pt,
    line width=0.8pt
] coordinates {
    (1,0.00580) (2,0.00206) (4,0.00287) (8,0.00210) (16,0.00300)
};

\end{axis}
\end{tikzpicture}%
&
\begin{tikzpicture}
\begin{axis}[
    width=4.4cm, height=4.1cm,
    xlabel={\textbf{Number of Layers}},
    title={$p=2$},
    xmin=0.8, xmax=16.2,
    ymin=1.5e-4, ymax=2e-1,
    ymode=log,
    xtick={1,2,4,8,16},
    ymajorgrids=true,
    grid style=dashed,
    title style={font=\footnotesize},
    label style={font=\footnotesize},
    tick label style={font=\scriptsize},
    unbounded coords=jump
]

\addplot[name path=FNOupperp2, draw=none] coordinates {
    (1,0.09847) (2,0.00229) (4,0.00242) (8,0.00318) (16,0.00227)
};
\addplot[name path=FNOlowerp2, draw=none] coordinates {
    (1,0.09181) (2,0.00185) (4,0.00144) (8,0.00168) (16,0.00183)
};
\addplot[
    fill=FNOColor,
    fill opacity=0.18,
    draw=none,
    forget plot
] fill between[of=FNOupperp2 and FNOlowerp2];
\addplot[
    color=FNOColor,
    mark=square*,
    mark size=1.5pt,
    line width=0.8pt
] coordinates {
    (1,0.09514) (2,0.00207) (4,0.00193) (8,0.00243) (16,0.00205)
};

\addplot[name path=HO2upperp2, draw=none] coordinates {
    (1,0.00093) (2,0.00051) (4,0.00042) (8,0.00023) (16,0.00035)
};
\addplot[name path=HO2lowerp2, draw=none] coordinates {
    (1,0.00053) (2,0.00029) (4,0.00024) (8,0.00019) (16,0.00023)
};
\addplot[
    fill=HO2Color,
    fill opacity=0.18,
    draw=none,
    forget plot
] fill between[of=HO2upperp2 and HO2lowerp2];
\addplot[
    color=HO2Color,
    mark=square*,
    mark size=1.5pt,
    line width=0.8pt
] coordinates {
    (1,0.00073) (2,0.00040) (4,0.00033) (8,0.00021) (16,0.00029)
};

\addplot[name path=HO3upperp2, draw=none] coordinates {
    (1,0.00041) (2,0.00070) (4,0.00064) (8,0.00029) (16,0.00188)
};
\addplot[name path=HO3lowerp2, draw=none] coordinates {
    (1,0.00031) (2,0.00046) (4,0.00034) (8,0.00021) (16,0.00032)
};
\addplot[
    fill=HO3Color,
    fill opacity=0.18,
    draw=none,
    forget plot
] fill between[of=HO3upperp2 and HO3lowerp2];
\addplot[
    color=HO3Color,
    mark=square*,
    mark size=1.5pt,
    line width=0.8pt
] coordinates {
    (1,0.00036) (2,0.00058) (4,0.00049) (8,0.00025) (16,0.00110)
};

\addplot[name path=HO5upperp2, draw=none] coordinates {
    (1,0.00032) (2,0.00033) (4,0.00163) (8,0.00131) (16,0.00207)
};
\addplot[name path=HO5lowerp2, draw=none] coordinates {
    (1,0.00016) (2,0.00017) (4,0.00017) (8,0.00111) (16,0.00071)
};
\addplot[
    fill=HO5Color,
    fill opacity=0.18,
    draw=none,
    forget plot
] fill between[of=HO5upperp2 and HO5lowerp2];
\addplot[
    color=HO5Color,
    mark=square*,
    mark size=1.5pt,
    line width=0.8pt
] coordinates {
    (1,0.00024) (2,0.00025) (4,0.00090) (8,0.00121) (16,0.00139)
};

\end{axis}
\end{tikzpicture}%
&
\begin{tikzpicture}
\begin{axis}[
    width=4.4cm, height=4.1cm,
    xlabel={\textbf{Number of Layers}},
    title={$p=3$},
    xmin=0.8, xmax=16.2,
    ymin=3e-4, ymax=1.2,
    ymode=log,
    xtick={1,2,4,8,16},
    ymajorgrids=true,
    grid style=dashed,
    title style={font=\footnotesize},
    label style={font=\footnotesize},
    tick label style={font=\scriptsize},
    unbounded coords=jump
]

\addplot[name path=FNOupperp3, draw=none] coordinates {
    (1,0.73489) (2,0.02351) (4,0.01362) (8,0.00699) (16,0.00576)
};
\addplot[name path=FNOlowerp3, draw=none] coordinates {
    (1,0.72287) (2,0.01253) (4,0.00848) (8,0.00499) (16,0.00484)
};
\addplot[
    fill=FNOColor,
    fill opacity=0.18,
    draw=none,
    forget plot
] fill between[of=FNOupperp3 and FNOlowerp3];
\addplot[
    color=FNOColor,
    mark=square*,
    mark size=1.5pt,
    line width=0.8pt
] coordinates {
    (1,0.72888) (2,0.01802) (4,0.01105) (8,0.00599) (16,0.00530)
};

\addplot[name path=HO2upperp3, draw=none] coordinates {
    (1,0.57697) (2,0.00483) (4,0.00462) (8,0.00407) (16,0.00470)
};
\addplot[name path=HO2lowerp3, draw=none] coordinates {
    (1,0.49731) (2,0.00305) (4,0.00204) (8,0.00283) (16,0.00238)
};
\addplot[
    fill=HO2Color,
    fill opacity=0.18,
    draw=none,
    forget plot
] fill between[of=HO2upperp3 and HO2lowerp3];
\addplot[
    color=HO2Color,
    mark=square*,
    mark size=1.5pt,
    line width=0.8pt
] coordinates {
    (1,0.53714) (2,0.00394) (4,0.00333) (8,0.00345) (16,0.00354)
};

\addplot[name path=HO3upperp3, draw=none] coordinates {
    (1,0.00257) (2,0.00204) (4,0.00161) (8,0.00187) (16,0.00143)
};
\addplot[name path=HO3lowerp3, draw=none] coordinates {
    (1,0.00211) (2,0.00144) (4,0.00107) (8,0.00127) (16,0.00067)
};
\addplot[
    fill=HO3Color,
    fill opacity=0.18,
    draw=none,
    forget plot
] fill between[of=HO3upperp3 and HO3lowerp3];
\addplot[
    color=HO3Color,
    mark=square*,
    mark size=1.5pt,
    line width=0.8pt
] coordinates {
    (1,0.00234) (2,0.00174) (4,0.00134) (8,0.00157) (16,0.00105)
};

\addplot[name path=HO5upperp3, draw=none] coordinates {
    (1,0.00066) (2,0.00122) (4,0.00167) (8,0.00250) (16,0.00479)
};
\addplot[name path=HO5lowerp3, draw=none] coordinates {
    (1,0.00052) (2,0.00080) (4,0.00151) (8,0.00078) (16,0.00155)
};
\addplot[
    fill=HO5Color,
    fill opacity=0.18,
    draw=none,
    forget plot
] fill between[of=HO5upperp3 and HO5lowerp3];
\addplot[
    color=HO5Color,
    mark=square*,
    mark size=1.5pt,
    line width=0.8pt
] coordinates {
    (1,0.00059) (2,0.00101) (4,0.00159) (8,0.00164) (16,0.00317)
};

\end{axis}
\end{tikzpicture}%
&
\begin{tikzpicture}
\begin{axis}[
    width=4.4cm, height=4.1cm,
    xlabel={\textbf{Number of Layers}},
    title={$p=5$},
    xmin=0.8, xmax=16.2,
    ymin=2e-3, ymax=2,
    ymode=log,
    xtick={1,2,4,8,16},
    ymajorgrids=true,
    grid style=dashed,
    title style={font=\footnotesize},
    label style={font=\footnotesize},
    tick label style={font=\scriptsize},
    unbounded coords=jump
]

\addplot[name path=FNOupperp5, draw=none] coordinates {
    (1,1.12996) (2,1.23792) (4,1.07371) (8,1.14072) (16,1.21406)
};
\addplot[name path=FNOlowerp5, draw=none] coordinates {
    (1,1.10368) (2,0.82168) (4,0.89437) (8,1.07102) (16,1.20140)
};
\addplot[
    fill=FNOColor,
    fill opacity=0.18,
    draw=none,
    forget plot
] fill between[of=FNOupperp5 and FNOlowerp5];
\addplot[
    color=FNOColor,
    mark=square*,
    mark size=1.5pt,
    line width=0.8pt
] coordinates {
    (1,1.11682) (2,1.02980) (4,0.98404) (8,1.10587) (16,1.20773)
};

\addplot[name path=HO2upperp5, draw=none] coordinates {
    (1,1.16875) (2,1.36781) (4,1.21381) (8,1.16927) (16,1.15700)
};
\addplot[name path=HO2lowerp5, draw=none] coordinates {
    (1,1.10657) (2,1.05707) (4,0.92407) (8,1.13439) (16,1.08688)
};
\addplot[
    fill=HO2Color,
    fill opacity=0.18,
    draw=none,
    forget plot
] fill between[of=HO2upperp5 and HO2lowerp5];
\addplot[
    color=HO2Color,
    mark=square*,
    mark size=1.5pt,
    line width=0.8pt
] coordinates {
    (1,1.13766) (2,1.21244) (4,1.06894) (8,1.15183) (16,1.12194)
};

\addplot[name path=HO3upperp5, draw=none] coordinates {
    (1,1.05274) (2,1.42782) (4,1.20711) (8,1.11394) (16,1.09049)
};
\addplot[name path=HO3lowerp5, draw=none] coordinates {
    (1,0.99496) (2,1.37076) (4,1.17541) (8,1.09538) (16,0.92007)
};
\addplot[
    fill=HO3Color,
    fill opacity=0.18,
    draw=none,
    forget plot
] fill between[of=HO3upperp5 and HO3lowerp5];
\addplot[
    color=HO3Color,
    mark=square*,
    mark size=1.5pt,
    line width=0.8pt
] coordinates {
    (1,1.02385) (2,1.39929) (4,1.19126) (8,1.10466) (16,1.00528)
};

\addplot[name path=HO5upperp5, draw=none] coordinates {
    (1,0.00927) (2,0.00606) (4,0.02089) (8,0.03887) (16,0.01232)
};
\addplot[name path=HO5lowerp5, draw=none] coordinates {
    (1,0.00775) (2,0.00254) (4,0.00409) (8,0.00953) (16,0.00604)
};
\addplot[
    fill=HO5Color,
    fill opacity=0.18,
    draw=none,
    forget plot
] fill between[of=HO5upperp5 and HO5lowerp5];
\addplot[
    color=HO5Color,
    mark=square*,
    mark size=1.5pt,
    line width=0.8pt
] coordinates {
    (1,0.00851) (2,0.00430) (4,0.01249) (8,0.02420) (16,0.00918)
};
\end{axis}
\end{tikzpicture}%
\end{tabular}%
}

\vspace{-0.2cm}
\begin{tikzpicture}
\begin{axis}[
    hide axis,
    legend columns=-1,
    axis lines=none,
    ticks=none,
    legend style={
        draw=none,
        column sep=2ex,
        font=\scriptsize
    },
    xmin=0, xmax=1, ymin=0, ymax=1
]
\addlegendimage{only marks, mark=square*, color=FNOColor}
\addlegendentry{\textbf{FNO (order 1)}}

\addlegendimage{only marks, mark=square*, color=HO2Color}
\addlegendentry{\textbf{HO-FNO (order 2)}}

\addlegendimage{only marks, mark=square*, color=HO3Color}
\addlegendentry{\textbf{HO-FNO (order 3)}}

\addlegendimage{only marks, mark=square*, color=HO5Color}
\addlegendentry{\textbf{HO-FNO (order 5)}}
\end{axis}
\end{tikzpicture}%

\caption{\small Test MSE as a function of the number of layers on the Polynomial-Source Poisson datasets for $p=1,2,3,$ and $5$. Solid lines denote the mean over runs, and shaded bands indicate one standard deviation. Lower values indicate better performance. Tables with quantitative results are provided in Appendix~\ref{app:extended_results_poisson}.}
\label{fig:Polynomial-Source Poisson_layers}
\end{figure*}

\subsection{Isolating the Effect of Higher-Order Spectral Convolutions}
 To isolate the contribution of the proposed Higher-Order spectral convolution from the choice of backbone architecture, we train both FNO and HO-FNO using the simpler backbone of the original FNO~\citeyearpar{FNO}. In all other experiments, both models use a modern residual backbone; the differences between the two backbones are detailed in Appendix~\ref{app:backbone_details}. We conduct this controlled comparison on the three Navier–Stokes datasets reported in Table~\ref{tab:results1M}. These datasets, introduced by \citet{FNO}, differ only in the viscosity, $\nu \in \{10^{-3}, 10^{-4}, 10^{-5}\}$. Further details are provided in Appendix~\ref{app: poly poisson}.
\paragraph{Results} HO-FNO outperforms FNO in all the $3$ fluid regimes, in all the $3$ reported metrics, with best performance at the highest viscosity where HO-FNO has a rollout loss that is almost an order of magnitude lower than the one of FNO. Additionally, HO-FNO performs better than DSFNO in the reported results and performs better than LaMO in the two datasets with higher viscosity while being slightly behind in the low viscosity benchmark. 

Additionally, the ablation shows that the order-$3$ model brings only marginal improvements over the order-$2$ version. This is consistent with both our theoretical motivation and the experiments on the Poisson dataset in Section~\ref{subsec: poly_Poisson}: the three benchmarks considered here involve PDEs with quadratic nonlinearities, and therefore we do not expect HO-FNO variants of order greater than $2$ to yield substantial additional gains.

\subsection{Main results} \label{subsec:benchmarks}

\paragraph{Tasks. }
We consider six PDE surrogate tasks 
that were introduced by \cite{FNO} and \cite{GEO-FNO} and have since become a standard suite for operator-learning work, as they represent a broad spectrum of physical phenomena, 
namely material stress/deformation (Elasticity, Plasticity), fluid dynamics governed by the Navier–Stokes equations (Airfoil, Pipe, and a spatio-temporal Navier–Stokes case), and porous-media flow described by Darcy’s law. 

\paragraph{Baselines} We compare HO-FNO against baselines reported in \citep{LaMO,transolver, mspt}, spanning classical models (e.g. Deep-O-Net \cite{DEEPONET}, frequency-based neural operators (FNO etc.), transformers, and state space models. 
We report the relative $L_2$ error between predicted and ground-truth fields. 
We conduct all the experiments building on the code and benchmark setup of LaMO \citep{LaMO}. 
\definecolor{msecol}{HTML}{F8F5E9}   
\definecolor{nrmsecol}{HTML}{EFF0F1}  
\definecolor{rollcol}{HTML}{E6ECF8} 
\begin{table*}[!h]

\caption{\small Test performance on Navier--Stokes datasets.
    We report the number of parameters, the test \colorbox{msecol}{MSE}, the normalized MSE (\colorbox{nrmsecol}{nRMSE}), the \colorbox{rollcol}{rollout nRMSE}.}
\label{tab:results1M}
\centering
\resizebox*{0.7\textwidth}{!}{
\begin{tabular}{c c  c  c  c  c  c}
\toprule
\multicolumn{2}{c}{model}
& FNO
    & \multicolumn{2}{c}{HO-FNO}
    & $\text{DSFNO}^*$ & LaMO\\
\cmidrule(lr){1-5}

\multicolumn{2}{c}{order}
    & 1 
    & 2
    & 3
    & 
    &  \\

\midrule

\multicolumn{2}{c}{N. parameters}
    & $1.09$M
    & $1.09$M & $1.10$M
    & $1.06$M
    & $1.54$M \\
\midrule

\multirow{3}{*}{\textbf{NS ($\nu = 10^{-3}$)}} 
    & \cellcolor{msecol} MSE 
        & \cellcolor{msecol} $3.0\times10^{-7}$
        & \cellcolor{msecol} $\underline{7.8\times10^{-8}}$ & \cellcolor{msecol}  $\mathbf{7.6\times10^{-8}}$
        &\cellcolor{msecol} --
        &\cellcolor{msecol} $1.6\times10^{-7}$ \\
    & \cellcolor{nrmsecol} nRMSE 
        & \cellcolor{nrmsecol} $4.4\times10^{-4}$
        & \cellcolor{nrmsecol} $\underline{2.8\times10^{-4}}$ & \cellcolor{nrmsecol} $\mathbf{2.7\times10^{-4}}$
        & \cellcolor{nrmsecol}--
        & \cellcolor{nrmsecol} $4.0\times10^{-4}$ \\
    & \cellcolor{rollcol} Rollout 
        & \cellcolor{rollcol} $1.2\times10^{-2}$
        & \cellcolor{rollcol}  $\underline{1.8\times10^{-3}}$ & \cellcolor{rollcol} $\mathbf{1.6\times10^{-3}}$
        & \cellcolor{rollcol} $5.6 \times 10^{-3}$
        & \cellcolor{rollcol} $3.8\times10^{-3}$ \\
\midrule

\multirow{3}{*}{\textbf{NS ($\nu = 10^{-4}$)}} 
    & \cellcolor{msecol} MSE 
        & \cellcolor{msecol} $2.6\times10^{-3}$
        & \cellcolor{msecol}$\mathbf{7.9\times10^{-4}}$ & \cellcolor{msecol} $\underline{7.9\times10^{-4}}$
        & \cellcolor{msecol}--
        & \cellcolor{msecol}$8.3 \times 10^{-4}$ \\
    & \cellcolor{nrmsecol} nRMSE 
        & \cellcolor{nrmsecol} $2.9\times10^{-2}$
        & \cellcolor{nrmsecol} $\mathbf{1.3\times10^{-2}}$ & \cellcolor{nrmsecol} $\underline{1.3\times10^{-2}}$
        &\cellcolor{nrmsecol} --
        &\cellcolor{nrmsecol} $1.4 \times 10^{-2}$ \\
    & \cellcolor{rollcol} Rollout 
        & \cellcolor{rollcol} $7.7\times10^{-2}$
        & \cellcolor{rollcol} $\mathbf{4.6\times10^{-2}}$ & \cellcolor{rollcol} $\underline{4.6\times10^{-2}}$
        & \cellcolor{rollcol} $6.0 \times 10^{-2}$
        & \cellcolor{rollcol} $5.0 \times 10^{-2}$ \\
\midrule

\multirow{3}{*}{\textbf{NS ($\nu = 10^{-5}$)}} 
    & \cellcolor{msecol} MSE 
        & \cellcolor{msecol} $1.8\times10^{-2}$
        & \cellcolor{msecol} $\underline{1.7 \times 10^{-2}}$ & \cellcolor{msecol} $\mathbf{1.7 \times 10^{-2}}$
        &\cellcolor{msecol} --
        &\cellcolor{msecol} $\mathbf{1.4 \times 10^{-2}}$ \\
    & \cellcolor{nrmsecol} nRMSE 
        & \cellcolor{nrmsecol} $6.7\times10^{-2}$
        & \cellcolor{nrmsecol}  $\mathbf{6.5 \times 10^{-2}}$ & \cellcolor{nrmsecol} $6.8 \times 10^{-2}$
        &\cellcolor{nrmsecol} --
        &\cellcolor{nrmsecol} $\mathbf{5.8 \times 10^{-2}}$ \\
    & \cellcolor{rollcol} Rollout 
        & \cellcolor{rollcol} $1.3\times10^{-1}$
        & \cellcolor{rollcol} $\mathbf{1.1 \times 10^{-1}}$ & \cellcolor{rollcol} $1.2 \times 10^{-1}$
        &\cellcolor{rollcol} --
        &\cellcolor{rollcol} $\mathbf{1.1 \times 10^{-1}}$ \\
\midrule

\end{tabular}
}
\vspace{-2mm}
\begin{minipage}{0.8\textwidth}
\tiny
$^*$ Original errors reported in \cite{DSFNO}, single-step metrics and results for Navier–Stokes with $\nu = 10^{-5}$ were not provided.
\end{minipage}
\end{table*}
\paragraph{Results} Across all benchmarks reported in Table~\ref{tab:main_result}, HO-FNO consistently outperforms other frequency-based models, and often by a wide margin over FNO. 
Moreover, \textit{HO-FNO is the only frequency-based model that is competitive with the strongest Transformer and state-space baselines}. 
It attains the best reported error on Navier--Stokes $(0.0319)$ and Plasticity $(0.0006)$, supporting the intuition that higher-order spectral interactions are a useful inductive bias for PDEs with high-order nonlinear structure. 
While Navier--Stokes contains a quadratic convection term, Plasticity involves analytic nonlinearities that are not finite-degree polynomials and can be locally approximated by an infinite power series. 
This suggests that the higher-order spectral bias may remain useful beyond strictly polynomial PDEs.
\begin{table*}[!h]
        \caption{\small Main results across standard datasets, reported as mean relative $\ell_2$ error (Eq.~\ref{eq:nRMSE_app}; lower is better). 
        \textbf{HO-FNO (Ours)} is evaluated under a parameter budget matched to LaMO on each task. 
        \textcolor{orange}{orange} denotes the best baseline, \textcolor{blue}{blue} the second-best, and \textcolor{violet}{violet} the third-best. Best result for each dataset is \textbf{in bold}.}
	\label{tab:main_result}
	\centering
	\begin{small}
		\begin{sc}
			\renewcommand{\multirowsetup}{\centering}
        \resizebox{\textwidth}{!}{
        \begin{tabular}{llcclcccc}
        \toprule
        \multicolumn{2}{c}{\multirow{3}{*}{Operator}} 
        & \multirow{2}{*}{Resolution}
        & \multicolumn{2}{c}{Regular Grid} 
        & \multicolumn{3}{c}{Structured Mesh} 
        & \multicolumn{1}{c}{Point Cloud} \\
        \cmidrule(lr){4-5} \cmidrule(lr){6-8} \cmidrule(lr){9-9}
        & & Equivariant & Navier--Stokes & Darcy & Plasticity & Airfoil & Pipe & Elasticity \\
        \midrule
        \midrule

        \multirow{4}{*}{\rotatebox[origin=c]{90}{Classic}} 
        & UNET \citeyearpar{UNET} 
        & \xmark
        & $0.1982$ & $0.0080$ & $0.0051$ & $0.0079$ & $0.0065$ & $0.0235$ \\

        & Resnet \citeyearpar{RESNET} 
        & \xmark
        & $0.2753$ & $0.0587$ & $0.0233$ & $0.0391$ & $0.0120$ & $0.0262$ \\

        & SWIN \citeyearpar{SWIN} 
        & \xmark
        & $0.2248$ & $0.0397$ & $0.0170$ & $0.0270$ & $0.0109$ & $0.0283$ \\

        & DeepONet \citeyearpar{DEEPONET} 
        & \xmark
        & $0.2972$ & $0.0588$ & $0.0135$ & $0.0385$ & $0.0097$ & f \\
                    
        \midrule

        \multirow{7}{*}{\rotatebox[origin=c]{90}{Transformer}}

        & Galerkin \citeyearpar{Galerkin_Trans}  
        & \xmark
        & $0.1401$ & $0.0084$ & $0.0120$ & $0.0118$ & $0.0098$ & $0.0240$ \\

        & HT-Net \citeyearpar{Ht-net} 
        & \xmark
        & $0.1847$ & $0.0079$ & $0.0333$ & $0.0065$ & $0.0059$ & / \\

        & OFormer \citeyearpar{OFormer} 
        & \xmark
        & $0.1705$ & $0.0124$ & $0.0017$ & $0.0183$ & $0.0168$ & $0.0183$ \\

        & GNOT \citeyearpar{GNOT}  
        & \xmark
        & $0.1380$ & $0.0105$ & $0.0336$ & $0.0076$ & $0.0047$ & $0.0086$ \\

        & FactFormer \citeyearpar{FactFormer}  
        & \xmark
        & $0.1214$ & $0.0109$ & $0.0312$ & $0.0071$ & $0.0060$ & / \\

        & ONO \citeyearpar{ONO} 
        & \cmark
        & $0.1195$ & $0.0076$ & $0.0048$ & $0.0061$ & $0.0052$ & $0.0118$ \\

        & Transolver \citeyearpar{transolver}
        & \xmark
        & $0.0957$ & $0.0059$ & $0.0013$ & $0.0053$ & $0.0046$ & $0.0064$ \\

        & Erwin \citeyearpar{erwin} & \xmark
        & - & - & $0.0010$ & $0.0257$ & $0.0061$ & \cellcolor{orange!20}\textbf{0.0034} \\

        & Transolver++ \citeyearpar{transolver++} & \xmark
        & $0.0719$ & \cellcolor{violet!10}$0.0058$ & $0.0011$ & \cellcolor{violet!10}$0.0048$ & $0.0061$ & $0.0052$ \\

        & MSPT \citeyearpar{mspt} & \xmark
        & \cellcolor{violet!10}$0.0632$ & $0.0063$ & \cellcolor{violet!10}$0.0010$ & $0.0051$ & \cellcolor{orange!20}\textbf{0.0031} & \cellcolor{blue!10}$0.0048$ \\

        \midrule
                 
        \multirow{2}{*}{\rotatebox[origin=c]{90}{SSM}}

        & LaMO \citeyearpar{LaMO} 
        & \xmark
        & \cellcolor{blue!10} $0.0460$ 
        & \cellcolor{orange!20}\textbf{0.0039} 
        & \cellcolor{blue!10}$0.0007$
        & \cellcolor{orange!20}\textbf{0.0041} 
        & \cellcolor{blue!10}0.0038
        & \cellcolor{violet!10}0.0050 \\
                   
        \\
        \midrule

        \multirow{6}{*}{\rotatebox[origin=c]{90}{Frequency}}
                      
        & WMT \citeyearpar{WMO} 
        & \xmark
        & $0.1541$ & $0.0082$ & $0.0076$ & $0.0075$ & $0.0077$ & $0.0359$ \\

        & U-FNO \citeyearpar{U-FNO}  
        & \cmark
        & $0.2231$ & $0.0183$ & $0.0039$ & $0.0269$ & $0.0056$ & $0.0239$ \\

        & FNO \citeyearpar{FNO, GEO-FNO}  
        & \cmark
        & $0.1556$ & $0.0108$ & $0.0074$ & $0.0138$ & $0.0067$ & $0.0229$ \\

        & U-NO \citeyearpar{UNO} 
        & \cmark
        & $0.1713$ & $0.0113$ & $0.0034$ & $0.0078$ & $0.0100$ & $0.0258$ \\

        & F-FNO \citeyearpar{F-FNO} 
        & \cmark
        & $0.2322$ & $0.0077$ & $0.0047$ & $0.0078$ & $0.0070$ & $0.0263$ \\

        & LSM \citeyearpar{LSM} 
        & \xmark
        & $0.1535$ & $0.0065$ & $0.0025$ & $0.0059$ & $0.0050$ & 0.0218 \\

         & \textit{FNO with modern backbone ( Ours)} 
        & \cmark
        &  $0.0976$ & $0.0068$ & $0.0066$ & $0.0065$ & $0.0069$ & $0.0205$\\

        & \textbf{HO-FNO (Ours)} 
        & \cmark
        & \cellcolor{orange!20} \textbf{0.0319} & \cellcolor{blue!10}$0.0052$ & \cellcolor{orange!20}\textbf{0.0006} & \cellcolor{blue!10}$0.0051$ & \cellcolor{violet!10}$0.0049$ & $0.0199$\\ 

        \midrule

                &\multicolumn{2}{c}{Increment over FNO with modern backbone} 
        & $\mathbf{67.3\%}$ 
        & $\mathbf{23.5\%}$ 
        & $\mathbf{90.9\%}$ 
        & $\mathbf{21.5\%}$ 
        & $\mathbf{29.0\%}$ 
        & $\mathbf{2.9\%}$ \\

        &\multicolumn{2}{c}{Increment over the best Frequency model} 
        & $\mathbf{67.3\%}$ 
        & $\mathbf{20.0\%}$ 
        & $\mathbf{76.0\%}$ 
        & $\mathbf{13.6\%}$ 
        & $\mathbf{2.0\%}$ 
        & $\mathbf{3.0\%}$ \\

        & \multicolumn{2}{c}{Increment over the best Transformer model} 
        & $\mathbf{55.2\%}$ 
        & $\mathbf{11.9\%}$ 
        & $\mathbf{40.0\%}$ 
        & $\mathbf{3.8\%}$ 
        & $-58.1\%$ 
        & $-485.3\%$ \\

        &\multicolumn{2}{c}{Increment over the best Res. Equiv. model} 
        & $\mathbf{79.5\%}$ 
        & $\mathbf{32.5\%}$ 
        & $\mathbf{82.4\%}$ 
        & $\mathbf{34.6\%}$ 
        & $\mathbf{12.5\%}$ 
        & $\mathbf{13.1\%}$ \\

		\bottomrule
		\end{tabular}}
		\end{sc}
	\end{small}
\end{table*}

\subsection{Extension to manifolds: SWE on the sphere.}
The classical Fourier transform is naturally defined on periodic euclidean domains, such as the torus $\mathbb{T}_d$. For functions defined on a manifold $\mathcal{M} \subset \mathbb{R}^D$, applying the standard Fourier transform after embedding in the ambient space ignores the intrinsic geometry of $\mathcal{M}$. A geometry-aware alternative is given by the spectral decomposition of the Laplace--Beltrami operator,
\begin{equation} \label{eq:eigen}
    -\Delta_g \phi_j = \lambda_j \phi_j \quad \text{on } \mathcal{M},
\end{equation}
whose eigenfunctions $\{\phi_j\}$ act as generalized Fourier modes and eigenvalues $\{\lambda_j\}$ as frequencies. This induces a natural notion of spectral convolution on manifolds, allowing Higher-Order Spectral Convolutions to extend beyond euclidean geometries. We refer to Appendix~\ref{app: extension to manifolds} for the extended discussion, and evaluate this setting on the spherical Shallow Water Equation in Table~\ref{tab:results_spherical}.
\paragraph{Results} HO-SFNO consistently outperforms SFNO on the spherical SWE benchmark across all metrics. The improvement is particularly strong for the full rollout, where the error decreases from $7.7 \times 10^{-1}$ to $7.0 \times 10^{-1}$, and for the early rollout interval $(0,10)$, where the error drops from $9.9 \times 10^{-2}$ to $8.0 \times 10^{-2}$. This suggests that higher-order spectral interactions improve both short-term prediction accuracy and long-horizon stability on non-Euclidean geometries.
\begin{table}[!h]
    \centering
    \small
    \vspace{-1em}
    \caption{\small Test performance on SWE dataset. Models are trained with MSE. We report variants of MSE for time intervals $(0, 10), \ (11, 25), \ (26, 50)$ and full rollout. Best per metric in \textbf{bold}.}
    \label{tab:results_spherical}
    \resizebox{0.8\textwidth}{!}{%
    \begin{tabular}{lcccccc}
        \toprule
        \textbf{Metric} & MSE & NRMSE & Rollout ($0:10$) & Rollout ($11:25$) & Rollout ($26:50$) & Rollout \\
        \midrule
        
        SFNO & $8.23$ & $1.7 \times 10^{-2}$ & $9.9 \times 10^{-2}$ & $3.0 \times 10^{-1}$ & $7.2 \times 10^{-1}$ & $7.7 \times 10^{-1}$ \\
        HO-SFNO (ours) & $\mathbf{5.56}$ & $\mathbf{1.3 \times 10^{-2}}$ & $\mathbf{8.0 \times 10^{-2}}$ & $\mathbf{2.6 \times 10^{-1}}$ & $\mathbf{6.2 \times 10^{-1}}$ & $\mathbf{7.0 \times 10^{-1}}$ \\
        
        \bottomrule
    \end{tabular}%
    }
\end{table}
\subsection{Efficiency Analysis}
A primary motivation for HO-FNO is to match transformers and SSMs performance while retaining FNO's efficiency. We provide a parameter count computation and we analyze computational and memory cost, measuring wall-clock time for inference, training and peak GPU memory usage.

Denoting by $C$ the number of input/output channels, $M$ the number of retained Fourier modes and $m$ the order of HO-FNO, the parameter count of an HO spectral layer is
$MC^{2} + mC^{2}$,
which grows linearly with the interaction order $m$. Since typically $m \ll M$ and an FNO convolution has $MC^2$ parameters, the additional parameters introduced by higher-order mixing are negligible in practice.
\paragraph{Results} The analysis~\ref{fig:efficiency_plot_normalized} shows that HO-FNO adds no visible overhead compared to FNO across the considered metrics, while remaining more efficient than competing models. Transolver is the most expensive method in all categories and benchmarks. Compared to LaMO, HO-FNO is substantially faster at inference across all settings, and more memory-efficient and faster to train on two benchmarks.
\begin{figure*}[!htb]
    \centering
    \setlength{\tabcolsep}{5pt}
    \definecolor{emerald}{RGB}{80,200,120}
    \definecolor{cobalt}{RGB}{0,71,171}

    \resizebox{0.8\textwidth}{!}{%
    \begin{tabular}{ccc}
    \begin{tikzpicture}
    \begin{axis}[
        width=4.9cm, height=4cm,
        ybar,
        bar width=5pt,
        ylabel=\tiny \textbf{Normalized training time},
        symbolic x coords={NS, Airfoil, Pipe},
        xtick=data,
        x tick label style={
            font=\tiny,
            rotate=0,
            anchor=center
        },
        ymin=0,
        enlarge x limits=0.25,
        enlarge y limits=0.05,
        yticklabel style={
            /pgf/number format/.cd,
            fixed,
            precision=1
        }
    ]
    \addplot[fill=orange] coordinates {
        (NS,0.95)
        (Airfoil,1.69)
        (Pipe,3.16)
    };

    \addplot[fill=emerald] coordinates {
        (NS,1.00)
        (Airfoil,1.00)
        (Pipe,1.00)
    };

    \addplot[fill=cobalt] coordinates {
        (NS,1.02)
        (Airfoil,1.06)
        (Pipe,1.10)
    };

    \addplot[fill=violet] coordinates {
        (NS,1.05)
        (Airfoil,1.87)
        (Pipe,3.16)
    };
    \end{axis}
    \end{tikzpicture}%
    &
    \begin{tikzpicture}
    \begin{axis}[
        width=4.9cm, height=4cm,
        ybar,
        bar width=5pt,
        ylabel=\tiny \textbf{Normalized inference time},
        symbolic x coords={NS, Airfoil, Pipe},
        xtick=data,
        x tick label style={
            font=\tiny,
            rotate=0,
            anchor=center
        },
        ymin=0,
        enlarge x limits=0.25,
        enlarge y limits=0.05,
        yticklabel style={
            /pgf/number format/.cd,
            fixed,
            precision=1
        }
    ]
    \addplot[fill=orange] coordinates {
        (NS,1.23)
        (Airfoil,2.13)
        (Pipe,3.58)
    };

    \addplot[fill=emerald] coordinates {
        (NS,1.00)
        (Airfoil,1.00)
        (Pipe,1.00)
    };

    \addplot[fill=cobalt] coordinates {
        (NS,1.00)
        (Airfoil,1.05)
        (Pipe,1.16)
    };

    \addplot[fill=violet] coordinates {
        (NS,1.54)
        (Airfoil,2.13)
        (Pipe,3.58)
    };
    \end{axis}
    \end{tikzpicture}%
    &
    \begin{tikzpicture}
    \begin{axis}[
        width=4.9cm, height=4cm,
        ybar,
        bar width=5pt,
        ylabel=\tiny \textbf{Normalized memory},
        symbolic x coords={NS, Airfoil, Pipe},
        xtick=data,
        x tick label style={
            font=\tiny,
            rotate=0,
            anchor=center
        },
        ymin=0,
        enlarge x limits=0.25,
        enlarge y limits=0.05,
        yticklabel style={
            /pgf/number format/.cd,
            fixed,
            precision=1
        }
    ]
    \addplot[fill=orange] coordinates {
        (NS,0.62)
        (Airfoil,1.19)
        (Pipe,1.30)
    };

    \addplot[fill=emerald] coordinates {
        (NS,1.00)
        (Airfoil,1.00)
        (Pipe,1.00)
    };

    \addplot[fill=cobalt] coordinates {
        (NS,1.00)
        (Airfoil,1.18)
        (Pipe,1.19)
    };

    \addplot[fill=violet] coordinates {
        (NS,1.58)
        (Airfoil,1.16)
        (Pipe,1.22)
    };
    \end{axis}
    \end{tikzpicture}%
    \end{tabular}%
    }

    \vspace{-0.2cm}

    \begin{tikzpicture}
    \begin{axis}[
        hide axis,
        legend columns=-1,
        axis lines=none,
        ticks=none,
        legend style={
            draw=none,
            column sep=2ex,
            font=\scriptsize
        },
        xmin=0, xmax=1, ymin=0, ymax=1
    ]
    \addlegendimage{only marks, mark=*, color=orange}
    \addlegendentry{\textbf{LaMO}}

    \addlegendimage{only marks, mark=*, color=emerald}
    \addlegendentry{\textbf{FNO}}

    \addlegendimage{only marks, mark=*, color=cobalt}
    \addlegendentry{\textbf{HO-FNO}}

    \addlegendimage{only marks, mark=*, color=violet}
    \addlegendentry{\textbf{Transolver}}
    \end{axis}
    \end{tikzpicture}%

    \caption{Efficiency comparison on NS, Airfoil, and Pipe after normalization with respect to FNO for each dataset and metric on a single Nvidia A100 GPU. Raw values are available in Appendix~\ref{app:wall_clock_times_and_memory_usage}.}
    \label{fig:efficiency_plot_normalized}
\end{figure*}
\section{Conclusion and Future Work}\label{sec:conclusions_and_future_work}
This paper introduces the Higher-Order Fourier Neural Operator (HO-FNO) for solving non-linear PDEs. 
Thanks to its HO-Spectral Convolution, a spectral mixer that mirrors polynomial interactions present in nonlinear PDEs, HO-FNO retains the efficiency of FFT-based methods while performing on par, or better, than more costly models such as Transformers and State-space models. 
Experimental results show state-of-the-art accuracy on standard PDE benchmarks and a remarkable depth efficiency in strongly non-linear dynamics. 
Future works include a specialized version on point clouds data to bridge the gap in that setting between frequency-based models and transformers, as well as exploring input-dependent conditioning of the higher-order kernels and multi-physics adaptations.

\section*{Acknowledgments}
This work was granted access to the HPC resources of IDRIS under the allocations AD010616824, AD011016345 and A0191016927 made by GENCI. This work has received support from the French government, managed by the
National Research Agency, under the France 2030 program with the reference “PR[AI]RIE-PSAI”
(ANR-23-IACL-0008) and "PEPR-SHARP" (ANR-23-PEIA-0008).

\clearpage
\newpage
\bibliographystyle{plainnat}
\bibliography{bibliography}

\clearpage
\newpage
\appendix
\begin{center}
    {\LARGE \textbf{Appendix}}
\end{center}

\startcontents[appendix]
\printcontents[appendix]{}{1}{}
\newpage

\section{Notations}
We summarize here the notations used throughout the paper.
\begin{table}[h!]
\centering
\renewcommand{\arraystretch}{1.1}
\begin{tabular}{p{3cm} p{11cm}}
\toprule
\textbf{Symbol} & \textbf{Meaning} \\
\midrule
$\Omega$ & Spatial domain. \\
$\partial\Omega$ & Boundary of the spatial domain. \\
$d$ & Number of spatial dimensions. \\
$[0,T]$ & Temporal domain. \\
$\mathbb{T}^d$ & $d$-dimensional torus, used as a periodic spatial domain. \\
$x\in\Omega$ & Spatial coordinate. \\
$t\in[0,T]$ & Time variable. \\
$u(x,t)\in\mathbb{R}$ & Scalar solution field evaluated at space--time point $(x,t)$. \\
$u^0$ & Initial condition, with $u(x,0)=u^0(x)$. \\
$p^0$ & Probability distribution from which initial conditions are sampled. \\
$\nu$ & PDE coefficients, such as diffusivity (Burger), viscosity (Navier-Stockes, or hyperdiffusion coefficients (SWE). \\
$\mathcal{B}(u)$ & Boundary condition operator. \\
\midrule
$\alpha=(\alpha_1,\dots,\alpha_d)$
& Spatial multi-index, where $\alpha_q$ is the derivative order in direction $x_q$. \\
$|\alpha|$ & Total order of the multi-index, $|\alpha|=\alpha_1+\cdots+\alpha_d$. \\
$D_x^\alpha u$ & Spatial derivative associated with $\alpha$:
$
D_x^\alpha u
=
\frac{\partial^{|\alpha|}u}
{\partial x_1^{\alpha_1}\cdots\partial x_d^{\alpha_d}}.
$ \\

$m$ & spatial order of the PDE. \\
\midrule
$\mathcal{G}^\dagger$ & Exact one-step solution operator mapping $u(\cdot,t)$ to $u(\cdot,t+\Delta t)$. \\
$\mathcal{G}_\theta$ & Learned neural operator that approximates $\mathcal{G}^\dagger$. \\
$\mathcal{L}$ & Pointwise lifting network. \\
$\mathcal{P}$ & Pointwise projection network. \\
$\mathcal{Q}_\ell$ & $\ell$-th learnable operator layer. \\
$L$ & Number of learnable operator layers. \\
$v_\ell\in L^2(\Omega;\mathbb{R}^{c_\ell})$ & Hidden representation at layer $\ell$, with $c_\ell$ latent channels. \\
$W_\ell$, $b_\ell$ & Pointwise linear map and bias term in layer $\ell$. \\
$\mathcal{K}_\ell$ & Nonlocal integral operator in layer $\ell$. \\
\midrule
$\mathcal{M}\subset\mathbb{Z}^d$ & Set of retained Fourier modes. \\
$\widehat{u}(k,t)\in\mathbb{C}$ & Fourier coefficient of $u(\cdot,t)$ at Fourier mode $k$ \\
$i$ & Imaginary unit, $i^2=-1$. \\
$\mathsf{T}_{\mathcal{M}}$ & Fourier truncation operator retaining only modes in $\mathcal{M}$. \\
$\mathcal{C}_{\theta_\ell}$ & Spectral convolution operator in FNO layer $\ell$. \\
$\kappa_{\theta_\ell}$ & Kernel of the convolution operator $\mathcal{C}_{\theta_\ell}$. \\
$R_\ell^{(k)}$ & Learned complex-valued matrix acting on the channel dimension of Fourier mode $k$ in layer $\ell$. \\
$N$ & Number of retained Fourier modes, $N=|\mathcal{M}|$. \\
\midrule
$j$ & Polynomial degree of a homogeneous component. \\
$n$ & Maximal polynomial degree of the PDE nonlinearity. \\
$\boldsymbol{\alpha}
=
(\alpha^{(1)},\dots,\alpha^{(j)})$ & Tuple of $j$ spatial multi-indices appearing in a degree-$j$ monomial. \\
$\mathcal{I}_j\subset(\mathbb{N}^d)^j$ & Set of multi-index tuples appearing in the degree-$j$ homogeneous component. \\
$\mathsf{P}_{\mathcal{I}_j,j}[u]$ & Degree-$j$ homogeneous component of the PDE right-hand side. \\
$c_{\boldsymbol{\alpha}}$ & Scalar coefficient associated with the monomial indexed by $\boldsymbol{\alpha}$. \\
$C_{\boldsymbol{\alpha}}(k_1,\dots,k_j)$ & Fourier-domain coefficient induced by the spatial derivatives in the monomial indexed by $\boldsymbol{\alpha}$. \\
\bottomrule
\end{tabular}
\end{table}

\newpage
\section{Comparison of Mode Mixing in FNO and HO-FNO} \label{app:additional_method_illustrations} 
In the following, we further detail the differences between the spectral convolution of FNO and the higher-order variants introduced in this work. 

\paragraph{Mode mixing in standard FNO.}
In the standard FNO layer, each retained Fourier mode is updated independently through a learned complex-valued linear transformation. Therefore, the spectral convolution acts diagonally across modes: the updated representation of mode $m_i$ only depends on the input representation at the same mode $m_i$. In Figure~\ref{fig:mode_mixing}, this corresponds to the left diagram, where each retained mode is mapped to itself without interacting with the other modes.

\paragraph{Higher-order mode mixing.}
In contrast, our higher-order spectral convolution allows the update of a retained mode to depend on multiplicative interactions between several input modes. The right diagram in Figure~\ref{fig:mode_mixing} shows an order-$2$ example: pairs of modes are multiplied together, and all pairwise contributions that point to the same output mode are aggregated to form its updated representation. This induces a richer spectral mixing mechanism, where information from non-retained or different retained modes can contribute to the evolution of a given retained mode through higher-order interactions.

\paragraph{Interpretation.}
This mechanism is particularly relevant for nonlinear PDEs, where products in physical space correspond to convolutions in Fourier space. As a result, nonlinearities naturally generate interactions between Fourier modes. While the standard FNO spectral convolution only applies a learned linear transformation to each retained mode, the proposed higher-order convolution explicitly parameterizes such mode interactions. This provides an architectural bias better aligned with nonlinear operators, while preserving the efficiency of working in a truncated Fourier representation.
\begin{figure}[!h]
    \centering
    \includegraphics[width=0.6\linewidth]{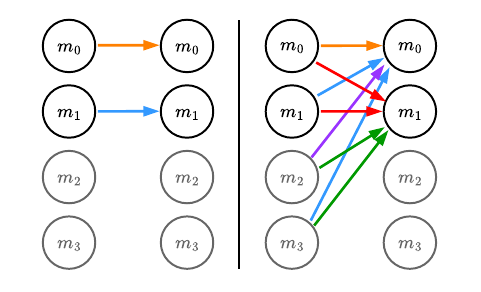}
    \caption{Example of mode mixing for a signal with $4$ Fourier modes, among which $2$ are retained. Diagonal modulation induced by the spectral convolution in FNO is compared with the proposed Higher Order mode mixing of order $2$. Arrows of the same color indicate modes that are multiplied together, and all contributions pointing to a given mode are aggregated to produce its updated representation.}
    \label{fig:mode_mixing}
    \vspace{-1em}
\end{figure}

\newpage

\section{Theoretical Results}\label{app:theoretical_results}
\subsection{A local inductive-bias induced by the polynomial Fourier structure}
\label{sec:inductive}

In this appendix, we provide further details on the notations from section~\ref{sec:setting}, and on the motivation for the design of HO-FNO. 

\paragraph{Writing a PDE as a sum of homogeneous components.} 
A homogeneous component of degree $j$ is the part of the PDE right-hand side whose terms are all polynomial monomials of total degree $j$ in $u$ and its spatial derivatives. 
More precisely, we may view $ u, \partial_{x_1}u, \partial_{x_2}u, \partial_{x_1x_1}u, \ldots$ as formal variables. 
A term is homogeneous of degree $j$ if it contains exactly $j$ factors of these variables.
For instance, the term $u\,\partial_x u$ is homogeneous of degree $2$, because it contains two factors: $u$ and $\partial_x u$. 
Similarly, $(\partial_x u)^2$ is also homogeneous of degree $2$, while $u^2\partial_x u$ is homogeneous of degree $3$. 
A linear diffusion term such as $\nu \Delta u$ is homogeneous of degree $1$, because it contains one factor, namely $\Delta u$, etc. 

Here we are interested in homogeneous components of degree $j$
\[
\mathsf{P}_{\mathcal{I}_j,j}[u](x,t)
=
\sum_{\boldsymbol{\alpha}\in\mathcal{I}_j}
c_{\boldsymbol{\alpha}}
\prod_{r=1}^{j}
D_x^{\alpha^{(r)}}u(x,t)
\]
i.e. each monomial contains exactly
$j$ factors of the form $D_x^{\alpha^{(r)}}u$, with $\alpha^{(r)}$ a multi-index.

Thus, in the decomposition
\[
\frac{\partial u}{\partial t}(x,t)
=
\sum_{j=0}^{n}
\mathsf{P}_{\mathcal{I}_j,j}[u](x,t),
\]
the right-hand side is grouped by polynomial degree:
with $\mathsf{P}_{\mathcal{I}_0,0}$ containing constant or forcing terms, $
\mathsf{P}_{\mathcal{I}_1,1}$ containing linear terms, 
 $\mathsf{P}_{\mathcal{I}_2,2}$ contains quadratic terms, and so on.

For example, in the viscous Burgers' equation
\[
\partial_t u
=
-\frac{1}{2}\partial_x(u^2)
+
\nu \partial_{xx}u
\]
the term
$
\nu \partial_{xx}u
$
is the degree-$1$ homogeneous component, while
$
-u\,\partial_x u
$
is the degree-$2$ homogeneous component. Therefore, the degree of
nonlinearity is $n=2$.

\paragraph{Motivating the design of HO-FNO.}
In Section~\ref{sec:setting}, we showed that if the PDE has degree of nonlinearity $n$, then the Fourier transform of its $n$-linear part takes the form
\begin{equation}\label{eq:FT_polynomial}
\widehat{\mathsf{P}_{\mathcal{I}_n,n}[u]}(k,t)
=
\sum_{\boldsymbol{\alpha}\in\mathcal{I}_n}
\sum_{\substack{k_1+\cdots+k_n=k\\ k_1,\dots,k_n\in\mathbb{Z}^d}}
C_{\boldsymbol{\alpha}}(k_1,\dots,k_n)
\prod_{r=1}^{n}
\widehat{u}(k_r,t),
\end{equation}
with
\[
C_{\boldsymbol{\alpha}}(k_1,\dots,k_n)
=
c_{\boldsymbol{\alpha}}
\prod_{r=1}^{n}
(ik_r)^{\alpha^{(r)}},
\qquad
(ik_r)^{\alpha^{(r)}}
:=
\prod_{q=1}^{d}
(ik_{r,q})^{\alpha^{(r)}_q}.
\]
 
In a more concise manner, we will write it as
\[
\mathsf{P}_{\mathcal I,n}(\widehat u)(k,t)
=
\sum_{k_1+\cdots+k_n=k}
C\,\widehat u(k_1,t)\cdots \widehat u(k_n,t).
\] 

In this appendix, we formalize the corresponding inductive-bias statement for HO-FNO in the matched case $m=n$. More precisely, we compare two classes of \emph{local hidden-space degree-$n$ interaction operators}: a tied class induced by repeated use of the same latent linear map inside a pointwise nonlinearity, and the order-$n$ HO-FNO class induced by Eq.~\ref{eq:HO_FNO_mixing}.

We show in the following that at the level of the degree-$n$ hidden interaction, an order-$n$ HO-FNO layer explicitly realizes the same factorwise $n$-linear Fourier mixing pattern as the PDE, while a tied pointwise model can only generate degree-$n$ interactions through repeated application of the same latent linear map. Therefore, if the target degree-$n$ interaction belongs to the HO-FNO class but is at positive distance from the tied class, then HO-FNO matches it exactly while tied pointwise models incur a non-vanishing local approximation error of order $\rho^n$ on balls of radius $\rho$.

\subsection{Motivation from the local expansion of the solution operator}

We keep the notation of Section~\ref{sec:setting}. Let
\[
\mathcal G^\dagger : L^2(\Omega;\mathbb R) \to L^2(\Omega;\mathbb R)
\]
denote the one-step solution operator introduced in Eq.~\eqref{eq:operator}. 
Since the analysis of Section~\ref{sec:setting} is carried out on the torus $\Omega=\mathbb T^d$, assuming the same number of retained modes $M$ in each spatial dimension, we introduce a Fourier truncation level $M\in\mathbb N$ and define the set of retained modes
\[
\mathcal K_M := \{k\in\mathbb Z^d : |k|\le M\}.
\]
For $u$, we write
\[
\widehat u_M := \big(\widehat u(k)\big)_{k\in \mathcal K_M}\in \mathbb C^{|\mathcal K_M|},
\]
and we denote by $\Pi_M$ the projection onto these retained modes. We define the Fourier-truncated one-step operator
\[
\mathcal G^\dagger_M := \Pi_M \circ \mathcal G^\dagger \circ \Pi_M .
\]

We assume that $\mathcal G^\dagger_M$ is analytic in a neighborhood of the origin. Hence it admits a local expansion
\begin{equation}
\mathcal G^\dagger_M(\widehat u_M)
=
\mathcal J_1(\widehat u_M)
+\mathcal J_2(\widehat u_M)
+\cdots
+\mathcal J_n(\widehat u_M)
+\mathcal R_{n+1}(\widehat u_M),
\label{eq:local_expansion_Gdagger}
\end{equation}
where:
\begin{itemize}
    \item each $\mathcal J_r$ is homogeneous of degree $r$,
    \item $\|\mathcal R_{n+1}(\widehat u_M)\| = O(\|\widehat u_M\|^{n+1})$ as $\widehat u_M\to 0$.
\end{itemize}

This expansion motivates isolating the degree-$n$ interaction. The theorem below, is stated at the hidden-space level, where the HO-FNO layer acts and where the operator classes are defined.

\subsection{From the scalar PDE notation to the hidden representation}

The PDE-side analysis in Section~\ref{sec:setting} is written for scalar functions
\[
u : \mathbb T^d \to \mathbb R.
\]
The HO-FNO architecture, acts on hidden representations. Following Section~\ref{sec:setting}, we denote by
\[
G_\theta = P \circ Q_L \circ \cdots \circ Q_1 \circ L
\]
the neural operator, where $L$ and $P$ are the lifting and projection networks, and
\[
Q_\ell(v_{\ell-1})=\sigma(W_\ell v_{\ell-1} + K_\ell(v_{\ell-1})+b_\ell)
\]
is the $\ell$-th operator layer. Next we drop the $\ell$ index. 

We therefore formulate the local degree-$n$ comparison at the hidden level. Let
\[
\widehat v(k)\in \mathbb C^C,
\qquad k\in \mathcal K_M,
\]
denote the Fourier coefficients of a hidden representation $v$ with $C$ channels.

Motivated by Eq.~\eqref{eq:FT_polynomial}, we define a target degree-$n$ hidden interaction operator $\mathcal T_n$ by
\begin{equation}
(\widehat{\mathcal T_n v})(k)
=
\sum_{k_1+\cdots+k_n=k}
T_k\big(\widehat v(k_1),\dots,\widehat v(k_n)\big),
\label{eq:target_hidden_operator}
\end{equation}
where, for each retained mode $k\in\mathcal K_M$,
\[
T_k : (\mathbb C^C)^n \to \mathbb C^C
\]
is an $n$-linear map.

Equation~\eqref{eq:target_hidden_operator} is the channel-valued analogue of Eq.~\eqref{eq:FT_polynomial}: the output mode $k$ is obtained by combining all $n$-tuples of input modes whose indices sum to $k$.

\subsection{Two degree-$n$ coefficient-family classes}

We now define the two classes that will be compared. We formulate them as classes of coefficient families
\[
\mathbf T = \{T_k\}_{k\in \mathcal K_M},
\]
where each $T_k : (\mathbb C^C)^n \to \mathbb C^C$ is $n$-linear.

\begin{definition}[Tied class induced by pointwise nonlinearities]
\label{def:tied_class_exact}
We define $\mathfrak C_n^{\mathrm{tied}}$ as the set of coefficient families $\mathbf T=\{T_k\}_{k\in\mathcal K_M}$ for which there exist
\[
A \in \mathbb C^{C\times C},
\qquad
W_k \in \mathbb C^{C\times C}
\quad (k\in\mathcal K_M),
\]
such that, for every $k\in\mathcal K_M$ and every $(z_1,\dots,z_n)\in(\mathbb C^C)^n$,
\begin{equation}
T_k(z_1,\dots,z_n)
=
W_k\big((Az_1)\odot\cdots\odot(Az_n)\big),
\label{eq:tied_class_exact}
\end{equation}
where $\odot$ denotes the channelwise (Hadamard) product.
\end{definition}

This class corresponds to the situation where the same latent linear map is reused in each factor of the degree-$n$ interaction. It is the hidden-space version of the tying phenomenon illustrated in Eq.~\ref{eq:eq_example}.

\begin{definition}[Order-$n$ HO-FNO class]
\label{def:ho_class_exact}
We define $\mathfrak C_n^{\mathrm{HO}}$ as the set of coefficient families $\mathbf T=\{T_k\}_{k\in\mathcal K_M}$ for which there exist
\[
A_1,\dots,A_n \in \mathbb C^{C\times C},
\qquad
W_k \in \mathbb C^{C\times C}
\quad (k\in\mathcal K_M),
\]
such that, for every $k\in\mathcal K_M$ and every $(z_1,\dots,z_n)\in(\mathbb C^C)^n$,
\begin{equation}
T_k(z_1,\dots,z_n)
=
W_k\big((A_1z_1)\odot\cdots\odot(A_n z_n)\big).
\label{eq:ho_class_exact}
\end{equation}
\end{definition}

This class is exactly the class induced by Eq.~\ref{eq:HO_FNO_mixing} of the paper in the matched case $m=n$.

\begin{remark}
By definition,
\[
\mathfrak C_n^{\mathrm{tied}} \subset \mathfrak C_n^{\mathrm{HO}},
\]
since the tied class is recovered from \eqref{eq:ho_class_exact} by imposing
\[
A_1=\cdots=A_n=A.
\]
\end{remark}

\subsection{Pointwise nonlinearities induce the tied class}

We now formalize the idea behind Section~\ref{sec:beyond}.

\begin{proposition}[Degree-$n$ term induced by a tied pointwise model]
\label{prop:pointwise_tied}
Let
\[
\sigma(z)=\sum_{r\ge 0}\alpha_r z^r
\]
be analytic near the origin. Consider the simplified local hidden map
\[
\Phi(v)=\sigma(A v),
\]
where $A$ acts channelwise and pointwise in physical space, and $\sigma$ is applied channelwise in the physical space. Then the coefficient family of the degree-$n$ homogeneous part of $\Phi$ belongs to the class $\mathfrak C_n^{\mathrm{tied}}$.
\end{proposition}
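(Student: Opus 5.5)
The plan is a direct computation: isolate the degree-$n$ term of $\sigma$, move it to Fourier space, and read off a tied coefficient family. Since $\sigma$ is analytic near the origin and applied channelwise, for $v$ in a small enough ball we can expand pointwise in physical space:
\[
\Phi(v)(x)=\sigma(Av(x))=\sum_{r\ge0}\alpha_r\,(Av(x))^{\odot r},
\]
where $(Av(x))^{\odot r}$ denotes the $r$-fold Hadamard power, i.e.\ the channelwise power. The map $v\mapsto \alpha_r (Av)^{\odot r}$ is homogeneous of degree $r$ in $v$. Homogeneous components of a convergent expansion are unique, so the degree-$n$ homogeneous part of $\Phi$ is exactly
\[
\Phi_n(v)=\alpha_n\,(Av)\odot\cdots\odot(Av),
\]
with $n$ factors. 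Uniqueness follows by comparing Taylor coefficients of $t\mapsto\Phi(tv)$.

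Next we pass to Fourier space on $\mathbb T^d$. Because $A$ acts pointwise in space and linearly on channels, it commutes with the Fourier transform: $\widehat{Av}(k)=A\widehat v(k)$. The Fourier coefficient of a pointwise product of $n$ functions is the $n$-fold convolution of their coefficients. This is the same computation that yields Eq.~\eqref{eq:FT_polynomial}, now applied channelwise. Hence
\[
\widehat{\Phi_n(v)}(k)=\sum_{k_1+\cdots+k_n=k}\alpha_n\,(A\widehat v(k_1))\odot\cdots\odot(A\widehat v(k_n)).
\]
We then restrict to the retained modes, meaning inputs $\widehat v(k_j)$ with $k_j\in\mathcal K_M$ and outputs $k\in\mathcal K_M$, as in the setting of \eqref{eq:target_hidden_operator}. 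This expression has the form \eqref{eq:target_hidden_operator} with
\[
T_k(z_1,\dots,z_n)=W_k\big((Az_1)\odot\cdots\odot(Az_n)\big),\qquad W_k:=\alpha_n I_C,
\]
which is precisely \eqref{eq:tied_class_exact}. Therefore the coefficient family lies in $\mathfrak C_n^{\mathrm{tied}}$.

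The main obstacle is making the identification of the coefficient family rigorous rather than merely formal. The family $\{T_k\}$ representing a given operator of the form \eqref{eq:target_hidden_operator} is not unique: any symmetrization or reshuffling of arguments that preserves the sum gives the same operator. It therefore suffices to exhibit one representative in the tied class, and this is what I would state explicitly.

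A second technical point is convergence. The convolution sum over all $k_1+\cdots+k_n=k$ is finite once inputs are supported on $\mathcal K_M$. To justify exchanging the power series with the Fourier transform, I would assume $v$ is a trigonometric polynomial, so that $Av$ is bounded, and take $\|Av\|_\infty$ below the radius of convergence of $\sigma$.

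Finally, one might worry that a bias or the skip term $W$ alters the degree-$n$ part. The statement concerns the simplified map $\Phi(v)=\sigma(Av)$, so these terms do not enter, and I would note this limitation explicitly.
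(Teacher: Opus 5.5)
Your proposal is correct and follows the same route as the paper: you expand $\sigma$, isolate the degree-$n$ term $\alpha_n (Av)^{\odot n}$, and pass to Fourier space, where $\widehat{Av}(k)=A\widehat v(k)$ and the Hadamard power becomes an $n$-fold convolution; you then read off the tied form with $W_k=\alpha_n I_C$, and your extra care about convergence, uniqueness of homogeneous parts, and restriction to $\mathcal K_M$ makes explicit what the paper leaves implicit. The non-uniqueness of the coefficient family that you flag is also harmless here, because $(z_1,\dots,z_n)\mapsto (Az_1)\odot\cdots\odot(Az_n)$ is already symmetric in its arguments and so coincides with the canonical symmetrized representative.
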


\begin{proof}
Expanding $\sigma$ at the origin gives
\[
\sigma(z)=\alpha_0+\alpha_1 z+\cdots+\alpha_n z^n+\cdots.
\]
Hence the degree-$n$ part of $\Phi$ is
\[
\Phi^{[n]}(v)=\alpha_n (A v)^{\odot n}.
\]
Passing to Fourier space, the pointwise product $(A v)^{\odot n}$ becomes an $n$-fold convolution over all tuples $(k_1,\dots,k_n)$ satisfying
\[
k_1+\cdots+k_n=k.
\]
Since the same linear map $A$ is reused in every factor, the resulting coefficient family is exactly of the form \eqref{eq:tied_class_exact}.
\end{proof}

\subsection{Order-$n$ HO-FNO realizes the HO class}

\begin{proposition}[Order-$n$ HO-FNO realizes $\mathfrak C_n^{\mathrm{HO}}$]
\label{prop:hofno_exact}
An HO-FNO layer of order $m=n$, as defined in Eq.~\ref{eq:HO_FNO_mixing}, induces a degree-$n$ hidden interaction operator whose coefficient family belongs to $\mathfrak C_n^{\mathrm{HO}}$.
\end{proposition}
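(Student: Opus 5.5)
The argument is essentially a direct computation in Fourier space, mirroring the proof of Proposition~\ref{prop:pointwise_tied}. I would start from the physical-space definition in Eq.~\ref{eq:HO_FNO}, where each $A_i\in\mathbb C^{C\times C}$ is a pointwise channel map shared across spatial locations. Each factor $(A_i v)(x)$ therefore has Fourier coefficients $\widehat{A_i v}(k)=A_i\widehat v(k)$, because a constant matrix commutes with the Fourier transform. The layer is $n$-linear in $v$ by construction, since it is a product of $n$ factors, each linear in $v$, followed by linear operations (truncation and convolution). Hence its degree-$n$ homogeneous part is the whole operator, and no Taylor expansion is needed, in contrast with the pointwise case.

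The next step is to apply the convolution theorem on $\mathbb T^d$ iteratively. The channelwise product $(A_1v)\odot\cdots\odot(A_nv)$ has $k$-th Fourier coefficient
\[
\sum_{k_1+\cdots+k_n=k}(A_1\widehat v(k_1))\odot\cdots\odot(A_n\widehat v(k_n)).
\]
I would prove this by induction on $n$, starting from the two-factor identity $\widehat{fg}(k)=\sum_{p+q=k}\widehat f(p)\widehat g(q)$ applied channelwise. The truncation $\mathsf T_{\mathcal M}$ then restricts to $k\in\mathcal K_M$. The convolution with $k_\theta$ acts as multiplication by the learned matrix $R^{(k)}$ on each retained mode, exactly as in Eq.~\ref{eq:FNO}. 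This reproduces Eq.~\ref{eq:HO_FNO_mixing}.

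Finally, I would read off the coefficient family by setting $T_k(z_1,\dots,z_n):=R^{(k)}\big((A_1z_1)\odot\cdots\odot(A_nz_n)\big)$. This map is $n$-linear in $(z_1,\dots,z_n)$. Choosing $W_k:=R^{(k)}$ puts it exactly in the form of Eq.~\ref{eq:ho_class_exact}, so the family belongs to $\mathfrak C_n^{\mathrm{HO}}$, and $\widehat{\mathcal H_\theta v}$ has the structure of Eq.~\ref{eq:target_hidden_operator}.

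The main obstacle is bookkeeping rather than depth. The inputs to the sum range over all of $\mathbb Z^d$ (or over $\mathcal K_M$, if the input is pre-truncated as in $\mathcal G^\dagger_M$). For the target operator class, the sum must be taken over $k_i\in\mathcal K_M$. I would handle this by evaluating on $\Pi_M v$, consistent with the truncated setting, and by noting that absolute convergence of the series holds for $L^2$ or smooth inputs, which justifies the interchange of sum and product. One should also check that the skip term $W v$ and the bias only contribute degree $\le 1$, so they do not affect the degree-$n$ part when $n\ge2$. For $n=1$ the statement reduces to standard FNO with $W_k=R^{(k)}A_1$.
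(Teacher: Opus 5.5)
Your proposal is correct and follows the paper's proof, which sets $m=n$ in Eq.~\ref{eq:HO_FNO_mixing} and reads off the coefficient family with $W_k=R^{(k)}$. Your derivation of that formula from Eq.~\ref{eq:HO_FNO} via the iterated convolution theorem fills in a step the paper takes as given. One small caveat: your side claim of absolute convergence for general $L^2$ inputs fails once $n\ge 3$, since a product of three $L^2$ functions need not be integrable. You do not need it, because restricting to $\Pi_M v$ makes every sum finite.
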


\begin{proof}
Setting $m=n$ in Eq.~\ref{eq:HO_FNO_mixing} yields
\[
(\widehat{H_\theta v})(k)
=
W_k
\sum_{k_1+\cdots+k_n=k}
(A_1\widehat v(k_1))\odot\cdots\odot(A_n\widehat v(k_n)),
\]
which is precisely the form \eqref{eq:ho_class_exact}.
\end{proof}

\subsection{Distance to the tied class}

To compare the two classes, we measure the distance between the target degree-$n$ interaction $\mathcal T_n$ and the tied class.

Fix a basis of $\mathbb C^C$, and represent each $n$-linear map $T_k$ in Eq.~\eqref{eq:target_hidden_operator} by its coefficient array. We denote by
\[
\|\mathbf T\|_F^2 := \sum_{k\in\mathcal K_M}\|T_k\|_F^2
\]
the corresponding Frobenius norm of the coefficient family $\mathbf T := \{T_k\}_{k\in\mathcal K_M}$.

\begin{definition}[Distance to the tied class]
\label{def:distance_tied_exact}
Let $\mathbf T=\{T_k\}_{k\in\mathcal K_M}$ denote the coefficient family of the target operator $\mathcal T_n$. We define
\[
\delta_n(\mathbf T)
:=
\inf_{\widetilde{\mathbf T}\in \mathfrak C_n^{\mathrm{tied}}}
\|\mathbf T-\widetilde{\mathbf T}\|_F.
\]
\end{definition}

If $\delta_n(\mathbf T)>0$, then the target degree-$n$ interaction cannot be represented by any tied-factor coefficient family of the form induced by Proposition~\ref{prop:pointwise_tied}.

\subsection{Main theorem}

We now state the local hidden-space approximation result.

\begin{theorem}[Local degree-$n$ separation at the hidden level]
\label{thm:aligned_main}
Let $\mathcal T_n$ be the target degree-$n$ hidden interaction operator defined in Eq.~\ref{eq:target_hidden_operator}, with coefficient family $\mathbf T=\{T_k\}_{k\in\mathcal K_M}$.

Assume:
\begin{enumerate}
    \item $\mathbf T \in \mathfrak C_n^{\mathrm{HO}}$;
    \item $\delta_n(\mathbf T)>0$.
\end{enumerate}

Then:

\begin{enumerate}
    \item there exists an order-$n$ HO-FNO layer $H_\theta$ such that
    \[
    H_\theta^{[n]} = \mathcal T_n;
    \]

    \item there exists a constant $c_n>0$ such that for every coefficient family $\widetilde{\mathbf T}\in \mathfrak C_n^{\mathrm{tied}}$, with associated degree-$n$ hidden operator $\widetilde{\mathcal T}_n$, one has
    \[
    \sup_{\|v\|\le \rho}
    \|\mathcal T_n(v)-\widetilde{\mathcal T}_n(v)\|
    \ge
    c_n\,\delta_n(\mathbf T)\,\rho^n,
    \qquad \forall \rho>0.
    \]
\end{enumerate}
\end{theorem}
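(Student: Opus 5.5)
The plan is to split the statement into its two claims. Part~1 is essentially Proposition~\ref{prop:hofno_exact} read in reverse. Part~2 is a homogeneity argument plus a finite-dimensional norm-equivalence argument, and the delicate point is that evaluating on the diagonal $v\mapsto\mathcal T_n(v)$ only sees the symmetric part of each $T_k$.

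For part~1, since $\mathbf T\in\mathfrak C_n^{\mathrm{HO}}$, Definition~\ref{def:ho_class_exact} provides matrices $A_1,\dots,A_n$ and $W_k$ with $T_k(z_1,\dots,z_n)=W_k\big((A_1z_1)\odot\cdots\odot(A_nz_n)\big)$. I will choose the HO-FNO layer of order $m=n$ in Eq.~\ref{eq:HO_FNO_mixing} with these $A_i$ and with $R^{(k)}:=W_k$ on retained modes. By the computation in the proof of Proposition~\ref{prop:hofno_exact}, the degree-$n$ part of this layer is exactly $\sum_{k_1+\cdots+k_n=k}T_k(\widehat v(k_1),\dots,\widehat v(k_n))$, that is, $H_\theta^{[n]}=\mathcal T_n$.

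For part~2, fix $\widetilde{\mathbf T}\in\mathfrak C_n^{\mathrm{tied}}$ and set $\mathbf D=\mathbf T-\widetilde{\mathbf T}$, with associated operator $\mathcal D_n=\mathcal T_n-\widetilde{\mathcal T}_n$. This operator is linear in the coefficient family and homogeneous of degree $n$ in $v$. Hence $\sup_{\|v\|\le\rho}\|\mathcal D_n(v)\|=\rho^n N(\mathbf D)$, where
\[
N(\mathbf D):=\sup_{\|v\|\le1}\|\mathcal D_n(v)\|,
\]
and it suffices to show $N(\mathbf D)\ge c_n\,\delta_n(\mathbf T)$ uniformly in $\widetilde{\mathbf T}$. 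Everything is finite-dimensional because there are finitely many modes in $\mathcal K_M$ and finitely many channels. So $N$ is a seminorm on the space of coefficient families.

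The next step is to identify the kernel of $N$. Because $T_k$ does not depend on the individual $k_r$, and the convolution sum is invariant under permuting the pairs $(k_r,\widehat v(k_r))$, only the symmetrization $\operatorname{Sym}T_k$ contributes. Conversely, if $\operatorname{Sym}\mathbf D\neq0$ then $N(\mathbf D)>0$. To see this, test on $v$ supported on a few modes and use polarization: a symmetric $n$-linear form vanishing on the diagonal is zero. Since retained output modes are distinct, the different $k$ can be separated. Thus $N$ restricted to the symmetric subspace is a norm, and by compactness of its unit sphere there is $c'>0$ with $N(\mathbf D)\ge c'\|\operatorname{Sym}\mathbf D\|_F$.

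Tied families are automatically symmetric, because the Hadamard product is commutative. Hence $\operatorname{Sym}\mathbf D=\operatorname{Sym}\mathbf T-\widetilde{\mathbf T}$, and this gives
\[
N(\mathbf D)\ge c'\,\delta_n^{\mathrm{sym}},\qquad \delta_n^{\mathrm{sym}}:=\inf_{\widetilde{\mathbf T}\in\mathfrak C_n^{\mathrm{tied}}}\|\operatorname{Sym}\mathbf T-\widetilde{\mathbf T}\|_F.
\]

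The main obstacle is converting $\delta_n^{\mathrm{sym}}$ into $\delta_n(\mathbf T)$. The Frobenius norm splits orthogonally into symmetric and non-symmetric parts, so $\delta_n(\mathbf T)^2=(\delta_n^{\mathrm{sym}})^2+\|\mathbf T-\operatorname{Sym}\mathbf T\|_F^2$. The diagonal operator cannot detect the second term. My plan is therefore to take $c_n:=c'\,\delta_n^{\mathrm{sym}}/\delta_n(\mathbf T)$, which is allowed because the theorem fixes $\mathbf T$ before asserting the existence of $c_n$. This constant is positive exactly when $\delta_n^{\mathrm{sym}}>0$. So the step I would scrutinize is showing that $\delta_n(\mathbf T)>0$ forces $\delta_n^{\mathrm{sym}}>0$ for $\mathbf T\in\mathfrak C_n^{\mathrm{HO}}$. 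I would first try to do this by restricting to HO families that are not symmetric-equivalent to a tied one. If that fails, I would state the theorem with $\delta_n$ measured on $\operatorname{Sym}\mathbf T$. I expect that this caveat, rather than the homogeneity or compactness steps, is where the real work or a needed hypothesis lies.
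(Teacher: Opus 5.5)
Your part 1 is the paper's argument. For part 2 the paper uses the same homogeneity-plus-norm-equivalence route. However, it simply asserts that $\|\mathbf S\|_*=\sup_{\|v\|\le 1}\|\mathcal S_n(v)\|$ is a norm on all coefficient families, reasoning that a vanishing homogeneous polynomial map has all coefficients zero. Your observation that the convolution sum only sees $\operatorname{Sym}T_k$ shows this is false once $n\ge 2$ and $C\ge 2$. The map $\|\cdot\|_*$ is only a seminorm whose kernel is the non-symmetric parts, and your splitting $\delta_n(\mathbf T)^2=(\delta_n^{\mathrm{sym}})^2+\|\mathbf T-\operatorname{Sym}\mathbf T\|_F^2$ is correct.

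The step you flagged cannot be completed, because it is false, and so is part 2 of the theorem as stated. Take $n=C=2$, $A_1=I$, $A_2=\begin{pmatrix}0&1\\1&0\end{pmatrix}$, and $W_k=\begin{pmatrix}1&0\\0&0\end{pmatrix}$ for every $k$. Writing $z,w\in\mathbb C^2$ for the two arguments, this gives $T_k(z,w)=z_1w_2\,e_1$, which lies in $\mathfrak C_2^{\mathrm{HO}}$. Since tied families are symmetric, $\delta_2(\mathbf T)\ge\|\mathbf T-\operatorname{Sym}\mathbf T\|_F=(|\mathcal K_M|/2)^{1/2}>0$. Now take the tied family with $A=\begin{pmatrix}1&1\\1&-1\end{pmatrix}$ and $\widetilde W_k=\tfrac14\begin{pmatrix}1&-1\\0&0\end{pmatrix}$. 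It gives $\widetilde W_k\big((Az)\odot(Aw)\big)=\tfrac14\big[(z_1+z_2)(w_1+w_2)-(z_1-z_2)(w_1-w_2)\big]e_1=\tfrac12(z_1w_2+z_2w_1)\,e_1=\operatorname{Sym}T_k(z,w)$. Hence $\widetilde{\mathcal T}_2\equiv\mathcal T_2$, the supremum in part 2 is $0$ for every $\rho$, and no $c_2>0$ works. So $\mathbf T\in\mathfrak C_n^{\mathrm{HO}}$ with $\delta_n(\mathbf T)>0$ does not force $\delta_n^{\mathrm{sym}}>0$. The paper's proof breaks exactly at its claim that $\|\cdot\|_*$ is a norm; $\mathbf T-\operatorname{Sym}\mathbf T$ is a nonzero element of its kernel.

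The statement becomes true only with the repair you proposed: assume $\delta_n^{\mathrm{sym}}=\delta_n(\operatorname{Sym}\mathbf T)>0$ and put $\delta_n^{\mathrm{sym}}$ on the right-hand side. Under that hypothesis your argument is a complete proof, with a constant $c'$ independent of $\mathbf T$. You should still write the polarization step explicitly, treating the $\widehat v(k)$ as free complex coordinates as the appendix does. For $k\neq 0$, take $\widehat v$ supported on $\{0,k\}$ with $\widehat v(0)=z$ and $\widehat v(k)=w$. The only admissible tuples contain $k$ exactly once, so the output at mode $k$ is $n\,\operatorname{Sym}D_k(w,z,\dots,z)$. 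For $k=0$, use support $\{0\}$. In both cases vanishing for all $z,w$ gives $\operatorname{Sym}D_k=0$.
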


\begin{proof}
Since $\mathbf T\in \mathfrak C_n^{\mathrm{HO}}$, there exist matrices $A_1,\dots,A_n$ and $W_k$ such that the coefficient family of $\mathcal T_n$ has the form \eqref{eq:ho_class_exact}. By Proposition~\ref{prop:hofno_exact}, an HO-FNO layer of order $n$ realizes exactly this degree-$n$ interaction. This proves the first statement.

For the second statement, define on the finite-dimensional coefficient space the norm
\[
\|\mathbf S\|_{*}
:=
\sup_{\|v\|\le 1}\|\mathcal S_n(v)\|,
\]
where $\mathcal S_n$ denotes the degree-$n$ homogeneous hidden operator associated with the coefficient family $\mathbf S$. This is a norm because $\|\mathbf S\|_*=0$ implies that the homogeneous polynomial map $\mathcal S_n$ vanishes identically, hence all coefficients of $\mathbf S$ are zero. Since the coefficient space is finite-dimensional, there exists $c_n>0$ such that
\[
\|\mathbf S\|_{*}\ge c_n\|\mathbf S\|_F
\qquad\text{for all coefficient families }\mathbf S.
\]

Now let $\widetilde{\mathbf T}\in \mathfrak C_n^{\mathrm{tied}}$ and let $\widetilde{\mathcal T}_n$ be its associated operator. By homogeneity of degree $n$,
\[
\sup_{\|v\|\le \rho}\|\mathcal T_n(v)-\widetilde{\mathcal T}_n(v)\|
=
\rho^n
\sup_{\|v\|\le 1}\|\mathcal T_n(v)-\widetilde{\mathcal T}_n(v)\|.
\]
Therefore
\[
\sup_{\|v\|\le \rho}\|\mathcal T_n(v)-\widetilde{\mathcal T}_n(v)\|
=
\rho^n \|\mathbf T-\widetilde{\mathbf T}\|_{*}
\ge
c_n \rho^n \|\mathbf T-\widetilde{\mathbf T}\|_F.
\]
Taking the infimum over all $\widetilde{\mathbf T}\in \mathfrak C_n^{\mathrm{tied}}$ yields
\[
\sup_{\|v\|\le \rho}\|\mathcal T_n(v)-\widetilde{\mathcal T}_n(v)\|
\ge
c_n\,\delta_n(\mathbf T)\,\rho^n.
\]
\end{proof}

\subsection{Conditional consequence for a local model}

Theorem~\ref{thm:aligned_main} is stated at the hidden-space level. To translate it into an end-to-end statement, one needs an additional modeling assumption.

\begin{corollary}[Conditional end-to-end consequence]
\label{cor:conditional_end_to_end}
Assume that, after lifting, the local degree-$n$ interaction of the model is given by $\mathcal T_n$, and that all lower-order terms in the local expansion are absent or already matched. Then:
\begin{itemize}
    \item if $\mathbf T\in \mathfrak C_n^{\mathrm{HO}}$, replacing the degree-$n$ interaction by an order-$n$ HO-FNO layer eliminates the leading degree-$n$ mismatch;
    \item if the degree-$n$ interaction is constrained to lie in $\mathfrak C_n^{\mathrm{tied}}$, the leading mismatch is bounded below by $c_n\delta_n(\mathbf T)\rho^n$ on balls of radius $\rho$.
\end{itemize}
\end{corollary}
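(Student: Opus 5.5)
The corollary follows from Theorem~\ref{thm:aligned_main} once the modeling assumption is used to reduce the end-to-end local error to the degree-$n$ hidden interaction. I will first write the local expansion of the model output, after lifting, in the same graded form as \eqref{eq:local_expansion_Gdagger}: a sum of homogeneous parts of degrees $1,\dots,n$ plus a remainder of order $O(\rho^{n+1})$ on the ball of radius $\rho$. By hypothesis the lower-order parts are absent or coincide with the corresponding parts of the target. The difference between target and model on $\{\|v\|\le\rho\}$ is therefore the degree-$n$ difference $\mathcal T_n-\widetilde{\mathcal T}_n$ plus a term of order $O(\rho^{n+1})$. I will call this degree-$n$ difference the leading mismatch.

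For the first bullet I invoke part 1 of Theorem~\ref{thm:aligned_main}, which is Proposition~\ref{prop:hofno_exact} applied to the matrices $A_1,\dots,A_n,W_k$ furnished by $\mathbf T\in\mathfrak C_n^{\mathrm{HO}}$. It gives an order-$n$ HO-FNO layer with $H_\theta^{[n]}=\mathcal T_n$, so substituting this layer for the degree-$n$ interaction makes the leading mismatch identically zero. Only the remainder of order $O(\rho^{n+1})$ survives.

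For the second bullet I take any $\widetilde{\mathbf T}\in\mathfrak C_n^{\mathrm{tied}}$. Proposition~\ref{prop:pointwise_tied} shows this is the situation for a pointwise-nonlinearity model. Part 2 of Theorem~\ref{thm:aligned_main} then gives $\sup_{\|v\|\le\rho}\|\mathcal T_n(v)-\widetilde{\mathcal T}_n(v)\|\ge c_n\delta_n(\mathbf T)\rho^n$ for every $\rho>0$, uniformly over the tied class, which is exactly the claimed lower bound on the leading mismatch.

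The only delicate point is interpreting ``leading mismatch'' so that the statement is meaningful. If it means the degree-$n$ homogeneous component of the error, the bound is immediate from the theorem. If one wants a bound on the full local error, the triangle inequality gives at least $c_n\delta_n(\mathbf T)\rho^n-K\rho^{n+1}$. This is still bounded below by $\tfrac12 c_n\delta_n(\mathbf T)\rho^n$ once $\rho\le c_n\delta_n(\mathbf T)/(2K)$, where $K$ is the constant in the remainder bound. That constant must be uniform over the admissible tied models, which I would secure by restricting to a bounded parameter set. I expect this uniformity to be the main obstacle, and I would state it explicitly as part of the modeling assumption rather than try to prove it.
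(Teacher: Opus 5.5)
Your proposal is correct and follows essentially the paper's route. The paper gives no separate proof and reads the corollary directly off Theorem~\ref{thm:aligned_main} once the lower-order terms are matched; your treatment of the $O(\rho^{n+1})$ remainder, with a remainder constant uniform over the tied models, only makes explicit what the paper leaves informal.

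One caveat comes from part~2 of that theorem, not from your argument. Since the sum over $k_1+\cdots+k_n=k$ is invariant under permuting the $k_r$, $\mathcal T_n$ depends only on the symmetrization of each $T_k$. The cited bound therefore holds with $\delta_n$ computed from symmetrized coefficient families, but can fail for raw coefficient arrays. For example, with $n=C=2$, take $A_1=I$, $A_2$ the swap matrix and $W_k$ with rows $(1,-1)$ and $(0,0)$: this gives an antisymmetric $\mathbf T\in\mathfrak C_2^{\mathrm{HO}}$ with $\mathcal T_2\equiv 0$ yet $\delta_2(\mathbf T)>0$.
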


\subsection{Interpretation}
Section~\ref{sec:setting} establishes that the degree-$n$ part of the PDE gives rise, in Fourier space, to an $n$-linear convolution of modes. Section~\ref{sec:beyond} shows that pointwise nonlinearities can only recover higher-order interactions through tied weight sharing across the different factors. In contrast, Eq.~\ref{eq:HO_FNO_mixing} shows that an HO-FNO layer with order $m=n$ explicitly parameterizes factorwise $n$-linear mode mixing. Therefore, in the matched case $m=n$, HO-FNO is algebraically aligned with the degree-$n$ Fourier interaction pattern of the PDE, while tied pointwise models can only represent a restricted subclass of these interactions.

\begin{remark}[Quadratic case]
For $n=2$, the theorem recovers the triadic setting emphasized in Section~\ref{sec:navier}: the relevant interactions occur on tuples $(p,q,k)$ satisfying $p+q=k$. In that case, the result should be read as a precise structural-alignment statement. For $n\ge 3$, the same statement applies to genuinely higher-order mode interactions.
\end{remark}

\section{Extended Derivation of Fourier Mixing in Navier--Stokes}
\label{app:extended_navier}

We present here a detailed discussion of the non-linear interactions on the incompressible Navier--Stokes equation. 

The Incompressible Navier--Stokes equation is tipically presented in the following form:

\begin{align}
    \partial_t w(x,t) +u(x,t) \cdot \nabla w(x,t) &= \nu\Delta w(x,t) + f(x) \quad && x \in (0,1)^2, \ t \in (0, T]
    \\ \nabla \cdot u(x,t) &= 0 \quad &&x \in (0,1)^2, \ t \in [0,T]
    \\ w(x, 0) &= w_0(x)  \quad &&x \in (0,1)^2
\end{align}
Where $\nabla w(x,t) = \big(\partial_{x_1} w(x,t), \ \partial_{x_2} w(x,t)\big)$ is the gradient of $w$, $\Delta w(x,t) = \partial_{x_1x_1} w(x,t) + \partial_{x_2x_2} w(x,t)$ is the Laplacian of $w$ and $\nabla \cdot u = \frac{\partial u_1(x, t)}{\partial x_1} + \frac{\partial u_2(x, t)}{\partial x_2}$ is the divergence of $u$. 
$u(x, t)$ is the velocity at the point $x$ at time $t$ and $w$ is the vorticity field $w(x,t) = \partial_{x_1} u_2(x,t) - \partial_{x_2}u_1(x, t)$.

\paragraph{From the velocity to the vorticity formulation}
Firstly we will express the PDE in terms of the sole vorticity $w$. To do so we need to express $u$ in function of $w$. By the incompressibility condition $\nabla \cdot u = 0$ implies that exists a function, called streamfunction, $\psi = \psi(x, t)$ such that $u = \nabla^{\perp} \psi = \left(-\frac{\partial \psi}{\partial x_2}, \frac{\partial \psi}{\partial x_1}\right)$, therefore, by substitution we obtain $w$ in function of the stream function
\begin{equation}
    w  = \partial_{x_1}u_2 - \partial_{x_2}u_1 = \partial_{x_1}\left( \frac{\partial \psi}{\partial x_1} \right) + \partial_{x_2} \left( \frac{\partial \psi}{\partial x_2} \right) = \Delta \psi
\end{equation}
Therefore $\psi$ is obtained from $w$ by solving the Poisson problem $\Delta \psi = w$ in $(0,1)^2$ with appropriate boundary conditions. 
Once $\psi$ is founded, the velocity $u$ is recovered by $u = \nabla^\perp \psi$ and since $w = \Delta \psi$ we can write $u$ in function of $w$ as $u = \nabla^\perp \Delta^{-1} w$ and same for Navier--Stokes equation:

\begin{equation}\label{eq: vorticity eq}
\begin{aligned}
    \partial_t(w) &= \nu \Delta w(x,t) - (\nabla^\perp \Delta^{-1} w) \cdot \nabla w(x,t) + f(x) \quad && x \in (0,1)^2, \ t \in (0, T]
    \\ \nabla \cdot \nabla^\top \Delta^{-1} w &= 0 \quad &&x \in (0,1)^2, \ t \in [0,T]
    \\ w(x, 0) &= w_0(x)  \quad &&x \in (0,1)^2
\end{aligned}
\end{equation}

\paragraph{Fourier Transform of the Navier--Stokes equation}
Now we take the Fourier transform of the vorticity version of the Navier Stokes equation, by taking in consideration that $\widehat{\nabla w}(k, t) = 2 \pi i k \cdot \widehat{w}(k, t)$, $\widehat{\Delta w}(k, t) = -(2\pi)^2 \lvert k \rvert^2 \widehat{w}(k,t)$ and $\widehat{w \odot w} = \sum_{q + p = k} \widehat{w}(q, t) \widehat{w}(p, t)$. Therefore equation ~\ref{eq: vorticity eq} becomes

\begin{equation}
    \partial_t(\widehat{w})(k, t) = - \nu (2\pi)^2 \lvert k \rvert^2 \widehat{w}(k, t) - \sum_{p + q = k} \frac{(p + q) \cdot p^\perp}{\lvert p \rvert^2} \widehat{w}(p, t) \widehat{w}(q, t) + \widehat{f}(k, t) 
\end{equation}
For $k \in \mathbb{Z}^2, \ t \in (0, T]$.

\newpage

\section{Datasets} \label{app:datasets}
We report in this section a detailed description of the benchmarks adopted in this works. 

Table~\ref{tab:datasets_summary} summarizes the datasets considered in our experiments.
Each dataset is characterized by the number of trajectories and timesteps, the mesh type, the physical
problem dimension, and the resolution.

\renewcommand{\arraystretch}{1.2}
\begin{table}[!h]
\centering
\caption{Benchmark PDE datasets used in our experiments.}
\resizebox{\textwidth}{!}{%
\begin{tabular}{lllll}
\toprule
\textbf{Dataset Name} & \textbf{\# Trajectories} & \textbf{\# Timesteps} & \textbf{Mesh Type} & \textbf{Resolution}  \\
\midrule
\shortstack[l]{Navier--Stokes ($2$D) \\ \scriptsize $\nu = 10^{-3}$} & $5000$ & $50$ & Regular (2D periodic box) & $(64 \times 64, 1)$ \\
\shortstack[l]{Navier--Stokes ($2$D) \\ \scriptsize $\nu = 10^{-4}$} & $10000$ & $30$ & Regular (2D periodic box) & $(64 \times 64, 1)$ \\
\shortstack[l]{Navier--Stokes ($2$D) \\ \scriptsize $\nu = 10^{-5}$} & $1200$ & $20$ & Regular (2D periodic box) & $(64 \times 64, 1)$ \\
Darcy flow & $1200$ & $1$ & Regular & $(85 \times 85, 1)$ \\
Plasticity & $980$ & $21$ & Structured & $(101 \times 31, 2)$ \\
Airfoil & $1100$ & $1$ & Structured & $(221 \times 51, 2)$ \\
Pipe & $1200$ & $1$ & Structured & $(129 \times 129, 2)$ \\
Elasticity & $1200$ & $1$ & $2$D Point Cloud & $(972, 2)$ \\
PlanetSWE ($2$D) & $50$ & $100$ & Sphere (latitude-longitude grid) & $256 \times 128$ \\
\shortstack[l]{Polynomial-Source Poisson \\ \scriptsize $p \in \{1,2,3,5\}$} & $1200$ & $1$ & Regular (2D periodic box) & $64 \times 64$ \\
\bottomrule
\end{tabular}%
}
\label{tab:datasets_summary}
\end{table}

\subsection{$2$D Navier Stokes equations~\citep{FNO}} \label{app:navier}
The $2$D Navier--Stokes equation for a viscous, incompressible fluid in vorticity form on the
unit torus:

\begin{align}
    \partial_t w(x,t) +u(x,t) \cdot \nabla w(x,t) &= \nu\Delta w(x,t) + f(x) \quad && x \in (0,1)^2, \ t \in (0, T]
    \\ \nabla \cdot u(x,t) &= 0 \quad &&x \in (0,1)^2, \ t \in [0,T]
    \\ w(x, 0) &= w_0(x) \quad x \in (0,1)^2 \quad &&x \in (0,1)^2
\end{align}
The initial condition $w_0(x)$ is generated according to $w_0 \sim \mu$ where 
\begin{equation}
    \mu = \mathcal{N}(0, 7^{3/2}(-\Delta + 49I)^{-2.5})
\end{equation}
with periodic boundary conditions. The forcing is kept fixed:
\begin{equation}
    f(x) = 0.1(\sin(2\pi(x_1+x_2)) + \cos(2\pi(x_1+x_2)))
\end{equation}
The equation is solved using the stream-function formulation with a pseudospectral method. First a Poisson equation is solved in Fourier space to find the velocity field.
Then the vorticity is differentiated and the non-linear term is computed is physical space after which
it is dealiased. Time is advanced with a Crank–Nicolson update where the non-linear term does not
enter the implicit part. 

All data are generated on a $256 \times 256$ grid and are downsampled to $64 \times 64$. The Crank–Nicolson scheme used in the data-generation process has a time step of $10^{-4}$, with the solution recorded every $t=1$ time unit.

We use three datasets on Navier--Stokes equations, with viscosity $\nu  = 10^{-3}$, $\nu  = 10^{-4}$ and $\nu = 10^{-5}$, provided in ~\citep{FNO}.

\subsection{Darcy Flow~\citep{FNO}}
This benchmark represents the flow through porous media. 2D Darcy flow over a unit square
is given as follows:
\begin{equation}\label{eq:darcy}
    \begin{aligned}
        \nabla \cdot (a(x)\nabla u(x)) &= f(x), \quad x \in (0,1)^2 \\
        u(x) &= 0, \quad x \in \partial (0,1)^2
    \end{aligned}
\end{equation}
where $a(x)$ is the viscosity, $f(x)$ is the forcing term, and $u(x)$ is the solution. This dataset employs a constant value of forcing term $F(x)=\beta$. 
Further, Equation ~\ref{eq:darcy} is modified in the form of a temporal evolution as follows:
\begin{equation}
    \partial_t u(x,t) - \nabla \cdot (a(x) \nabla u(x,t)) = f(x), \quad x \in (0,1)^2, \label{eq:temporal}
\end{equation}
In this dataset, the input is represented by the parameter $a$, and the corresponding output is the solution $u$. The process is discretized into $421 \times 421$ regular grid and then downsampled to a resolution of $85 \times 85$ for main experiments. For training, 1000 samples are used, 200 samples are generated for testing, and different cases contain different medium structures.

\subsection{Plasticity~\cite{GEO-FNO} }
This benchmark consider the plastic forging benchmark introduced in the Geo-FNO dataset suite. The task models the deformation of a two-dimensional block of material under the action of a rigid die. The computational domain is a rectangle
\begin{align*}
    \Omega = [0,L] \times [0,H],
\end{align*}
where $L=50\,\mathrm{mm}$ and $H=15\,\mathrm{mm}$. At time $t=0$, the upper boundary of the material is impacted by a frictionless rigid die moving downward at a constant velocity $v=3\,\mathrm{m\,s^{-1}}$. The bottom edge of the block is clamped, while the displacement boundary condition induced by the die is imposed on the top edge.

The geometry of the die is parameterized by a function
\begin{align*}
    S_d \in H^1([0,L];\mathbb{R}),
\end{align*}
which describes the shape of the impacting surface. In the dataset, each die shape is sampled by drawing its values at a finite set of control points and interpolating them with cubic Hermite splines. More precisely, for each sample, the values
\begin{align*}
    S_d(x_k),
    \qquad
    x_k = \frac{kL}{7},
    \qquad
    k=0,\ldots,7,
\end{align*}
are drawn uniformly at random. The resulting spline defines the die geometry used in the finite-element simulation.

The material follows an elasto-plastic constitutive law. Let $\boldsymbol{\sigma}$ denote the stress tensor, $\boldsymbol{\epsilon}$ the strain tensor, and $\boldsymbol{\epsilon}_p$ the plastic strain tensor. The constitutive relation is given by
\begin{align*}
    \boldsymbol{\sigma}
    &=
    \mathbb{C} : (\boldsymbol{\epsilon} - \boldsymbol{\epsilon}_p),
    \\
    \dot{\boldsymbol{\epsilon}}_p
    &=
    \lambda \nabla_{\boldsymbol{\sigma}} f(\boldsymbol{\sigma}),
    \\
    f(\boldsymbol{\sigma})
    &=
    \sqrt{
        \frac{3}{2}
        \left\|
            \boldsymbol{\sigma}
            -
            \frac{1}{3}
            \operatorname{tr}(\boldsymbol{\sigma})\boldsymbol{I}
        \right\|_F^2
    }
    -
    \sigma_Y .
\end{align*}
Here, $\mathbb{C}$ is the isotropic stiffness tensor, $\lambda$ is the plastic multiplier, $\sigma_Y$ is the yield stress, $\operatorname{tr}(\boldsymbol{\sigma})$ denotes the trace of the stress tensor, $\boldsymbol{I}$ is the identity tensor, and $\|\cdot\|_F$ is the Frobenius norm. The plastic multiplier satisfies the classical complementarity conditions
\begin{align*}
    \lambda \geq 0,
    \qquad
    f(\boldsymbol{\sigma}) \leq 0,
    \qquad
    \lambda f(\boldsymbol{\sigma}) = 0.
\end{align*}
The physical parameters are fixed to Young's modulus $E=200\,\mathrm{GPa}$, yield strength $\sigma_Y=70\,\mathrm{MPa}$, and mass density $\rho^s = 7850\,\mathrm{kg\,m^{-3}}$.

This benchmark is particularly challenging because the underlying constitutive law is not polynomial in the state variables. Indeed, the yield function contains the Frobenius norm of the deviatoric stress tensor,
\begin{align*}
    \boldsymbol{\sigma}
    -
    \frac{1}{3}
    \operatorname{tr}(\boldsymbol{\sigma})\boldsymbol{I},
\end{align*}
and the plastic flow rule depends on the gradient
\begin{align*}
    \nabla_{\boldsymbol{\sigma}} f(\boldsymbol{\sigma}).
\end{align*}
Therefore, the dynamics involve both a square-root nonlinearity and a normalization-type operation induced by the derivative of the norm. This makes the solution operator fundamentally different from the polynomial nonlinearities considered in our controlled Poisson datasets. In particular, the operator cannot be represented exactly by a finite-degree polynomial expansion of the input fields.

Moreover, the elasto-plastic response introduces nonsmooth behavior through the transition between elastic and plastic regimes. The active set
\begin{align*}
    \left\{
        \boldsymbol{x} \in \Omega
        :
        f(\boldsymbol{\sigma}(\boldsymbol{x})) = 0
    \right\}
\end{align*}
separates regions where the material remains elastic from regions where plastic deformation occurs. Across this interface, the evolution law changes because the plastic multiplier $\lambda$ becomes active only when the yield constraint is saturated. As a consequence, the solution may contain sharp fronts and discontinuities, making this dataset a demanding test case for neural operators.

The data are generated with the commercial finite-element solver ABAQUS using $3000$ four-node quadrilateral bilinear elements. The dataset contains $900$ training trajectories and $80$ test trajectories. Each sample consists of $20$ time steps on a structured spatial grid of resolution $101 \times 31$. The input to the learning problem is the die geometry $S_d$, while the target is the time-dependent deformation of the material, including the evolution of the mesh grid and the displacement field.

This benchmark therefore evaluates the ability of a neural operator to learn a time-dependent, geometry-conditioned, nonlinear solution operator arising from elasto-plastic mechanics. Compared with polynomial-source elliptic problems, the plastic forging dataset is substantially more complex: the governing law contains non-polynomial norm-based nonlinearities, gradient-dependent plastic flow, inequality constraints, and nonsmooth transitions between elastic and plastic regimes.

\subsection{Airfoil~\cite{GEO-FNO}}
This benchmark pertains to transonic flow over an airfoil. Due to the negligible viscosity of air, the viscous term $\nu \nabla^2U$ is omitted from the Navier--Stokes equation. Consequently, the governing equations for this scenario are expressed as follows:
\begin{equation}
    \frac{\partial \rho f}{\partial t} + \nabla \cdot (\rho f U) = 0
\end{equation}
\begin{equation}
    \frac{\partial (\rho f U)}{\partial t} + \nabla \cdot (\rho f UU + pI) = 0
\end{equation}
\begin{equation}
    \frac{\partial E}{\partial t} + \nabla \cdot ((E + p)U) = 0, \label{eq:Navier--Stokes}
\end{equation}

where $\rho f$ represents fluid density, and $E$ denotes total energy. The input shape is discretized into a structured mesh with dimensions $221 \times 51$, and the output represents the Mach number at each mesh point. All shapes are derived from the NACA-0012 case provided by the National Advisory Committee for Aeronautics. In the training, 1000 samples from various airfoil designs are used for training, while the remaining 200 samples are reserved for testing.

\subsection{Pipe~\cite{GEO-FNO}}
This benchmark aims to estimate the horizontal fluid velocity based on the structure of the pipe.  The governing equations are as follows:
\begin{equation}
    \nabla \cdot \mathbf{U} = 0,
\end{equation}
\begin{equation}
    \frac{\partial \mathbf{U}}{\partial t} + \mathbf{U} \cdot \nabla \mathbf{U} = \mathbf{f}^{-1} \frac{1}{\rho} \nabla p + \nu \nabla^2 \mathbf{U}. 
    \label{eq:governing}
\end{equation}

Each sample represents the pipe as a structured mesh with dimensions $129 \times 129$. The input tensor, shaped as $129 \times 129 \times 2$, encodes the position of each discretized mesh point. The output tensor, with a shape of $129 \times 129 \times 1$, provides the velocity value at each point. For training, $1000$ samples with varying pipe shapes are generated, while $200$ additional samples, created by altering the pipe’s centerline, are reserved for testing.

\subsection{Elasticity~\cite{GEO-FNO}}
The governing equation of a solid body is given by
\begin{align*}
    \rho^s \frac{\partial^2 \boldsymbol{u}}{\partial t^2}
    +
    \nabla \cdot \boldsymbol{\sigma}
    =
    0,
\end{align*}
where $\rho^s$ is the mass density, $\boldsymbol{u}$ is the displacement vector, and $\boldsymbol{\sigma}$ is the stress tensor. Constitutive models, which relate the strain tensor $\boldsymbol{\epsilon}$ to the stress tensor $\boldsymbol{\sigma}$, are required to close the system.

We consider the unit-cell problem
\begin{align*}
    \Omega = [0,1] \times [0,1],
\end{align*}
with an arbitrarily shaped void at the center of the domain. The prior on the void radius is
\begin{align*}
    r
    =
    0.2
    +
    \frac{0.2}{1+\exp(\tilde{r})},
    \qquad
    \tilde{r}
    \sim
    \mathcal{N}
    \left(
        0,
        4^2(-\Delta + 3^2)^{-1}
    \right),
\end{align*}
which enforces the constraint
\begin{align*}
    0.2 \leq r \leq 0.4.
\end{align*}
The unit cell is clamped on the bottom edge, and a tension traction
\begin{align*}
    \boldsymbol{t} = [0,100]
\end{align*}
is applied on the top edge.

The material is modeled as an incompressible Rivlin--Saunders material. Its constitutive law is given by
\begin{align*}
    \boldsymbol{\sigma}
    &=
    \frac{\partial w(\boldsymbol{\epsilon})}{\partial \boldsymbol{\epsilon}},
    \\
    w(\boldsymbol{\epsilon})
    &=
    C_1(I_1 - 3)
    +
    C_2(I_2 - 3),
\end{align*}
where $I_1$ and $I_2$ are scalar invariants of the right Cauchy--Green stretch tensor $\boldsymbol{C}$, defined as
\begin{align*}
    I_1
    &=
    \operatorname{tr}(\boldsymbol{C}),
    \\
    I_2
    &=
    \frac{1}{2}
    \left(
        \operatorname{tr}(\boldsymbol{C})^2
        -
        \operatorname{tr}(\boldsymbol{C}^2)
    \right),
    \\
    \boldsymbol{C}
    &=
    2\boldsymbol{\epsilon}
    +
    \boldsymbol{I}.
\end{align*}
The energy-density parameters are
\begin{align*}
    C_1 = 1.863 \times 10^5,
    \qquad
    C_2 = 9.79 \times 10^3.
\end{align*}

We use $1000$ training samples and $200$ test samples generated with a finite-element solver using approximately $100$ quadratic quadrilateral elements. Each simulation takes approximately $5$ CPU seconds. The inputs are represented as point clouds of size around $1000$, and the target output is the stress field.

\subsection{PlanetSWE~\citep{The_Well}}\label{app: PlanetSWE}
The rotated, hyperviscous, forced Shallow Water Equation (SWE) on a sphere is a classical test problem for dynamical systems cores to be used in large-scale weather and climate models as they capture a number of similar phenomena but are better understood and operate at a more practical scale \citep{williamson_standard_1992}. We used the forced hyperviscous equations in two dimensions:
\begin{align}
    \partial_t u(x, t) &= -u(x,t) \cdot \nabla_x u(x,t) - g \nabla_x h(x,t) - \nu \nabla_x^4u(x,t) - 2 \Omega \times u(x,t)
    \\ \partial_t h(x,t) &= -H \nabla_x \cdot u(x,t) - \nabla_x \cdot (h(x,t)u(x,t)) - \nu \nabla_x^4 h(x,t) + F(x,t)
\end{align}
where $\nu$ is the hyper-diffusion coefficient, $\Omega$ is the Coriolis parameter, $u$ is the velocity field, $H$ is the mean height, and $h$ denotes deviation from the mean height. 
$F$ is a daily/seasonally varying forcing with periods
of $24$ and $1008$ simulation “hour” respectively.  

Initial conditions are randomly sampled from ERA5\citep{ERA5}. $u,\ v,\ z$ are taken from the hpa $500$ level with $z$ used as
$h$ is the shallow water set-up. Prefiltering was performed by executing ten iterations
of $50$ steps followed by solving a balance BVP. 

The dataset we used was generated in \citep{planetSWE} and is part of The Well dataset \citep{The_Well}.

The simulations were performed using the spin-weighted
spherical harmonic spectral method in Dedalus \citep{dedalus} with $500$ simulation hours of burn-in where the next three simulation years ($3024$ hours), were collected for the data set.
Integration is performed forward in time using a semi-implicit RK2 integrator. Step-sizes are computed using
the CFL-checker in Dedalus. The $3/2$ rule is used for de-aliasing. Background orography is taken from earth
orography and passed through mean-pooling three times (until the simulations became stable empirically).
Hyperdiffusion is matched at $\ell = 96$.

The original dataset from The Well ~\citep{The_Well} contains
$120$ trajectories of $3024$, each consisting of $3024$ timesteps at a spatial resolution of $256 \times 512$. For faster training, we restricted our experiments to the first $50$ trajectories, truncated to the initial $100$ timesteps, and downsampled the spatial resolution to $256 \times 128$ by averaging.

\subsection{Proposed Poisson equation with polynomial source (Polynomial-Source Poisson)}
\label{app: poly poisson}

We introduce the \textbf{Polynomial-Source Poisson} dataset, a controlled nonlinear elliptic PDE benchmark designed to evaluate whether neural operators can efficiently represent multiplicative field interactions of increasing order. 

The dataset is parameterized by an integer $p \geq 1$. For each sample, the model receives $p$ independent input fields $u_1,\ldots,u_p$ defined on the two-dimensional periodic unit square and must predict the solution $v$ of a Poisson equation whose source term is the pointwise product of these input fields.

The dataset is motivated by a common structure in physics-based PDEs: nonlinear interactions between fields often appear as source terms, forcing terms, or closure terms\citep{multiple_conditioning_ref}, and these sources are then propagated through an elliptic, parabolic, or spectral operator \citep{quarteroni_spectral_book}. In particular, Poisson-type equations with nonlinear sources arise in nonlinear electrostatics\citep{poisson_boltzmann_electrostatic, poisson_boltzmann_biomolecular}, phase-field models\citep{poisson_phase_field, poisson_phase_field_more_recent_ref}, and pressure recovery in incompressible fluid dynamics\citep{poisson_fluid}. 

Our dataset isolates this mechanism in a simple setting: the nonlinearity is a degree-$p$ pointwise product, while the nonlocal part of the operator is the inverse Laplacian.

\paragraph{Domain and boundary conditions.}
All fields are defined on the periodic domain
\begin{align*}
    \Omega = [0,1]^2,
\end{align*}
with periodic boundary conditions in both spatial directions. We discretize $\Omega$ using a uniform grid of resolution $n \times n$, with $n=64$ in our experiments. Grid points are given by
\begin{align*}
    x_i = \frac{i}{n},
    \qquad
    y_j = \frac{j}{n},
    \qquad
    i,j = 0,\ldots,n-1.
\end{align*}
Each input sample is therefore represented as a tensor of shape $n \times n \times p$, and the corresponding target is represented as a tensor of shape $n \times n \times 1$.

\paragraph{PDE definition.}
For a fixed degree-$p$ multilinear source, the input consists of $p$ scalar fields
\begin{align*}
    u_1,\ldots,u_p : \Omega \to \mathbb{R}.
\end{align*}
The target field $v : \Omega \to \mathbb{R}$ is defined as the zero-mean periodic solution of
\begin{align*}
    -\Delta v(x,y)
    =
    \prod_{\ell=1}^{p} u_{\ell}(x,y)
    -
    \frac{1}{|\Omega|}
    \int_{\Omega}
    \prod_{\ell=1}^{p} u_{\ell}(x',y')
    \,dx'\,dy',
    \qquad (x,y) \in \Omega.
\end{align*}
Equivalently, if we define the nonlinear source term
\begin{align*}
    f_p(x,y)
    =
    \prod_{\ell=1}^{p} u_{\ell}(x,y),
\end{align*}
then the PDE is
\begin{align*}
    -\Delta v
    =
    f_p - \langle f_p \rangle,
\end{align*}
where
\begin{align*}
    \langle f_p \rangle
    =
    \frac{1}{|\Omega|}
    \int_{\Omega} f_p(x,y)\,dx\,dy.
\end{align*}
The subtraction of the spatial mean is required because the periodic Poisson equation is solvable only when the right-hand side has zero mean. To fix the additive constant, we impose the normalization
\begin{align*}
    \int_{\Omega} v(x,y)\,dx\,dy = 0.
\end{align*}
Thus, the dataset defines the operator
\begin{align*}
    \mathcal{G}_p :
    (u_1,\ldots,u_p)
    \mapsto
    v
    =
    (-\Delta)^{-1}
    \left(
        \prod_{\ell=1}^{p} u_{\ell}
        -
        \left\langle
            \prod_{\ell=1}^{p} u_{\ell}
        \right\rangle
    \right),
\end{align*}
where $(-\Delta)^{-1}$ denotes the inverse periodic Laplacian on zero-mean functions.

\paragraph{Fourier-space solution.}
The target is computed spectrally. Let $\widehat{f}_p(k)$ denote the discrete Fourier transform of the zero-mean source term, where $k=(k_1,k_2) \in \mathbb{Z}^2$ is a discrete Fourier mode. Since the eigenvalues of $-\Delta$ on the periodic unit square are
\begin{align*}
    4\pi^2 |k|^2
    =
    4\pi^2(k_1^2+k_2^2),
\end{align*}
the Fourier coefficients of the solution are given by
\begin{align*}
    \widehat{v}(k)
    =
    \frac{\widehat{f}_p(k)}
    {4\pi^2 |k|^2},
    \qquad k \neq 0,
\end{align*}
and
\begin{align*}
    \widehat{v}(0) = 0.
\end{align*}
The zero mode is set to zero to enforce the zero-mean normalization of $v$. The target field is then obtained by applying the inverse Fourier transform to $\widehat{v}$.

\paragraph{Input distribution.}
The input fields are sampled as independent band-limited random Fourier fields. For each sample and each input channel $\ell \in \{1,\ldots,p\}$, we draw a random periodic field of the form
\begin{align*}
    u_{\ell}(x,y)
    =
    \sum_{m=1}^{M}
    a_{\ell,m}
    \cos
    \left(
        2\pi
        \left(
            k_{\ell,m}^{(1)} x
            +
            k_{\ell,m}^{(2)} y
        \right)
        +
        \phi_{\ell,m}
    \right),
\end{align*}
where $M$ is the number of active Fourier modes. The frequencies
\begin{align*}
    k_{\ell,m}
    =
    \left(
        k_{\ell,m}^{(1)},
        k_{\ell,m}^{(2)}
    \right)
    \in \mathbb{Z}^2
\end{align*}
are sampled from a prescribed Fourier annulus
\begin{align*}
    k_{\min}
    \leq
    |k_{\ell,m}|
    \leq
    k_{\max}.
\end{align*}
In our main experiments, we use
\begin{align*}
    k_{\min}=8,
    \qquad
    k_{\max}=18,
    \qquad
    M=64.
\end{align*}
The phases are sampled independently as
\begin{align*}
    \phi_{\ell,m} \sim \operatorname{Uniform}(0,2\pi),
\end{align*}
and the amplitudes are sampled with a mild spectral decay,
\begin{align*}
    a_{\ell,m}
    =
    \frac{\xi_{\ell,m}}
    {|k_{\ell,m}|^{\alpha}},
    \qquad
    \xi_{\ell,m} \sim \mathcal{N}(0,1),
\end{align*}
with $\alpha=0.75$ unless otherwise stated.

After sampling, each input channel is normalized independently to have zero empirical mean and unit empirical standard deviation on the grid:
\begin{align*}
    u_{\ell}
    \leftarrow
    \frac{
        u_{\ell} - \operatorname{mean}(u_{\ell})
    }{
        \operatorname{std}(u_{\ell}) + \epsilon
    },
\end{align*}
where $\epsilon$ is a small numerical constant. This normalization ensures that the magnitude of the degree-$p$ product remains comparable across samples and prevents high-order variants such as $p=5$ from becoming numerically degenerate.

\paragraph{Data generation procedure.}
For each sample, the dataset is generated as follows.

\begin{enumerate}
    \item Sample $p$ independent band-limited random Fourier fields $u_1,\ldots,u_p$ on the periodic $n \times n$ grid.

    \item Normalize each input channel $u_{\ell}$ to zero mean and unit standard deviation.

    \item Form the degree-$p$ nonlinear source
    \begin{align*}
        f_p(x,y)
        =
        \prod_{\ell=1}^{p} u_{\ell}(x,y).
    \end{align*}

    \item Enforce the periodic Poisson solvability condition by removing the spatial mean:
    \begin{align*}
        \widetilde{f}_p(x,y)
        =
        f_p(x,y)
        -
        \operatorname{mean}(f_p).
    \end{align*}

    \item Solve the periodic Poisson equation
    \begin{align*}
        -\Delta v = \widetilde{f}_p
    \end{align*}
    using a Fourier-domain inverse Laplacian.

    \item Set the zero Fourier mode of $v$ to zero, which fixes the additive constant.

    \item Store $(u_1,\ldots,u_p)$ as the input and $v$ as the target.
\end{enumerate}

The target fields are optionally normalized using global training-set statistics:
\begin{align*}
    v
    \leftarrow
    \frac{
        v - \mu_{\operatorname{train}}
    }{
        \sigma_{\operatorname{train}} + \epsilon
    },
\end{align*}
where $\mu_{\operatorname{train}}$ and $\sigma_{\operatorname{train}}$ are computed over all target values in the training set. We use global target normalization rather than samplewise normalization, since samplewise normalization would remove meaningful amplitude information from the operator.

\paragraph{Dataset variants.}
The dataset naturally defines a family of tasks indexed by the polynomial degree $p$. In this work, we consider
\begin{align*}
    p \in \{2,3,5\}.
\end{align*}
The corresponding datasets are denoted \textbf{Polynomial-Source Poisson-2}, \textbf{Polynomial-Source Poisson-3}, and \textbf{Polynomial-Source Poisson-5}. The input and output shapes are therefore
\begin{align*}
    \text{input shape} &= 64 \times 64 \times p, \\
    \text{output shape} &= 64 \times 64 \times 1.
\end{align*}
Unless otherwise specified, each variant contains $1000$ training samples and $200$ test samples.

\paragraph{Why this dataset is challenging.}
The dataset is designed so that the difficulty is concentrated in the nonlinear source term. The inverse Laplacian is a linear, translation-equivariant, nonlocal operator that is diagonal in Fourier space. By contrast, the source
\begin{align*}
    f_p = u_1 u_2 \cdots u_p
\end{align*}
requires a pointwise multiplicative interaction among $p$ independent input fields. In Fourier space, this product corresponds to a $p$-fold convolution of the input spectra. Therefore, increasing $p$ increases the combinatorial complexity of the spectral interactions that must be represented.

This makes the dataset particularly suitable for evaluating models designed to capture higher-order interactions. A first-order architecture must synthesize the product indirectly through repeated linear transformations and pointwise nonlinearities. In contrast, an architecture with explicit order-$p$ multiplicative interactions can represent the dominant nonlinearity in a single stage, followed by a spectral operator approximating the inverse Laplacian.

The multi-channel construction is important. If the target source were simply $u^p$ for a single scalar input field, then a pointwise multilayer perceptron could approximate the scalar map $u \mapsto u^p$ locally. By using independent fields $u_1,\ldots,u_p$ the benchmark requires genuine cross-channel multiplicative coupling. This makes the task a cleaner diagnostic of whether the model can represent higher-order interactions between different input fields.

\paragraph{Role of the Fourier annulus.}
The input fields are not sampled only from very low frequencies. Instead, their spectra are concentrated in a medium-frequency annulus. This choice avoids an overly smooth regime in which the nonlinear product can be learned easily from low-dimensional correlations. 

At the same time, the input modes themselves lie below the Nyquist frequency to avoid making the task dominated by numerical aliasing. However, for high orders, e.g. $5$, the resulting source may contain unresolved modes unless de-aliasing or oversampling is used.

The product $u_1 \cdots u_p$ generates new frequencies through nonlinear mode coupling, while the inverse Laplacian smooths the resulting source according to the physically meaningful multiplier $1/|k|^2$.

The annular input distribution therefore creates a controlled but nontrivial spectral-learning problem: the model must represent a high-order nonlinear interaction in physical space and the corresponding nonlocal smoothing in Fourier space.

\begin{figure}[!h]
    \centering\includegraphics[width=1\columnwidth]{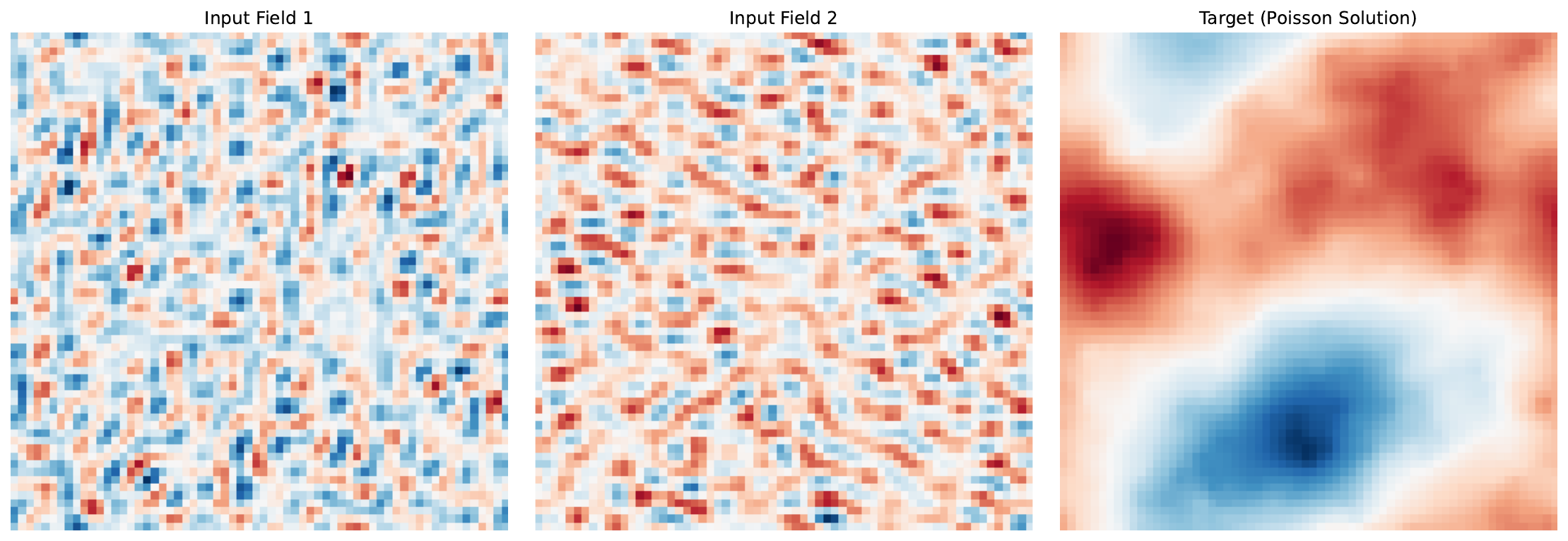}
    \caption{Sample from the Polynomial-Source Poisson dataset with $p=2$. The $2$ images on the left are the input fields, whose pointwise product defines the source term of the Poisson equation; the image on the right is the target Poisson solution.
    } 
    \vspace{-1em}
    \label{fig:backbones}
\end{figure}

\begin{figure}[!h]
    \centering\includegraphics[width=1\columnwidth]{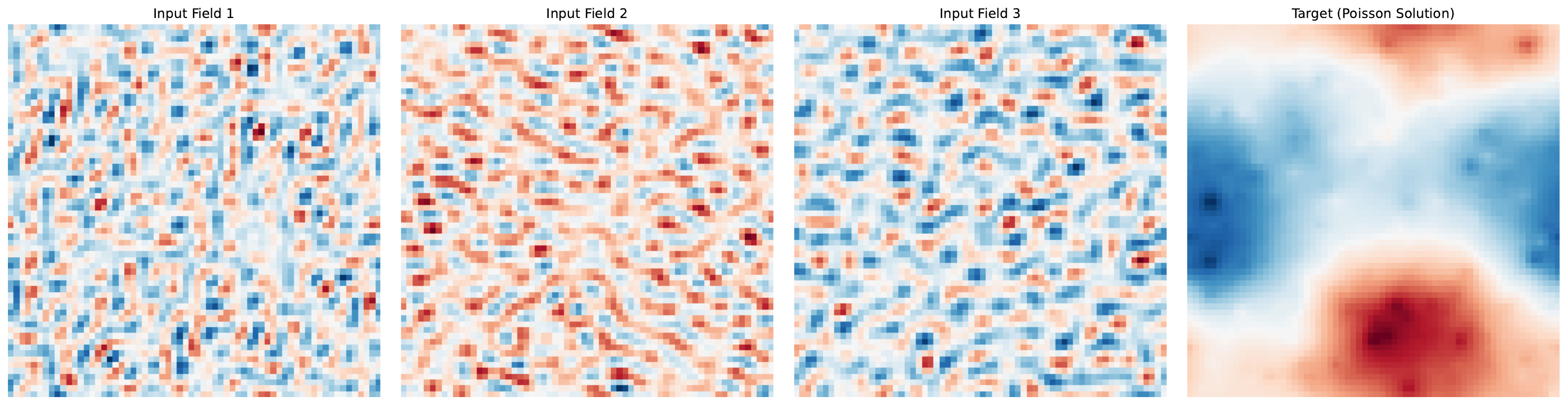}
    \caption{Sample from the Polynomial-Source Poisson dataset with $p=3$. The $3$ images on the left are the input fields, whose pointwise product defines the source term of the Poisson equation; the image on the right is the target Poisson solution.
    } 
    \vspace{-1em}
    \label{fig:backbones}
\end{figure}

\begin{figure}[!h]
    \centering\includegraphics[width=1.0\columnwidth]{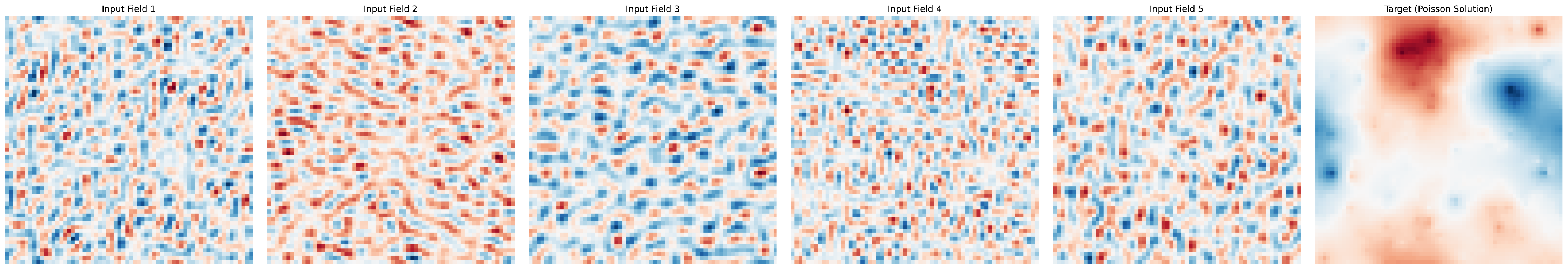}
    \caption{Sample from the Polynomial-Source Poisson dataset with $p=5$. The $5$ images on the left are the input fields, whose pointwise product defines the source term of the Poisson equation; the image on the right is the target Poisson solution.
    } 
    \vspace{-1em}
    \label{fig:backbones}
\end{figure}
\newpage
\section{Detailed discussion on the extension to manifolds: example of SWE on the sphere.}\label{app: extension to manifolds}
The classical Fourier transform is defined for functions defined on the torus $\mathbb{T}_d$. When a function is instead defined on a manifold $\mathcal{M} \subset \mathbb{R}^D$, one can still apply the classical Fourier transform by first extending the function to the ambient euclidean space $\mathbb{R}^D$. While this procedure makes the transform computable, the resulting representation ignores 
the geometry of the domain $\mathcal{M}$ of the function and therefore provides a sub-optimal representation. 
To overcome this limitation, the notion of a Fourier basis has been generalized to arbitrary compact Riemannian manifolds $\mathcal{M}$ through the spectral decomposition of the Laplace–Beltrami operator. Concretely, one considers the eigenvalue problem of Eq. \ref{eq:eigen}, where $\Delta_g f = \operatorname{div}_g(\nabla_g f)$ denotes the Laplacian, defined as the divergence of the Riemannian gradient.
\begin{equation} \label{eq:eigen}
-\Delta_{g}\phi_j = \lambda_j \phi_j \quad \text{on } \mathcal{M}
\end{equation}
The eigenfunctions ${\phi_j}$ serve as generalized Fourier modes, while the corresponding eigenvalues ${\lambda_j}$ play the role of frequencies.
\newline
For most manifolds, the eigenfunctions of the Laplace–Beltrami operator do not admit a closed-form expression and must be precomputed numerically \citep{NORM}. An important exception is the sphere, where the generalized Fourier modes correspond to the well-known \textit{spherical harmonics}. This extension of the Fourier transform naturally induces a corresponding notion of convolution, defined as a linear diagonal operator in the generalized Fourier domain. In the same spirit, Higher-Order Spectral Convolutions also extend to arbitrary geometries, and the theoretical framework developed in the classical Fourier setting remains directly applicable. 
We illustrate this by experimenting with the rotated, hyperviscous, forced Shallow Water Equation (SWE) on the sphere, with results reported in Table~\ref{tab:results_spherical}, and experimental settings described in Section \ref{app:xp_details}.

\begin{table}[!h] 
    \centering
    \small
    \vspace{-1em}
    \caption{Test performance on rotated, hyperviscous, forced Shallow Water Equation (SWE). Models are trained with MSE. We report variants of MSE for time intervals $(0, 10), \ (11, 25), \ (26, 50)$ and full rollout. Best per metric in \textbf{bold}.}
    \label{tab:results_spherical_app}
    \begin{tabular}{lcc}
        \toprule
        \textbf{Metric} & SFNO & HO-SFNO (ours) \\
        \midrule
        
        MSE & $8.23$ & $\mathbf{5.56}$ \\
        NRMSE & $1.7 \times 10^{-2}$ & $\mathbf{1.3 \times 10^{-2}}$ \\
        Rollout ($0:10$) & $9.9 \times 10^{-2}$ & $\mathbf{8.0 \times 10^{-2}}$ \\
        Rollout ($11:25$) & $3.0 \times 10^{-1}$ & $\mathbf{2.6 \times 10^{-1}}$ \\
        Rollout ($26:50$) & $7.2 \times 10^{-1}$ & $\mathbf{6.2 \times 10^{-1}}$ \\
        Rollout & $7.7 \times 10^{-1}$ & $\mathbf{7.0 \times 10^{-1}}$ \\
        
        \bottomrule
    \end{tabular}
\end{table}

\section{Ablation study on resolution-equivariance}\label{app:resolution_equivariance}
Resolution equivariance is a desirable architectural property in neural operator learning: since the underlying PDE operator is defined in the continuum, the learned model should transfer across different discretizations without retraining.
\newline
To assess whether HO-FNO preserves the resolution equivariance property of FNO, we train an order-$2$ HO-FNO on the Darcy flow dataset at resolution $200\times 200$, and evaluate the resulting model at test resolutions ranging from $50\times 50$ to $400\times 400$. Figure~\ref{fig:resolution_equivariance} shows the same trend for FNO and HO-FNO: models trained at $200\times 200$ exhibit no noticeable performance degradation when evaluated between $100\times 100$ and $400\times 400$, while performance decreases at $50\times 50$, where the discretization becomes too coarse and fine-scale information is lost.
\textit{This indicates that introducing higher-order mixing preserves the ability of FNO to generalize across discretizations.}

\begin{figure}[!h]
\centering
\includegraphics[width=0.6\columnwidth]{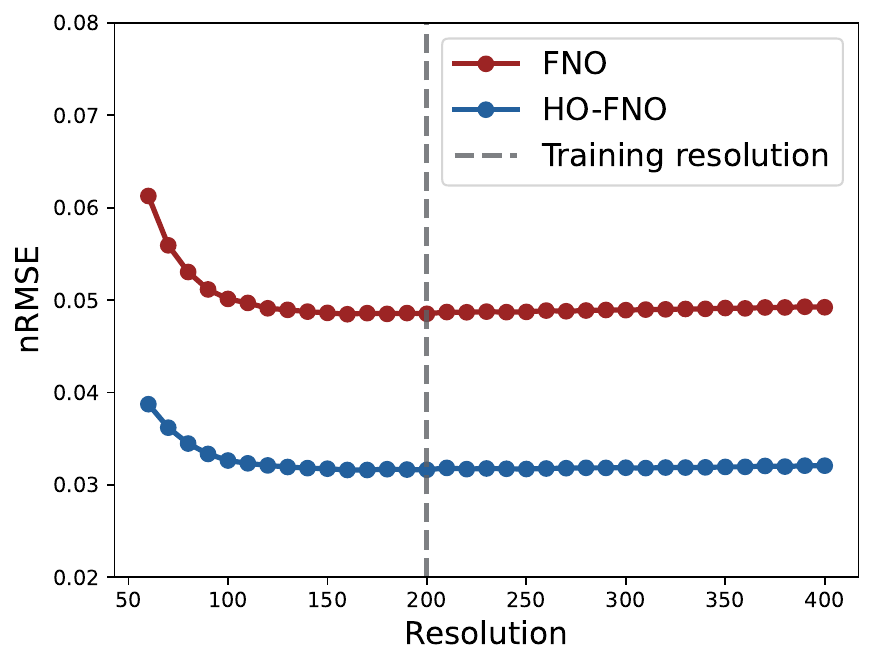}
\caption{Resolution equivariance of FNO and HO-FNO on the Darcy flow dataset.}
\label{fig:resolution_equivariance}
\vspace{-1em}
\end{figure}

\newpage
\section{Ablation study on the structure of the matrices added to induce higher order interactions}\label{app:structure_matrices_ablation}
We study the structure of the matrices $A_s$ used to induce polynomial interaction in our proposed convolution. Notably, this is the only addition to the mixer of the Fourier Neural Operator and it's therefore interesting to investigate different parametrizations.

\paragraph{Results} The dense parameterization of $A_s$ achieves the best performance, suggesting that more expressive matrices are beneficial for modeling higher-order interactions. However, all structured variants remain competitive, including diagonal, low-rank, and shared parameterizations. This indicates that the improvement does not only come from increasing the number of parameters, but from the higher-order mixing mechanism itself. In particular, even constrained matrix structures achieve losses comparable to strong frequency-based and Transformer baselines, supporting the role of higher-order spectral interactions as the main source of improvement.

\begin{table}[!h]
\centering
\caption{Structure of the matrices $A_s$ on the Darcy flow equation for HO-FNO of order 2 structure.}\label{fig:structure_matrices}
\begin{tabular}{lc}
\hline
Structure & loss  \\
\hline
Dense & $5.2 \times 10^{-3}$\\
identity  & $7.4 \times 10^{-3}$ \\
diagonal  &  $7.0 \times 10^{-3}$\\
low-rank (128) & $6.6 \times 10^{-3}$ \\
low-rank (64)  & $7.5 \times 10^{-3}$  \\
diagonal + low-rank (64)  & $7.2 \times 10^{-3}$  \\
diagonal + low-rank (128)  & $6.4 \times 10^{-3}$  \\
shared  & $6.7 \times 10^{-3}$  \\
\hline
\end{tabular}
\end{table}

\newpage
\section{Extended results on the Poisson equation with polynomial source}\label{app:extended_results_poisson}
We report in this section comprehensive results relative to the Poisson equation benchmark. Tables~\ref{tab:poisson-p1-mean-std-transposed}, ~\ref{tab:polynomial-source-poisson-p2},  ~\ref{tab:polynomial-source-poisson-p3} and  ~\ref{tab:polynomial-source-poisson-p5} reports MSE (i.e. $L^2$) and relative MSE (i.e. relative $L^2$) test performance for FNO and HO-FNO with order $m=1, \ 2, \ 3, \ 5$. For an easier quantitative assessment we also include line plots of MSE in Figure~\ref{fig: app Polynomial-Source Poisson_layers} 
in function of the number of layers. 

\paragraph{Baseline details} FNO employs here the architecture we refer to as the ``modern backbone'', that is in fact a pre-norm residual backbone typical of most modern transformers, that is getting standard today but was not at the time the original FNO was proposed. We underline that this architectural modification was done to ensure comparability on the results and isolate the contribution of the proposed higher-order spectral convolution from the backbone design.

\paragraph{Results} The HO-FNO variants consistently improve on the FNO baselines across all the variants of the dataset. In the Poisson datasets of order $3$ and $5$, a single layer of HO-FNO performs better than FNO models up to $16$ layers with a gap of, respectively, $1$ and $2$ orders of magnitude. On the Poisson dataset of order $p=2$, a single layer of HO-FNO performs better than FNO models up to $8$ layer while the $16$ layers' FNO performs slightly better than the $1$ layer HO-FNO. However, HO-FNO needs just $8$ layers to perform better than the two times bigger FNO. And also in this dataset, a variant of FNO (the one of order $5$), outperforms the $16$ layer FNO with a gap of an order of magnitude test MSE.

Interestingly, the experiments show that increasing the order of HO-FNO typically provide better performance, at exception of HO-FNO of order $5$ on the Poisson dataset of order $2$ when it has $4$ or more layers. This is due to excessive capacity compared to the problem complexity. We suggest to match the order of HO-FNO with the order of non-linearity of the physical system you aim to simulate for better performance, as suggested by the experiments in this section. 

In particular we notice that in the most challenging benchmark, when the forcing term has polynomial order $5$, only HO-FNO of order $5$ is able to provide good predictions. On this dataset, $p=5$, HO-FNO of order $2$ and $3$ improves slightly on FNO and have relative $L^2$ close to $1$ that is the same performance of the zero predictor (the model that predicts zero everywhere) meaning that all the models drastically failed except the HO-FNO variant of order $5$ that has a relative loss less than $10^{-1}$ for any number of layers in $\{1,2,4,8,16\}$.

\begin{table*}[!h]
\centering
\small
\caption{Polynomial-source Poisson $p=1$ results averaged over three runs.}
\label{tab:poisson-p1-mean-std-transposed}
\resizebox{\textwidth}{!}{%
\begin{tabular}{llcccc}
\toprule
Layers & Metric
& FNO (order 1)
& HO-FNO (order 2)
& HO-FNO (order 3)
& HO-FNO (order 5) \\
\midrule
\multirow{2}{*}{1 layer}
& Test MSE
& \cellcolor{blue!10}$0.00457 \pm 0.00059$
& $0.00664 \pm 0.00063$
& \cellcolor{orange!20}$\mathbf{0.00273 \pm 0.00022}$
& \cellcolor{violet!10}$0.00580 \pm 0.00058$ \\
& rel. L2
& \cellcolor{blue!10}$0.06932 \pm 0.00424$
& $0.08426 \pm 0.00400$
& \cellcolor{orange!20}$\mathbf{0.05359 \pm 0.00219}$
& \cellcolor{violet!10}$0.07833 \pm 0.00348$ \\
\midrule
\multirow{2}{*}{2 layers}
& Test MSE
& $0.00555 \pm 0.00022$
& \cellcolor{orange!20}$\mathbf{0.00175 \pm 0.00009}$
& \cellcolor{violet!10}$0.00217 \pm 0.00029$
& \cellcolor{blue!10}$0.00206 \pm 0.00006$ \\
& rel. L2
& $0.07406 \pm 0.00154$
& \cellcolor{orange!20}$\mathbf{0.04174 \pm 0.00101}$
& \cellcolor{violet!10}$0.04515 \pm 0.00288$
& \cellcolor{blue!10}$0.04471 \pm 0.00071$ \\
\midrule
\multirow{2}{*}{4 layers}
& Test MSE
& $0.00375 \pm 0.00048$
& \cellcolor{orange!20}$\mathbf{0.00229 \pm 0.00040}$
& \cellcolor{violet!10}$0.00294 \pm 0.00043$
& \cellcolor{blue!10}$0.00287 \pm 0.00040$ \\
& rel. L2
& $0.05931 \pm 0.00374$
& \cellcolor{orange!20}$\mathbf{0.04649 \pm 0.00434}$
& \cellcolor{violet!10}$0.05213 \pm 0.00413$
& \cellcolor{blue!10}$0.05119 \pm 0.00171$ \\
\midrule
\multirow{2}{*}{8 layers}
& Test MSE
& $0.00291 \pm 0.00025$
& \cellcolor{orange!20}$\mathbf{0.00168 \pm 0.00002}$
& \cellcolor{violet!10}$0.00282 \pm 0.00060$
& \cellcolor{blue!10}$0.00210 \pm 0.00021$ \\
& rel. L2
& $0.05231 \pm 0.00280$
& \cellcolor{orange!20}$\mathbf{0.04027 \pm 0.00085}$
& \cellcolor{violet!10}$0.04896 \pm 0.00541$
& \cellcolor{blue!10}$0.04323 \pm 0.00198$ \\
\midrule
\multirow{2}{*}{16 layers}
& Test MSE
& $0.00362 \pm 0.00029$
& \cellcolor{orange!20}$\mathbf{0.00269 \pm 0.00040}$
& \cellcolor{violet!10}$0.00347 \pm 0.00023$
& \cellcolor{blue!10}$0.00300 \pm 0.00042$ \\
& rel. L2
& $0.05887 \pm 0.00263$
& \cellcolor{orange!20}$\mathbf{0.05039 \pm 0.00413}$
& \cellcolor{violet!10}$0.05764 \pm 0.00157$
& \cellcolor{blue!10}$0.05358 \pm 0.00349$ \\
\bottomrule
\end{tabular}%
}
\end{table*}

\begin{table*}[!h]
\centering
\small
\caption{Polynomial-source Poisson $p=2$ results.}
\label{tab:polynomial-source-poisson-p2}
\resizebox{\textwidth}{!}{%
\begin{tabular}{llcccc}
\toprule
Layers & Metric & FNO (order 1) & HO-FNO (order 2) & HO-FNO (order 3) & HO-FNO (order 5) \\
\midrule
\multirow{2}{*}{1 layer}
& Test MSE & $0.09514 \pm 0.00333$ & \cellcolor{violet!10}$0.00073 \pm 0.00020$ & \cellcolor{blue!10}$0.00036 \pm 0.00005$ & \cellcolor{orange!20}$\mathbf{0.00024 \pm 0.00008}$ \\
& rel. L2  & $0.32724 \pm 0.00524$ & \cellcolor{violet!10}$0.02740 \pm 0.00375$ & \cellcolor{blue!10}$0.02025 \pm 0.00132$ & \cellcolor{orange!20}$\mathbf{0.01607 \pm 0.00277}$ \\
\midrule
\multirow{2}{*}{2 layers}
& Test MSE & $0.00207 \pm 0.00022$ & \cellcolor{blue!10}$0.00040 \pm 0.00011$ & \cellcolor{violet!10}$0.00058 \pm 0.00012$ & \cellcolor{orange!20}$\mathbf{0.00025 \pm 0.00008}$ \\
& rel. L2  & $0.04708 \pm 0.00264$ & \cellcolor{blue!10}$0.02063 \pm 0.00272$ & \cellcolor{violet!10}$0.02521 \pm 0.00295$ & \cellcolor{orange!20}$\mathbf{0.01597 \pm 0.00268}$ \\
\midrule
\multirow{2}{*}{4 layers}
& Test MSE & $0.00193 \pm 0.00049$ & \cellcolor{orange!20}$\mathbf{0.00033 \pm 0.00009}$ & \cellcolor{blue!10}$0.00049 \pm 0.00015$ & \cellcolor{violet!10}$0.00090 \pm 0.00073$ \\
& rel. L2  & $0.04455 \pm 0.00570$ & \cellcolor{orange!20}$\mathbf{0.01798 \pm 0.00255}$ & \cellcolor{blue!10}$0.02244 \pm 0.00398$ & \cellcolor{violet!10}$0.02538 \pm 0.00807$ \\
\midrule
\multirow{2}{*}{8 layers}
& Test MSE & $0.00243 \pm 0.00075$ & \cellcolor{orange!20}$\mathbf{0.00021 \pm 0.00002}$ & \cellcolor{blue!10}$0.00025 \pm 0.00004$ & \cellcolor{violet!10}$0.00121 \pm 0.00010$ \\
& rel. L2  & $0.04509 \pm 0.00722$ & \cellcolor{orange!20}$\mathbf{0.01465 \pm 0.00054}$ & \cellcolor{blue!10}$0.01613 \pm 0.00098$ & \cellcolor{violet!10}$0.03660 \pm 0.00173$ \\
\midrule
\multirow{2}{*}{16 layers}
& Test MSE & $0.00205 \pm 0.00022$ & \cellcolor{orange!20}$\mathbf{0.00029 \pm 0.00006}$ & \cellcolor{blue!10}$0.00110 \pm 0.00078$ & \cellcolor{violet!10}$0.00139 \pm 0.00068$ \\
& rel. L2  & $0.04690 \pm 0.00467$ & \cellcolor{orange!20}$\mathbf{0.01754 \pm 0.00167}$ & \cellcolor{blue!10}$0.02880 \pm 0.01212$ & \cellcolor{violet!10}$0.03254 \pm 0.00843$ \\
\bottomrule
\end{tabular}%
}
\end{table*}

\begin{table*}[!h]
\centering
\small
\caption{Polynomial-source Poisson $p=3$ results.}
\label{tab:polynomial-source-poisson-p3}
\resizebox{\textwidth}{!}{%
\begin{tabular}{llcccc}
\toprule
Layers & Metric & FNO (order 1) & HO-FNO (order 2) & HO-FNO (order 3) & HO-FNO (order 5) \\
\midrule
\multirow{2}{*}{1 layer}
& Test MSE & $0.72888 \pm 0.00601$ & \cellcolor{violet!10}$0.53714 \pm 0.03983$ & \cellcolor{blue!10}$0.00234 \pm 0.00023$ & \cellcolor{orange!20}$\mathbf{0.00059 \pm 0.00007}$ \\
& rel. L2  & $0.87913 \pm 0.00811$ & \cellcolor{violet!10}$0.75098 \pm 0.02441$ & \cellcolor{blue!10}$0.05028 \pm 0.00241$ & \cellcolor{orange!20}$\mathbf{0.02555 \pm 0.00156}$ \\
\midrule
\multirow{2}{*}{2 layers}
& Test MSE & $0.01802 \pm 0.00549$ & \cellcolor{violet!10}$0.00394 \pm 0.00089$ & \cellcolor{blue!10}$0.00174 \pm 0.00030$ & \cellcolor{orange!20}$\mathbf{0.00101 \pm 0.00021}$ \\
& rel. L2  & $0.12847 \pm 0.02011$ & \cellcolor{violet!10}$0.06208 \pm 0.00713$ & \cellcolor{blue!10}$0.04149 \pm 0.00334$ & \cellcolor{orange!20}$\mathbf{0.03258 \pm 0.00330}$ \\
\midrule
\multirow{2}{*}{4 layers}
& Test MSE & $0.01105 \pm 0.00257$ & \cellcolor{violet!10}$0.00333 \pm 0.00129$ & \cellcolor{orange!20}$\mathbf{0.00134 \pm 0.00027}$ & \cellcolor{blue!10}$0.00159 \pm 0.00008$ \\
& rel. L2  & $0.10088 \pm 0.01145$ & \cellcolor{violet!10}$0.05410 \pm 0.01010$ & \cellcolor{orange!20}$\mathbf{0.03567 \pm 0.00326}$ & \cellcolor{blue!10}$0.04121 \pm 0.00136$ \\
\midrule
\multirow{2}{*}{8 layers}
& Test MSE & $0.00599 \pm 0.00100$ & \cellcolor{violet!10}$0.00345 \pm 0.00062$ & \cellcolor{orange!20}$\mathbf{0.00157 \pm 0.00030}$ & \cellcolor{blue!10}$0.00164 \pm 0.00086$ \\
& rel. L2  & $0.07466 \pm 0.00641$ & \cellcolor{violet!10}$0.05836 \pm 0.00532$ & \cellcolor{blue!10}$0.03963 \pm 0.00344$ & \cellcolor{orange!20}$\mathbf{0.03724 \pm 0.00987}$ \\
\midrule
\multirow{2}{*}{16 layers}
& Test MSE & $0.00530 \pm 0.00046$ & \cellcolor{violet!10}$0.00354 \pm 0.00116$ & \cellcolor{orange!20}$\mathbf{0.00105 \pm 0.00038}$ & \cellcolor{blue!10}$0.00317 \pm 0.00162$ \\
& rel. L2  & $0.07259 \pm 0.00398$ & \cellcolor{violet!10}$0.05637 \pm 0.01122$ & \cellcolor{orange!20}$\mathbf{0.03035 \pm 0.00629}$ & \cellcolor{blue!10}$0.05141 \pm 0.01687$ \\
\bottomrule
\end{tabular}%
}
\end{table*}

\begin{table*}[!h]
\centering
\small
\caption{Polynomial-source Poisson $p=5$ results.}
\label{tab:polynomial-source-poisson-p5}
\resizebox{\textwidth}{!}{%
\begin{tabular}{llcccc}
\toprule
Layers & Metric & FNO (order 1) & HO-FNO (order 2) & HO-FNO (order 3) & HO-FNO (order 5) \\
\midrule
\multirow{2}{*}{1 layer}
& Test MSE & \cellcolor{violet!10}$1.11682 \pm 0.01314$ & $1.13766 \pm 0.03109$ & \cellcolor{blue!10}$1.02385 \pm 0.02889$ & \cellcolor{orange!20}$\mathbf{0.00851 \pm 0.00076}$ \\
& rel. L2  & \cellcolor{violet!10}$1.06408 \pm 0.00726$ & $1.07505 \pm 0.01781$ & \cellcolor{blue!10}$1.02586 \pm 0.01529$ & \cellcolor{orange!20}$\mathbf{0.09699 \pm 0.00414}$ \\
\midrule
\multirow{2}{*}{2 layers}
& Test MSE & \cellcolor{blue!10}$1.02980 \pm 0.20812$ & \cellcolor{violet!10}$1.21244 \pm 0.15537$ & $1.39929 \pm 0.02853$ & \cellcolor{orange!20}$\mathbf{0.00430 \pm 0.00176}$ \\
& rel. L2  & \cellcolor{blue!10}$1.01548 \pm 0.11067$ & \cellcolor{violet!10}$1.11958 \pm 0.07867$ & $1.20560 \pm 0.01451$ & \cellcolor{orange!20}$\mathbf{0.06429 \pm 0.01470}$ \\
\midrule
\multirow{2}{*}{4 layers}
& Test MSE & \cellcolor{blue!10}$0.98404 \pm 0.08967$ & \cellcolor{violet!10}$1.06894 \pm 0.14487$ & $1.19126 \pm 0.01585$ & \cellcolor{orange!20}$\mathbf{0.01249 \pm 0.00840}$ \\
& rel. L2  & \cellcolor{blue!10}$0.99307 \pm 0.04720$ & \cellcolor{violet!10}$1.03589 \pm 0.08020$ & $1.10348 \pm 0.00659$ & \cellcolor{orange!20}$\mathbf{0.09656 \pm 0.03562}$ \\
\midrule
\multirow{2}{*}{8 layers}
& Test MSE & \cellcolor{violet!10}$1.10587 \pm 0.03485$ & $1.15183 \pm 0.01744$ & \cellcolor{blue!10}$1.10466 \pm 0.00928$ & \cellcolor{orange!20}$\mathbf{0.02420 \pm 0.01467}$ \\
& rel. L2  & \cellcolor{violet!10}$1.06051 \pm 0.01992$ & $1.08270 \pm 0.00789$ & \cellcolor{blue!10}$1.05755 \pm 0.00603$ & \cellcolor{orange!20}$\mathbf{0.13644 \pm 0.04524}$ \\
\midrule
\multirow{2}{*}{16 layers}
& Test MSE & $1.20773 \pm 0.00633$ & \cellcolor{violet!10}$1.12194 \pm 0.03506$ & \cellcolor{blue!10}$1.00528 \pm 0.08521$ & \cellcolor{orange!20}$\mathbf{0.00918 \pm 0.00314}$ \\
& rel. L2  & $1.11114 \pm 0.00539$ & \cellcolor{violet!10}$1.05818 \pm 0.02435$ & \cellcolor{blue!10}$0.99823 \pm 0.04897$ & \cellcolor{orange!20}$\mathbf{0.09390 \pm 0.01953}$ \\
\bottomrule
\end{tabular}%
}
\end{table*}

\definecolor{FNOColor}{RGB}{220,75,60}       
\definecolor{HO2Color}{RGB}{46,160,67}       
\definecolor{HO3Color}{RGB}{30,136,229}      
\definecolor{HO5Color}{RGB}{142,68,173}      
\begin{figure*}[!h]
\centering
\resizebox{\textwidth}{!}{%
\begin{tabular}{cccc}
\begin{tikzpicture}
\begin{axis}[
    width=4.4cm, height=4.1cm,
    xlabel={\textbf{Number of Layers}},
    ylabel={\small \textbf{Test MSE}},
    title={$p=1$},
    xmin=0.8, xmax=16.2,
    ymin=1e-3, ymax=1e-2,
    ymode=log,
    xtick={1,2,4,8,16},
    ymajorgrids=true,
    grid style=dashed,
    title style={font=\footnotesize},
    label style={font=\footnotesize},
    tick label style={font=\scriptsize},
    unbounded coords=jump
]

\addplot[name path=FNOupperp1, draw=none] coordinates {
    (1,0.00516) (2,0.00577) (4,0.00423) (8,0.00316) (16,0.00391)
};
\addplot[name path=FNOlowerp1, draw=none] coordinates {
    (1,0.00398) (2,0.00533) (4,0.00327) (8,0.00266) (16,0.00333)
};
\addplot[
    fill=FNOColor,
    fill opacity=0.18,
    draw=none,
    forget plot
] fill between[of=FNOupperp1 and FNOlowerp1];
\addplot[
    color=FNOColor,
    mark=square*,
    mark size=1.5pt,
    line width=0.8pt
] coordinates {
    (1,0.00457) (2,0.00555) (4,0.00375) (8,0.00291) (16,0.00362)
};

\addplot[name path=HO2upperp1, draw=none] coordinates {
    (1,0.00727) (2,0.00184) (4,0.00269) (8,0.00170) (16,0.00309)
};
\addplot[name path=HO2lowerp1, draw=none] coordinates {
    (1,0.00601) (2,0.00166) (4,0.00189) (8,0.00166) (16,0.00229)
};
\addplot[
    fill=HO2Color,
    fill opacity=0.18,
    draw=none,
    forget plot
] fill between[of=HO2upperp1 and HO2lowerp1];
\addplot[
    color=HO2Color,
    mark=square*,
    mark size=1.5pt,
    line width=0.8pt
] coordinates {
    (1,0.00664) (2,0.00175) (4,0.00229) (8,0.00168) (16,0.00269)
};
\addplot[name path=HO3upperp1, draw=none] coordinates {
    (1,0.00295) (2,0.00246) (4,0.00337) (8,0.00342) (16,0.00370)
};
\addplot[name path=HO3lowerp1, draw=none] coordinates {
    (1,0.00251) (2,0.00188) (4,0.00251) (8,0.00222) (16,0.00324)
};
\addplot[
    fill=HO3Color,
    fill opacity=0.18,
    draw=none,
    forget plot
] fill between[of=HO3upperp1 and HO3lowerp1];
\addplot[
    color=HO3Color,
    mark=square*,
    mark size=1.5pt,
    line width=0.8pt
] coordinates {
    (1,0.00273) (2,0.00217) (4,0.00294) (8,0.00282) (16,0.00347)
};

\addplot[name path=HO5upperp1, draw=none] coordinates {
    (1,0.00638) (2,0.00212) (4,0.00327) (8,0.00231) (16,0.00342)
};
\addplot[name path=HO5lowerp1, draw=none] coordinates {
    (1,0.00522) (2,0.00200) (4,0.00247) (8,0.00189) (16,0.00258)
};
\addplot[
    fill=HO5Color,
    fill opacity=0.18,
    draw=none,
    forget plot
] fill between[of=HO5upperp1 and HO5lowerp1];
\addplot[
    color=HO5Color,
    mark=square*,
    mark size=1.5pt,
    line width=0.8pt
] coordinates {
    (1,0.00580) (2,0.00206) (4,0.00287) (8,0.00210) (16,0.00300)
};

\end{axis}
\end{tikzpicture}%
&
\begin{tikzpicture}
\begin{axis}[
    width=4.4cm, height=4.1cm,
    xlabel={\textbf{Number of Layers}},
    title={$p=2$},
    xmin=0.8, xmax=16.2,
    ymin=1.5e-4, ymax=2e-1,
    ymode=log,
    xtick={1,2,4,8,16},
    ymajorgrids=true,
    grid style=dashed,
    title style={font=\footnotesize},
    label style={font=\footnotesize},
    tick label style={font=\scriptsize},
    unbounded coords=jump
]

\addplot[name path=FNOupperp2, draw=none] coordinates {
    (1,0.09847) (2,0.00229) (4,0.00242) (8,0.00318) (16,0.00227)
};
\addplot[name path=FNOlowerp2, draw=none] coordinates {
    (1,0.09181) (2,0.00185) (4,0.00144) (8,0.00168) (16,0.00183)
};
\addplot[
    fill=FNOColor,
    fill opacity=0.18,
    draw=none,
    forget plot
] fill between[of=FNOupperp2 and FNOlowerp2];
\addplot[
    color=FNOColor,
    mark=square*,
    mark size=1.5pt,
    line width=0.8pt
] coordinates {
    (1,0.09514) (2,0.00207) (4,0.00193) (8,0.00243) (16,0.00205)
};

\addplot[name path=HO2upperp2, draw=none] coordinates {
    (1,0.00093) (2,0.00051) (4,0.00042) (8,0.00023) (16,0.00035)
};
\addplot[name path=HO2lowerp2, draw=none] coordinates {
    (1,0.00053) (2,0.00029) (4,0.00024) (8,0.00019) (16,0.00023)
};
\addplot[
    fill=HO2Color,
    fill opacity=0.18,
    draw=none,
    forget plot
] fill between[of=HO2upperp2 and HO2lowerp2];
\addplot[
    color=HO2Color,
    mark=square*,
    mark size=1.5pt,
    line width=0.8pt
] coordinates {
    (1,0.00073) (2,0.00040) (4,0.00033) (8,0.00021) (16,0.00029)
};

\addplot[name path=HO3upperp2, draw=none] coordinates {
    (1,0.00041) (2,0.00070) (4,0.00064) (8,0.00029) (16,0.00188)
};
\addplot[name path=HO3lowerp2, draw=none] coordinates {
    (1,0.00031) (2,0.00046) (4,0.00034) (8,0.00021) (16,0.00032)
};
\addplot[
    fill=HO3Color,
    fill opacity=0.18,
    draw=none,
    forget plot
] fill between[of=HO3upperp2 and HO3lowerp2];
\addplot[
    color=HO3Color,
    mark=square*,
    mark size=1.5pt,
    line width=0.8pt
] coordinates {
    (1,0.00036) (2,0.00058) (4,0.00049) (8,0.00025) (16,0.00110)
};

\addplot[name path=HO5upperp2, draw=none] coordinates {
    (1,0.00032) (2,0.00033) (4,0.00163) (8,0.00131) (16,0.00207)
};
\addplot[name path=HO5lowerp2, draw=none] coordinates {
    (1,0.00016) (2,0.00017) (4,0.00017) (8,0.00111) (16,0.00071)
};
\addplot[
    fill=HO5Color,
    fill opacity=0.18,
    draw=none,
    forget plot
] fill between[of=HO5upperp2 and HO5lowerp2];
\addplot[
    color=HO5Color,
    mark=square*,
    mark size=1.5pt,
    line width=0.8pt
] coordinates {
    (1,0.00024) (2,0.00025) (4,0.00090) (8,0.00121) (16,0.00139)
};

\end{axis}
\end{tikzpicture}%
&
\begin{tikzpicture}
\begin{axis}[
    width=4.4cm, height=4.1cm,
    xlabel={\textbf{Number of Layers}},
    title={$p=3$},
    xmin=0.8, xmax=16.2,
    ymin=3e-4, ymax=1.2,
    ymode=log,
    xtick={1,2,4,8,16},
    ymajorgrids=true,
    grid style=dashed,
    title style={font=\footnotesize},
    label style={font=\footnotesize},
    tick label style={font=\scriptsize},
    unbounded coords=jump
]

\addplot[name path=FNOupperp3, draw=none] coordinates {
    (1,0.73489) (2,0.02351) (4,0.01362) (8,0.00699) (16,0.00576)
};
\addplot[name path=FNOlowerp3, draw=none] coordinates {
    (1,0.72287) (2,0.01253) (4,0.00848) (8,0.00499) (16,0.00484)
};
\addplot[
    fill=FNOColor,
    fill opacity=0.18,
    draw=none,
    forget plot
] fill between[of=FNOupperp3 and FNOlowerp3];
\addplot[
    color=FNOColor,
    mark=square*,
    mark size=1.5pt,
    line width=0.8pt
] coordinates {
    (1,0.72888) (2,0.01802) (4,0.01105) (8,0.00599) (16,0.00530)
};

\addplot[name path=HO2upperp3, draw=none] coordinates {
    (1,0.57697) (2,0.00483) (4,0.00462) (8,0.00407) (16,0.00470)
};
\addplot[name path=HO2lowerp3, draw=none] coordinates {
    (1,0.49731) (2,0.00305) (4,0.00204) (8,0.00283) (16,0.00238)
};
\addplot[
    fill=HO2Color,
    fill opacity=0.18,
    draw=none,
    forget plot
] fill between[of=HO2upperp3 and HO2lowerp3];
\addplot[
    color=HO2Color,
    mark=square*,
    mark size=1.5pt,
    line width=0.8pt
] coordinates {
    (1,0.53714) (2,0.00394) (4,0.00333) (8,0.00345) (16,0.00354)
};

\addplot[name path=HO3upperp3, draw=none] coordinates {
    (1,0.00257) (2,0.00204) (4,0.00161) (8,0.00187) (16,0.00143)
};
\addplot[name path=HO3lowerp3, draw=none] coordinates {
    (1,0.00211) (2,0.00144) (4,0.00107) (8,0.00127) (16,0.00067)
};
\addplot[
    fill=HO3Color,
    fill opacity=0.18,
    draw=none,
    forget plot
] fill between[of=HO3upperp3 and HO3lowerp3];
\addplot[
    color=HO3Color,
    mark=square*,
    mark size=1.5pt,
    line width=0.8pt
] coordinates {
    (1,0.00234) (2,0.00174) (4,0.00134) (8,0.00157) (16,0.00105)
};

\addplot[name path=HO5upperp3, draw=none] coordinates {
    (1,0.00066) (2,0.00122) (4,0.00167) (8,0.00250) (16,0.00479)
};
\addplot[name path=HO5lowerp3, draw=none] coordinates {
    (1,0.00052) (2,0.00080) (4,0.00151) (8,0.00078) (16,0.00155)
};
\addplot[
    fill=HO5Color,
    fill opacity=0.18,
    draw=none,
    forget plot
] fill between[of=HO5upperp3 and HO5lowerp3];
\addplot[
    color=HO5Color,
    mark=square*,
    mark size=1.5pt,
    line width=0.8pt
] coordinates {
    (1,0.00059) (2,0.00101) (4,0.00159) (8,0.00164) (16,0.00317)
};

\end{axis}
\end{tikzpicture}%
&
\begin{tikzpicture}
\begin{axis}[
    width=4.4cm, height=4.1cm,
    xlabel={\textbf{Number of Layers}},
    title={$p=5$},
    xmin=0.8, xmax=16.2,
    ymin=2e-3, ymax=2,
    ymode=log,
    xtick={1,2,4,8,16},
    ymajorgrids=true,
    grid style=dashed,
    title style={font=\footnotesize},
    label style={font=\footnotesize},
    tick label style={font=\scriptsize},
    unbounded coords=jump
]

\addplot[name path=FNOupperp5, draw=none] coordinates {
    (1,1.12996) (2,1.23792) (4,1.07371) (8,1.14072) (16,1.21406)
};
\addplot[name path=FNOlowerp5, draw=none] coordinates {
    (1,1.10368) (2,0.82168) (4,0.89437) (8,1.07102) (16,1.20140)
};
\addplot[
    fill=FNOColor,
    fill opacity=0.18,
    draw=none,
    forget plot
] fill between[of=FNOupperp5 and FNOlowerp5];
\addplot[
    color=FNOColor,
    mark=square*,
    mark size=1.5pt,
    line width=0.8pt
] coordinates {
    (1,1.11682) (2,1.02980) (4,0.98404) (8,1.10587) (16,1.20773)
};

\addplot[name path=HO2upperp5, draw=none] coordinates {
    (1,1.16875) (2,1.36781) (4,1.21381) (8,1.16927) (16,1.15700)
};
\addplot[name path=HO2lowerp5, draw=none] coordinates {
    (1,1.10657) (2,1.05707) (4,0.92407) (8,1.13439) (16,1.08688)
};
\addplot[
    fill=HO2Color,
    fill opacity=0.18,
    draw=none,
    forget plot
] fill between[of=HO2upperp5 and HO2lowerp5];
\addplot[
    color=HO2Color,
    mark=square*,
    mark size=1.5pt,
    line width=0.8pt
] coordinates {
    (1,1.13766) (2,1.21244) (4,1.06894) (8,1.15183) (16,1.12194)
};

\addplot[name path=HO3upperp5, draw=none] coordinates {
    (1,1.05274) (2,1.42782) (4,1.20711) (8,1.11394) (16,1.09049)
};
\addplot[name path=HO3lowerp5, draw=none] coordinates {
    (1,0.99496) (2,1.37076) (4,1.17541) (8,1.09538) (16,0.92007)
};
\addplot[
    fill=HO3Color,
    fill opacity=0.18,
    draw=none,
    forget plot
] fill between[of=HO3upperp5 and HO3lowerp5];
\addplot[
    color=HO3Color,
    mark=square*,
    mark size=1.5pt,
    line width=0.8pt
] coordinates {
    (1,1.02385) (2,1.39929) (4,1.19126) (8,1.10466) (16,1.00528)
};

\addplot[name path=HO5upperp5, draw=none] coordinates {
    (1,0.00927) (2,0.00606) (4,0.02089) (8,0.03887) (16,0.01232)
};
\addplot[name path=HO5lowerp5, draw=none] coordinates {
    (1,0.00775) (2,0.00254) (4,0.00409) (8,0.00953) (16,0.00604)
};
\addplot[
    fill=HO5Color,
    fill opacity=0.18,
    draw=none,
    forget plot
] fill between[of=HO5upperp5 and HO5lowerp5];
\addplot[
    color=HO5Color,
    mark=square*,
    mark size=1.5pt,
    line width=0.8pt
] coordinates {
    (1,0.00851) (2,0.00430) (4,0.01249) (8,0.02420) (16,0.00918)
};
\end{axis}
\end{tikzpicture}%
\end{tabular}%
}

\begin{tikzpicture}
\begin{axis}[
    hide axis,
    legend columns=-1,
    axis lines=none,
    ticks=none,
    legend style={
        draw=none,
        column sep=2ex,
        font=\scriptsize
    },
    xmin=0, xmax=1, ymin=0, ymax=1
]
\addlegendimage{only marks, mark=square*, color=FNOColor}
\addlegendentry{\textbf{FNO (order 1)}}

\addlegendimage{only marks, mark=square*, color=HO2Color}
\addlegendentry{\textbf{HO-FNO (order 2)}}

\addlegendimage{only marks, mark=square*, color=HO3Color}
\addlegendentry{\textbf{HO-FNO (order 3)}}

\addlegendimage{only marks, mark=square*, color=HO5Color}
\addlegendentry{\textbf{HO-FNO (order 5)}}
\end{axis}
\end{tikzpicture}%

\caption{\small Test MSE as a function of the number of layers on the Polynomial-Source Poisson datasets for $p=1,2,3,$ and $5$. Solid lines denote the mean over runs, and shaded bands indicate one standard deviation. Lower values indicate better performance.}
\label{fig: app Polynomial-Source Poisson_layers}
\end{figure*}

\newpage
\section{Extended Efficiency Analysis}
\subsection{Exact parameter count}\label{sec: parameter count}
For completeness, we report the number of learnable parameters introduced by each higher order spectral layer. Consider a layer with $C$ input and output channels, and let $M$ denote the number of retained Fourier modes. A standard FNO spectral convolution applies, for each retained mode, a complex-valued channel-mixing matrix of size $C \times C$. Therefore, up to the constant factor associated with the real and imaginary parts, the number of spectral parameters scales as
\begin{align*}
    MC^{2}.
\end{align*}
Our higher-order spectral convolution of order $m$ augments this linear spectral convolution with $m$ additional channel-mixing matrices, each of size $C \times C$. These matrices are shared across spatial locations and Fourier modes, and therefore contribute only
\begin{align*}
    mC^{2}
\end{align*}
additional parameters. The resulting number of parameters per spectral layer is thus
\begin{align*}
    MC^{2} + mC^{2}
    =
    (M+m)C^{2}.
\end{align*}
Hence, the parameter count grows only linearly with the interaction order $m$. Since the order $m$ is small compared with the number of retained modes $M$, the extra parameters introduced by higher-order mixing are negligible relative to the parameters already present in a standard FNO spectral convolution.

The depthwise variant used in the main experiments further reduces the number of parameters by removing channel mixing inside the Fourier-domain convolution. In this case, each retained Fourier mode has one complex-valued coefficient per channel, so the spectral convolution contains
\begin{align*}
    MC
\end{align*}
parameters instead of $MC^{2}$. The higher-order component is unchanged, since it still uses $m$ shared $C \times C$ matrices. Therefore, the total number of parameters per depthwise higher-order spectral layer becomes
\begin{align*}
    MC + mC^{2}.
\end{align*}
This shows that, for both the original spectral convolution and its more modern depthwise variant, the higher-order extension introduces only a negligible number of additional learnable parameters.

\subsection{Wall-Clock Times and Memory Usage}\label{app:wall_clock_times_and_memory_usage}
We report in this section Wall-Clock time recorded for one-epoch inference and training and peak memory usage on $3$ datasets from the standard benchmarks of Table~\ref{tab:main_result}. The experiments were conducted on a single NVIDIA GPU A100 with 80GB of memory. All models were tested with the hyperparameters reported in Section~\ref{app:implementation_details}. We report quantitative results in Table~\ref{tab:efficiency} and hystograms of raw values and normalized values with respect to FNO's performance in, respectively, Figure~\ref{fig:efficiency_plot} and Figure~\ref{fig:efficiency_plot_normalized}.

\paragraph{Baselines}
For the efficiency analysis, we consider one representative model from each main architectural category. As a frequency-based baseline, we use FNO, since it is the model on which our proposed architecture builds. As a state-space model, we use LaMO, which is also the strongest competitor in the experiments reported in Table~\ref{tab:main_result}. Finally, as a Transformer-based baseline, we use Transolver, since it is among the most effective Transformer architectures for physics problems and is also designed with computational efficiency in mind.

\paragraph{Results}The experiments show that HO-FNO largely preserves the computational profile of FNO. Across the three benchmarks, the training overhead of HO-FNO with respect to FNO remains limited: $+1.8\%$ on Navier--Stokes, $+6.2\%$ on Airfoil, and $+10.4\%$ on Pipe. A similar trend holds at inference time, where HO-FNO is almost identical to FNO on Navier--Stokes $(+0.4\%)$ and remains close on Airfoil and Pipe, with overheads of $+4.8\%$ and $+15.9\%$, respectively. This indicates that the higher-order spectral layer adds only a modest computational cost compared with the corresponding FNO backbone.

Compared with the non-Fourier baselines, HO-FNO is substantially more efficient, especially on Airfoil and Pipe. On Pipe, LaMO and Transolver require more than three times the training and inference time of FNO, with increases of $+215.9\%$ in training time and $+258.0\%$ in inference time, whereas HO-FNO remains within $+10.4\%$ and $+15.9\%$. On Airfoil, both LaMO and Transolver more than double the inference cost relative to FNO, while HO-FNO adds only $+4.8\%$. The memory overhead of HO-FNO is also moderate: it is almost negligible on Navier--Stokes $(+0.2\%)$ and remains comparable to the other efficient baselines on Airfoil and Pipe. Overall, these results show that HO-FNO retains the main efficiency advantages of FNO, namely fast training, fast inference, and moderate memory consumption.
\begin{figure*}[!h]
    \centering
    \setlength{\tabcolsep}{5pt}
    \definecolor{emerald}{RGB}{80,200,120}
    \definecolor{cobalt}{RGB}{0,71,171}

    \begin{tabular}{ccc}
    \begin{tikzpicture}
    \begin{axis}[
        width=4.7cm, height=4cm,
        ybar,
        bar width=5pt,
        ylabel=\tiny \textbf{Training time (s/epoch)},
        symbolic x coords={NS, Airfoil, Pipe},
        xtick=data,
        x tick label style={
            font=\tiny,
            rotate=0,
            anchor=center
        },
        ymin=0, 
        enlarge x limits=0.25,
        enlarge y limits=0.05,
        yticklabel style={
            /pgf/number format/.cd,
            fixed,
            precision=0
        }
    ]
    \addplot[fill=orange] coordinates {
        (NS,117.10)
        (Airfoil,20.33)
        (Pipe,64.25)
    };

    \addplot[fill=emerald] coordinates {
        (NS, 123.63)
        (Airfoil, 12.01)
        (Pipe, 20.34)
    };

    \addplot[fill=cobalt] coordinates {
        (NS, 125.81)
        (Airfoil, 12.76)
        (Pipe, 22.45)
    };

    \addplot[fill=violet] coordinates {
        (NS,130.11)
        (Airfoil,22.45)
        (Pipe,64.25)
    };
    \end{axis}
    \end{tikzpicture}
    &
    \begin{tikzpicture}
    \begin{axis}[
        width=4.7cm, height=4cm,
        ybar,
        bar width=5pt,
        ylabel=\tiny \textbf{Inference time (s/epoch)},
        symbolic x coords={NS, Airfoil, Pipe},
        xtick=data,
        x tick label style={
            font=\tiny,
            rotate=0,
            anchor=center
        },
        ymin=0,
        enlarge x limits=0.25,
        enlarge y limits=0.05,
        yticklabel style={
            /pgf/number format/.cd,
            fixed,
            precision=0
        }
    ]
    \addplot[fill=orange] coordinates {
        (NS,10.17)
        (Airfoil,1.77)
        (Pipe,5.62)
    };

    \addplot[fill=emerald] coordinates {
        (NS, 8.28)
        (Airfoil, 0.83)
        (Pipe, 1.57)
    };

    \addplot[fill=cobalt] coordinates {
        (NS, 8.31)
        (Airfoil, 0.87)
        (Pipe, 1.82)
    };

    \addplot[fill=violet] coordinates {
        (NS,12.72)
        (Airfoil,1.77)
        (Pipe,5.62)
    };
    \end{axis}
    \end{tikzpicture}
    &
    \begin{tikzpicture}
    \begin{axis}[
        width=4.7cm, height=4cm,
        ybar,
        bar width=5pt,
        ylabel=\tiny \textbf{Memory (GB)},
        symbolic x coords={NS, Airfoil, Pipe},
        xtick=data,
        x tick label style={
            font=\tiny,
            rotate=0,
            anchor=center
        },
        ymin=0, 
        enlarge x limits=0.25,
        enlarge y limits=0.05,
        yticklabel style={
            /pgf/number format/.cd,
            fixed,
            precision=0
        }
    ]
    \addplot[fill=orange] coordinates {
        (NS,28.45)
        (Airfoil,16.44)
        (Pipe,10.73)
    };

    \addplot[fill=emerald] coordinates {
        (NS, 45.85)
        (Airfoil, 13.85)
        (Pipe, 8.23)
    };

    \addplot[fill=cobalt] coordinates {
        (NS, 45.94)
        (Airfoil, 16.36)
        (Pipe, 9.77)
    };

    \addplot[fill=violet] coordinates {
        (NS,72.65)
        (Airfoil, 16.06)
        (Pipe,10.07)
    };
    \end{axis}
    \end{tikzpicture}
    \end{tabular}
    \vspace{0.2cm}

    \begin{tikzpicture}
    \begin{axis}[
        hide axis,
        legend columns=-1,
        axis lines=none,
        ticks=none,
        legend style={
            draw=none,
            column sep=2ex,
            font=\scriptsize
        },
        xmin=0, xmax=1, ymin=0, ymax=1
    ]
    \addlegendimage{only marks, mark=*, color=orange}
    \addlegendentry{\textbf{LaMO}}

    \addlegendimage{only marks, mark=*, color=emerald}
    \addlegendentry{\textbf{FNO}}

    \addlegendimage{only marks, mark=*, color=cobalt}
    \addlegendentry{\textbf{HO-FNO}}

    \addlegendimage{only marks, mark=*, color=violet}
    \addlegendentry{\textbf{Transolver}}
    \end{axis}
    \end{tikzpicture}

    \caption{Efficiency comparison on NS, Airfoil, and Pipe on a single Nvidia A100 GPU: (\textbf{Left}) training time, (\textbf{Middle}) inference time, and (\textbf{Right}) memory consumption per epoch. FNO is has the modern backbone}
    \label{fig:efficiency_plot}
\end{figure*}

\begin{figure*}[!h]
    \centering
    \setlength{\tabcolsep}{5pt}
    \definecolor{emerald}{RGB}{80,200,120}
    \definecolor{cobalt}{RGB}{0,71,171}

    \begin{tabular}{ccc}
    \begin{tikzpicture}
    \begin{axis}[
        width=4.7cm, height=4cm,
        ybar,
        bar width=5pt,
        ylabel=\tiny \textbf{Normalized training time},
        symbolic x coords={NS, Airfoil, Pipe},
        xtick=data,
        x tick label style={
            font=\tiny,
            rotate=0,
            anchor=center
        },
        ymin=0,
        enlarge x limits=0.25,
        enlarge y limits=0.05,
        yticklabel style={
            /pgf/number format/.cd,
            fixed,
            precision=1
        }
    ]
    \addplot[fill=orange] coordinates {
        (NS,0.95)
        (Airfoil,1.69)
        (Pipe,3.16)
    };

    \addplot[fill=emerald] coordinates {
        (NS,1.00)
        (Airfoil,1.00)
        (Pipe,1.00)
    };

    \addplot[fill=cobalt] coordinates {
        (NS,1.02)
        (Airfoil,1.06)
        (Pipe,1.10)
    };

    \addplot[fill=violet] coordinates {
        (NS,1.05)
        (Airfoil,1.87)
        (Pipe,3.16)
    };
    \end{axis}
    \end{tikzpicture}
    &
    \begin{tikzpicture}
    \begin{axis}[
        width=4.7cm, height=4cm,
        ybar,
        bar width=5pt,
        ylabel=\tiny \textbf{Normalized inference time},
        symbolic x coords={NS, Airfoil, Pipe},
        xtick=data,
        x tick label style={
            font=\tiny,
            rotate=0,
            anchor=center
        },
        ymin=0,
        enlarge x limits=0.25,
        enlarge y limits=0.05,
        yticklabel style={
            /pgf/number format/.cd,
            fixed,
            precision=1
        }
    ]
    \addplot[fill=orange] coordinates {
        (NS,1.23)
        (Airfoil,2.13)
        (Pipe,3.58)
    };

    \addplot[fill=emerald] coordinates {
        (NS,1.00)
        (Airfoil,1.00)
        (Pipe,1.00)
    };

    \addplot[fill=cobalt] coordinates {
        (NS,1.00)
        (Airfoil,1.05)
        (Pipe,1.16)
    };

    \addplot[fill=violet] coordinates {
        (NS,1.54)
        (Airfoil,2.13)
        (Pipe,3.58)
    };
    \end{axis}
    \end{tikzpicture}
    &
    \begin{tikzpicture}
    \begin{axis}[
        width=4.7cm, height=4cm,
        ybar,
        bar width=5pt,
        ylabel=\tiny \textbf{Normalized memory},
        symbolic x coords={NS, Airfoil, Pipe},
        xtick=data,
        x tick label style={
            font=\tiny,
            rotate=0,
            anchor=center
        },
        ymin=0,
        enlarge x limits=0.25,
        enlarge y limits=0.05,
        yticklabel style={
            /pgf/number format/.cd,
            fixed,
            precision=1
        }
    ]
    \addplot[fill=orange] coordinates {
        (NS,0.62)
        (Airfoil,1.19)
        (Pipe,1.30)
    };

    \addplot[fill=emerald] coordinates {
        (NS,1.00)
        (Airfoil,1.00)
        (Pipe,1.00)
    };

    \addplot[fill=cobalt] coordinates {
        (NS,1.00)
        (Airfoil,1.18)
        (Pipe,1.19)
    };

    \addplot[fill=violet] coordinates {
        (NS,1.58)
        (Airfoil,1.16)
        (Pipe,1.22)
    };
    \end{axis}
    \end{tikzpicture}
    \end{tabular}

    \vspace{0.2cm}

    \begin{tikzpicture}
    \begin{axis}[
        hide axis,
        legend columns=-1,
        axis lines=none,
        ticks=none,
        legend style={
            draw=none,
            column sep=2ex,
            font=\scriptsize
        },
        xmin=0, xmax=1, ymin=0, ymax=1
    ]
    \addlegendimage{only marks, mark=*, color=orange}
    \addlegendentry{\textbf{LaMO}}

    \addlegendimage{only marks, mark=*, color=emerald}
    \addlegendentry{\textbf{FNO}}

    \addlegendimage{only marks, mark=*, color=cobalt}
    \addlegendentry{\textbf{HO-FNO}}

    \addlegendimage{only marks, mark=*, color=violet}
    \addlegendentry{\textbf{Transolver}}
    \end{axis}
    \end{tikzpicture}

    \caption{Efficiency comparison on NS, Airfoil, and Pipe after normalization with respect to FNO for each dataset and metric on a single Nvidia A100 GPU. A value of $1$ corresponds to the computational cost of FNO; values below $1$ indicate improved efficiency relative to FNO, whereas values above $1$ indicate higher computational cost.}
    \label{fig:efficiency_plot_normalized}
\end{figure*}

\begin{table*}[!h]
\centering
\small
\setlength{\tabcolsep}{6pt}
\begin{tabular}{c|c|ccc|ccc|ccc}
\toprule
& & \multicolumn{3}{c|}{\textbf{Training time (s)}} 
  & \multicolumn{3}{c|}{\textbf{Inference time (s)}} 
  & \multicolumn{3}{c}{\textbf{Memory (GB)}} \\
\textbf{Dataset} & \textbf{Model} 
& Value & $\Delta$ (\%) 
&  & Value & $\Delta$ (\%) 
&  & Value & $\Delta$ (\%) \\
\midrule

\multirow{4}{*}{NS}
& LaMO       & 117.10 & -5.3  && 10.17 & +22.8 && 28.45 & -38.0 \\
& FNO        & 123.63 & 0.0   && 8.28  & 0.0   && 45.85 & 0.0 \\
& HO-FNO     & 125.81 & +1.8  && 8.31  & +0.4  && 45.94 & +0.2 \\
& Transolver & 130.11 & +5.2  && 12.72 & +53.6 && 72.65 & +58.5 \\
\midrule

\multirow{4}{*}{Airfoil}
& LaMO       & 20.33 & +69.2  && 1.77 & +113.3 && 16.44 & +18.7 \\
& FNO        & 12.01 & 0.0    && 0.83 & 0.0    && 13.85 & 0.0 \\
& HO-FNO     & 12.76 & +6.2   && 0.87 & +4.8   && 16.36 & +18.1 \\
& Transolver & 22.45 & +86.9  && 1.77 & +113.3 && 16.06 & +16.0 \\
\midrule

\multirow{4}{*}{Pipe}
& LaMO       & 64.25 & +215.9 && 5.62 & +258.0 && 10.73 & +30.4 \\
& FNO        & 20.34 & 0.0    && 1.57 & 0.0    && 8.23  & 0.0 \\
& HO-FNO     & 22.45 & +10.4  && 1.82 & +15.9  && 9.77  & +18.7 \\
& Transolver & 64.25 & +215.9 && 5.62 & +258.0 && 10.07 & +22.4 \\
\bottomrule
\end{tabular}
\caption{Comparison of computational efficiency across datasets. For each metric, we report the absolute value and the relative variation (in \%) with respect to FNO. The relative variation, $\Delta$, is computed relative to FNO as
$\Delta(\%) = 100 \times (\mathrm{Value}_{\mathrm{model}} - \mathrm{Value}_{\mathrm{FNO}})/\mathrm{Value}_{\mathrm{FNO}}$. Negative values indicate an improvement over FNO (lower is better), while positive values indicate a degradation.
}
\label{tab:efficiency}
\end{table*}

\newpage
\section{Spectral Analysis}
\label{app:spectral_analysis}

In this section, we complement the aggregate test errors with a frequency-domain analysis of the predictions. For each benchmark, we decompose the error into spectral bands and report the relative error on the full spectrum, as well as on low-, medium-, and high-frequency modes. This analysis is useful because two models with similar global relative error may behave differently across scales: low frequencies typically capture the dominant large-scale structure of the solution, whereas medium and high frequencies encode sharper spatial variations and fine-scale details.

Table~\ref{tab:spectral_anaylisis_table} reports the comparison between FNO and HO-FNO on Navier--Stokes, Airfoil, and Pipe. Across the three datasets, HO-FNO improves over FNO on the full-spectrum error, with relative gains of $54.10\%$ on Navier--Stokes, $26.1\%$ on Airfoil, and $22.7\%$ on Pipe. The improvement is also visible across most frequency regimes, indicating that the higher-order spectral convolution does not only improve the dominant low-frequency component, but also affects the representation of medium- and high-frequency structures.

On Airfoil and Pipe, the gains are particularly consistent across the spectrum. HO-FNO reduces the full error and improves the low-, medium-, and high-frequency errors, suggesting that the proposed higher-order interactions help recover both the global flow structure and finer spatial variations around the geometry. On Navier--Stokes, HO-FNO provides a large improvement on the full error and a strong reduction in the low-frequency regime. The medium- and high-frequency errors are comparable to FNO, which suggests that the main gain on this benchmark comes from a more accurate reconstruction of the energetically dominant large-scale modes.

Figures~\ref{fig:spectrum_airfoil}, \ref{fig:spectrum_pip}, and \ref{fig:spectrum_ns} provide a more detailed visualization of this behavior. Each row compares the spectral profiles of FNO and HO-FNO on one benchmark. The plots show that HO-FNO generally follows the target spectrum more closely, especially in regimes where FNO either underestimates or overestimates the energy contained in specific frequency bands. This supports the interpretation that higher-order spectral mixing improves the model's ability to represent nonlinear interactions between modes, rather than merely improving the average prediction error.

Overall, the spectral analysis confirms that the gains of HO-FNO are not restricted to the physical-space relative error. The proposed higher-order spectral convolution leads to better frequency-wise behavior across different PDE benchmarks, with particularly clear improvements on Airfoil and Pipe and substantial low-frequency gains on Navier--Stokes.

\begin{table}[!h]
\centering
\resizebox{\textwidth}{!}{
\begin{tabular}{cc|ccc}
\hline
Dataset & Regime & FNO & HO-FNO & Improvement over FNO \\
\hline

\multirow{4}{*}{Navier--Stokes}
& Full   & 0.0976   & 0.0449   & 54.10\% \\
& Low    & 0.145982 & 0.055541 & 61.95\% \\
& Medium & 0.282080 & 0.206717 & 26.72\% \\
& High   & 0.458579 & 0.372876 & 18.69\% \\
\hline

\multirow{4}{*}{Airfoil}
& Full   & 0.0065   & 0.0051   & 26.1\% \\
& Low    & 0.025199 & 0.017370 & 31.07\% \\
& Medium & 0.070703 & 0.045756 & 35.28\% \\
& High   & 0.146905 & 0.121412 & 17.35\% \\
\hline

\multirow{4}{*}{Pipe}
& Full   & 0.0069   & 0.0054   & 32.7\% \\
& Low    & 0.191112 & 0.141260 & 26.1\% \\
& Medium & 1.040052 & 1.031456 & 0.83\% \\
& High   & 1.253831 & 1.182214 & 5.71\% \\

\hline
\end{tabular}}
\caption{Comparison of FNO and HO-FNO across datasets and regimes (Full, Low, Medium, High).}
\label{tab:spectral_anaylisis_table}
\end{table}

\begin{figure}[!h] \centering\includegraphics[width=\columnwidth]{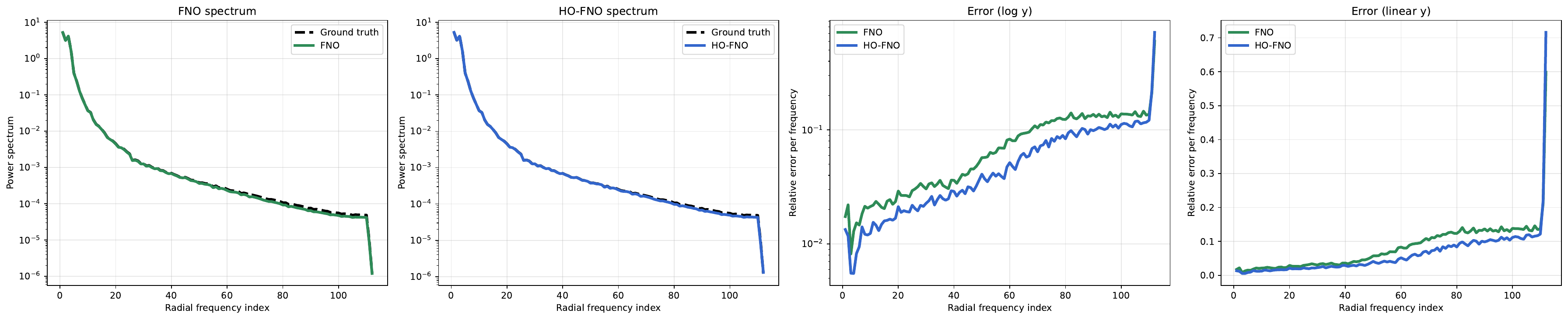}
    \caption{Spectrum of the predictions of FNO and HO-FNO (order 2) on the Airfoil benchmark.
    } 
    \vspace{-1em}
    \label{fig:spectrum_airfoil}
\end{figure}

\begin{figure}[!h]
    \centering\includegraphics[width=\columnwidth]{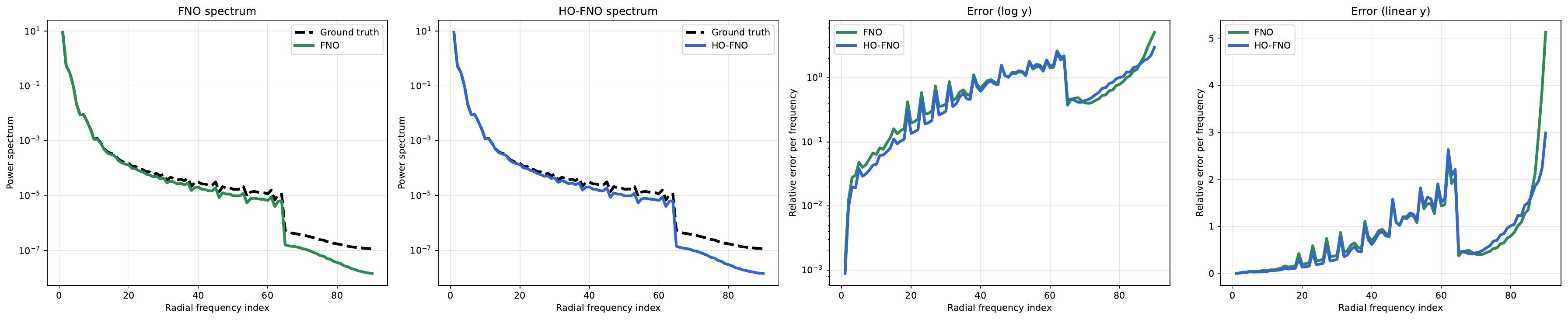}
    \caption{Spectrum of the predictions of FNO and HO-FNO (order 2) on the Pipe benchmark.
    } 
    \vspace{-1em}
    \label{fig:spectrum_pip}
\end{figure}

\begin{figure}[!h]
    \centering\includegraphics[width=\columnwidth]{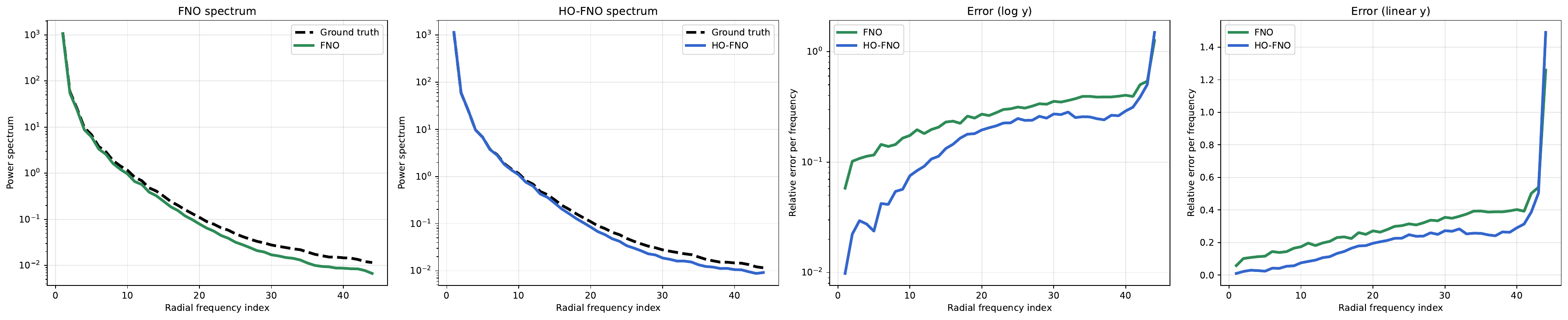}
    \caption{Spectrum of the predictions of FNO and HO-FNO (order 2) on the Navier--Stokes benchmark.
    } 
    \vspace{-1em}
    \label{fig:spectrum_ns}
\end{figure}

\newpage
\section{Implementation Details}
\label{app:implementation_details}
This section presents a comprehensive overview of the experimental setup, covering benchmark datasets, evaluation metrics,
and implementation details to ensure a rigorous and reproducible analysis.

\label{app:xp_details}
This section presents a comprehensive overview of the experimental setup, covering benchmark datasets, evaluation metrics,
and implementation details to ensure a rigorous and reproducible analysis.

\begin{table}[!h]
    \caption{Training and model configurations of HO-FNO relative to the experiments in Table~\ref{tab:main_result}. Training configurations are inspired by previous works in order to compare with the baselines at similar parameter count and capacity. For the Darcy dataset, we adopt an additional spatial gradient regularization term $l_{\mathrm{gdl}}$ following ONO and LaMO.}
    \label{tab:training_detail}
    \centering
    \resizebox{0.9\textwidth}{!}{%
    \begin{threeparttable}
        \begin{sc}
            \renewcommand{\multirowsetup}{\centering}
            \renewcommand{\arraystretch}{1.4}
            \setlength{\tabcolsep}{6pt}
            \begin{tabular}{llcccccc}
                \toprule
                \multicolumn{2}{c}{\multirow{2}{*}{Configuration}} 
                & \multicolumn{6}{c}{Benchmarks} \\
                \cmidrule(lr){3-8}
                & & Darcy & Navier--Stokes & Elasticity & Plasticity & Airfoil & Pipe \\
                \midrule
                \multirow{7}{*}{\rotatebox[origin=c]{90}{Training}}
                    & Loss Function 
                    & $l_2 + 0.1l_{\mathrm{gdl}}$ 
                    & \multicolumn{5}{c}{Relative $l_2$} \\
                    \cmidrule(lr){2-8}

                    & Epochs 
                    & \multicolumn{6}{c}{500} \\
                    \cmidrule(lr){2-8}

                    & Initial LR 
                    & $5 \cdot 10^{-4}$ 
                    & \multicolumn{5}{c}{$1 \cdot 10^{-3}$} \\
                    \cmidrule(lr){2-8}

                    & Final LR 
                    & \multicolumn{6}{c}{$5 \cdot 10^{-5}$} \\
                    \cmidrule(lr){2-8}

                    & Optimizer 
                    & \multicolumn{6}{c}{AdamW} \\
                    \cmidrule(lr){2-8}

                    & Batch Size 
                    & $8$ & $6$ & \multicolumn{4}{c}{$8$} \\
                    \cmidrule(lr){2-8}

                    & Scheduler 
                    & \multicolumn{6}{c}{CosineAnnealingLR} \\
                \midrule

                \multirow{4}{*}{\rotatebox[origin=c]{90}{Architecture}}
                    & Layers 
                    & \multicolumn{6}{c}{8} \\
                    \cmidrule(lr){2-8}

                    & Embedding Dim 
                    & $128$ & $256$ & \multicolumn{4}{c}{$128$} \\
                    \cmidrule(lr){2-8}

                    & Retained modes on the $x$-axis 
                    & $32$ & $16$ & $10$ & $24$ & $50$ & $50$ \\
                    \cmidrule(lr){2-8}

                    & Retained modes on the $y$-axis 
                    & $16$ & $16$ & $10$ & $12$ & $24$ & $24$ \\
                \bottomrule
            \end{tabular}
        \end{sc}
    \end{threeparttable}%
    }
\end{table}

\subsection{Training Details}\label{app:training_details}
In Table~\ref{tab:main_result}, the datasets are split into training and testing sets by configurations adopted from prior works~\citep{GEO-FNO} and described in Table~\ref{tab:benchmark_geofno}

\begin{table*}[!h]
\centering
\caption{Benchmark overview for Table~\ref{tab:main_result}, following the settings of~\citep{GEO-FNO}, with task description, dataset size, and input--output tensor formats.}
\label{tab:benchmark_geofno}
\resizebox{\textwidth}{!}{
\begin{tabular}{lccccccc}
\toprule
\textbf{Overview} 
& \textbf{Elasticity} 
& \textbf{Plasticity} 
& \textbf{Navier--Stokes} 
& \textbf{Airfoil} 
& \textbf{Pipe} 
& \textbf{Darcy}
& \textbf{Polynomial-Source Poisson} \\
\midrule

Task 
& Estimate Stress 
& Model Deformation 
& Model Viscous Flow 
& Estimate Velocity 
& Estimate Velocity 
& Estimate Pressure
& Solve Nonlinear Poisson \\

Input 
& Material Structure 
& External Force 
& Past Velocity 
& Structure 
& Structure 
& Porous Medium
& $p$ Input Fields \\

Output 
& Inner Stress 
& Mesh Displacement 
& Future Velocity 
& Mach Number 
& Fluid Velocity 
& Fluid Pressure
& Poisson Solution \\

Train Set Size 
& 1000 
& 900 
& 1000 
& 1000 
& 1000 
& 1000
& 1000 \\

Test Set Size 
& 200 
& 80 
& 200 
& 100 
& 200 
& 200
& 200 \\

Input Tensor 
& $(/,972,2)$ 
& $(/,101 \times 31,2)$ 
& $(10,64 \times 64,1)$ 
& $(/,221 \times 51,2)$ 
& $(/,129 \times 129,2)$ 
& $(/,85 \times 85,1)$
& $(/,64 \times 64,p)$ \\

Output Tensor 
& $(/,972,1)$ 
& $(20,101 \times 31,4)$ 
& $(10,64 \times 64,1)$ 
& $(/,221 \times 51,1)$ 
& $(/,129 \times 129,1)$ 
& $(/,85 \times 85,1)$
& $(/,64 \times 64,1)$ \\

\bottomrule
\end{tabular}
}
\end{table*}

Each dataset corresponds to a specific task and follows these established settings. For the results reported in Table~\ref{tab:main_result}, we train for $500$ epochs with AdamW~\citep{AdamW} and a cosine-annealing learning-rate schedule~\citep{CosineAnneal}, using the mean relative $\ell_2$ error (nRMSE) both as training objective and evaluation metric. The training configurations include the use of a relative \(l_2\) loss term across all datasets, with an additional spatial gradient regularization term \(l_{\mathrm{gdl}}\) incorporated for the Darcy dataset as \( l_2 + 0.1l_{\mathrm{gdl}} \) following ONO \cite{ONO}. Further specifics on the dataset splits, tasks, and training configurations are available in the respective tables for additional clarity for each benchmark dataset. A comprehensive summary of the HO-FNO hyperparameters is provided in Table~\ref{tab:training_detail}. The table shows that all baselines and benchmarks were trained using consistent configurations, ensuring that HO-FNO maintains a comparable number of parameters to LaMO and fewer than the transformer-based baselines.

In Table~\ref{tab:results_spherical} and Table~\ref{tab:results1M}, we train our model and the baselines for $500$ epochs with AdamW~\citep{AdamW} and a cosine-annealing learning-rate schedule~\citep{CosineAnneal}, using the  single-step $\ell_2$ error (RMSE)as training objective and single-step $\ell_2$ error (MSE) and relative $\ell_2$ error (nRMSE) and rollout relative $\ell_2$ error as evaluation metric. 

For the experiments reported in Table~\ref{tab:results1M} and Table~\ref{tab:results_spherical}, we adopt a single-step training loss, as multi-step supervision did not lead to consistent performance gains while substantially increasing computational and memory costs. Moreover, for the Navier–Stokes datasets, unlike the experiments in Table~\ref{tab:main_result} where we follow the setting of~\citep{FNO} and perform autoregressive prediction using the past $10$ timesteps, in Table~\ref{tab:results1M} we instead predict future states based solely on the current timestep. This choice simplifies training and reduces temporal dependencies during inference. Although leveraging past information may seem advantageous, prior work suggests an opposite trend: while historical context can improve short-term accuracy, it often compromises the stability and reliability of long-term rollouts due to error accumulation and distribution shift~\cite{pde_refiner, moe}. In our experiments, this simpler formulation resulted in more stable optimization and more robust long-horizon predictions. However, we evaluate predictions starting from timestep $10$ to match the test set of~\cite{FNO} and and ensure comparability with prior work, even though all baselines reported in this paper are retrained under the same protocol. 

For the experiments on the Poisson equation with polynomial forcing discussed in Section~\ref{subsec: poly_Poisson}, all FNO and HO-FNO variants are trained for $100$ epochs using the AdamW optimizer, with weight decay $10^{-5}$ and learning rate $2 \times 10^{-3}$. All models retain $16$ Fourier modes along each spatial axis, i.e., $\texttt{modes1}=\texttt{modes2}=16$, and use an MLP expansion factor of $2$ in each block.
\subsection{Backbone Details}\label{app:backbone_details}
We discuss in this section the differences in the backbone of the original FNO and the more modern backbone adopted in our experiments. Figure~\ref{fig:backbones} offers a visualization of the $2$ versions.

\textbf{The original FNO block~\citeyearpar{FNO}} has a simple structure: it combines a spectral convolution with a learnable residual connection, parameterized by a pointwise linear layer. 

Motivated by the success of modern architectures, recent works~\citep{multi-grid_fno} have adopted FNO backbones closer in structure to Transformers and modern CNNs. We follow this design philosophy and use a backbone similar to the one introduced for FNO variants in~\citep{multi-grid_fno}, which is also currently adopted in the Neural Operator Library~\citep{neural_operator_library}.

More specifically, we use a standard \textbf{pre-norm residual block}, as commonly employed in modern Transformer architectures. RMSNorm is applied before both the mixer and the feedforward network (FFN). The mixer corresponds either to a standard spectral convolution or to one of our higher-order spectral convolution variants. The FFN is implemented as a two-layer MLP with an intermediate expansion factor.

Using a more engineered backbone is important for a fair comparison with recent Transformer- and state-space-based neural operators, which already benefit from these architectural improvements.
\begin{figure}[!h]
    \centering
    \includegraphics[width=0.3\linewidth]{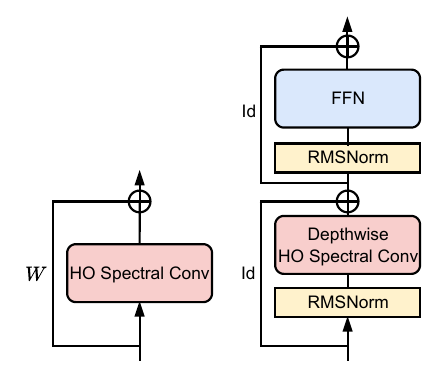}

    \caption{Comparison of the network blocks used in our experiments. We compare the standard FNO block~\cite{FNO} with our improved backbone based on a pre-norm residual architecture. The feed-forward network (FFN) consists of a two-layer MLP with an expansion factor of $4$ and GELU nonlinearity. Each sub-block applies RMS normalization before the Mixer and the FFN, followed by a residual connection.}
    \label{fig:backbones}
    \vspace{-1em}
\end{figure}

\subsection{Baselines}\label{app:baselines}

We compare HO-FNO against classical convolutional architectures, transformer-based neural operators, state-space models, and frequency-domain neural operators. We briefly describe below the baselines used in the main comparison.

\paragraph{UNet \citeyearpar{UNET}.}
UNet is an encoder--decoder convolutional architecture with skip connections between feature maps at matching resolutions. The downsampling path extracts increasingly coarse features, while the upsampling path reconstructs high-resolution outputs using both coarse and fine-scale information.

\paragraph{ResNet \citeyearpar{RESNET}.}
ResNet introduces residual blocks of the form $x \mapsto x + F(x)$, which ease the optimization of deep convolutional networks. In our setting, it serves as a purely local convolutional baseline without explicit spectral or operator-learning structure.

\paragraph{Swin Transformer \citeyearpar{SWIN}.}
Swin Transformer replaces global self-attention with attention inside shifted local windows. The window-shifting mechanism allows information exchange across neighboring windows while keeping attention cost manageable on dense grids.

\paragraph{DeepONet \citeyearpar{DEEPONET}.}
DeepONet represents operators through a branch--trunk factorization: the branch network encodes sensor values of the input function, while the trunk network encodes the query coordinates. The prediction is obtained by combining these two representations, typically through an inner product.

\paragraph{Galerkin Transformer \citeyearpar{Galerkin_Trans}.}
Galerkin Transformer modifies standard attention by using a linearized Galerkin-type attention mechanism. Instead of computing full quadratic self-attention, it projects features into a kernelized representation, improving scalability for operator learning.

\paragraph{HT-Net \citeyearpar{Ht-net}.}
HT-Net combines transformer blocks with hierarchical token representations for PDE modeling. It introduces coarse-to-fine feature processing so that attention can capture long-range interactions without operating only at the finest resolution.

\paragraph{OFormer \citeyearpar{OFormer}.}
OFormer introduces an encoder--decoder transformer for operator learning, where the encoder processes input function values and the decoder evaluates the solution at target coordinates. It is designed to handle mappings between functions sampled on possibly different discretizations.

\paragraph{GNOT \citeyearpar{GNOT}.}
GNOT extends transformer-based neural operators to multiple input functions and geometries by using heterogeneous attention over discretized function values and query points. It explicitly incorporates geometric information into the attention-based operator representation.

\paragraph{FactFormer \citeyearpar{FactFormer}.}
FactFormer introduces factorized attention for neural operators by decomposing multidimensional attention into lower-dimensional attention operations. This reduces the cost of global mixing while preserving interactions across spatial dimensions.

\paragraph{ONO \citeyearpar{ONO}.}
ONO introduces an orthogonal neural operator architecture, using orthogonal basis representations and attention-style mixing to improve stability and operator approximation. It is a strong recent baseline among transformer-inspired neural operators.

\paragraph{Transolver \citeyearpar{transolver}.}
Transolver introduces physics-attention, where spatial tokens are softly assigned to a smaller set of learned physical states. Attention is then computed in this latent physical-state space, reducing cost while capturing global interactions.

\paragraph{Erwin \citeyearpar{erwin}.}
Erwin introduces ball attention, restricting attention to local neighborhoods defined by spatial proximity. This preserves geometric locality and improves scalability compared with full attention over all mesh or grid points.

\paragraph{Transolver++ \citeyearpar{transolver++}.}
Transolver++ improves Transolver through a refined physics-attention design and stronger architectural components. It keeps the idea of latent physical states while enhancing the efficiency and expressivity of the attention blocks.

\paragraph{MSPT \citeyearpar{mspt}.}
MSPT introduces a multi-scale point transformer architecture for PDE modeling. It processes point or mesh features across several spatial resolutions and uses attention-based interactions to exchange information between local and coarse representations.

\paragraph{LaMO \citeyearpar{LaMO}.}
LaMO replaces transformer attention with a Mamba-style state-space mixing mechanism adapted to neural operators. It models long-range dependencies through selective state-space updates, providing a strong alternative to attention-based operator models.

\paragraph{WMT \citeyearpar{WMO}.}
WMT uses wavelet-domain representations instead of purely Fourier-domain representations. The wavelet transform provides both spatial localization and frequency information, making it suitable for multiscale PDE features.

\paragraph{U-FNO \citeyearpar{U-FNO}.}
U-FNO augments Fourier layers with a UNet-like path. The Fourier blocks capture global spectral interactions, while the UNet component improves local and multiscale feature extraction.

\paragraph{FNO \citeyearpar{FNO, GEO-FNO}.}
FNO applies global convolution in the Fourier domain by truncating to a fixed number of low-frequency modes and learning complex-valued weights on these modes. It is the main first-order spectral baseline and the closest reference model to HO-FNO.

\paragraph{U-NO \citeyearpar{UNO}.}
U-NO combines neural operator layers with a U-shaped encoder--decoder structure. It applies operator blocks at multiple resolutions, allowing the model to mix global operator learning with multiscale spatial reconstruction.

\paragraph{F-FNO \citeyearpar{F-FNO}.}
F-FNO factorizes the Fourier convolution across spatial dimensions. This reduces the parameter and computational cost of spectral mixing while retaining the main Fourier-domain inductive bias of FNO.

\paragraph{LSM \citeyearpar{LSM}.}
LSM introduces a latent spectral representation for operator learning. Instead of performing all computations directly on the full grid, it maps the input to a latent space where spectral mixing is applied more efficiently.

\paragraph{FNO with modern backbone.}
We include an internal baseline that uses the same modern backbone as HO-FNO but keeps the spectral convolution first-order. This separates the gain due to backbone design from the gain due to higher-order spectral mixing.

\subsection{Metrics description} \label{app:metrics}

We evaluate the predictive performance of our models using the following metrics:

\paragraph{Mean Squared Error (MSE).}  
Given ground truth $y \in \mathbb{R}^d$ and prediction $\hat{y} \in \mathbb{R}^d$, the MSE, sometimes called $L_2$-norm, is defined as
\begin{align}
    \text{MSE}(y,\hat{y}) = \frac{1}{d} \sum_{i=1}^d \lVert y_i - \hat{y}_i \rVert^2.
\end{align}
This metric measures the average squared deviation between predictions and targets. It is numerically stable and therefore commonly used as a training loss, as we do in our experiments. At test time, MSE is also informative since it provides a physically meaningful error measure in the original space.  

However, MSE scales quadratically with multiplicative factors applied to $y$ and $\hat{y}$, and it is affected by the discretization of the domain. As a result, it is not directly comparable across different datasets or resolutions. For this reason, it is often preferred to also report the Normalized Mean Squared Error (NRMSE) at evaluation time.

\paragraph{Normalized Mean Squared Error (nRMSE).}  
The nRMSE, often called relative $L_2$-norm, is the MSE normalized by the norm of the target:
\begin{equation}\label{eq:nRMSE_app}
    \text{NRMSE}(y,\hat{y}) = \frac{1}{d} \sum_{i=1}^d \frac{\lVert y_i - \hat{y}_i\rVert^2}{\lVert y_i \rVert^2}.
\end{equation}
Unlike MSE, which reports squared units, RMSE is expressed in the same units as the target variable. 

This makes the error magnitude directly comparable to the physical scale of the data, providing a more intuitive sense of accuracy therefore providing a fair comparisons across datasets and resolutions.

\paragraph{Rollout Error.}  \label{app:rollout}
Since we deal with time-dependent systems, we evaluate multi-step predictions by iteratively feeding model outputs back as inputs. The rollout error is computed as the average of a choosen loss , $\mathcal{L}$, across all timesteps:
\begin{align}
    \text{Rollout}(y_{1:T}, \hat{y}_{1:T}) = \frac{1}{T} \sum_{t=1}^T \mathcal{L}(y_t, \hat{y}_t),
\end{align}
where $T$ is the total number of time steps of the dataset. This metric captures error accumulation over long-term forecasts. 

Even though rollout stability is beyond the scope of this work, it remains informative to assess how new models perform in this setting, which more closely reflects real-world applications than the teacher-forcing setup. For this reason, we report rollout metrics in our experiments.

\newpage
\section{Visualizations of the predictions}
We report in this section the visualization of the predictions for a qualitative visual comparison between FNO and HO-FNO. Both models employ the modern backbone.
\begin{figure}[!h]
    \centering\includegraphics[width=0.7\columnwidth]{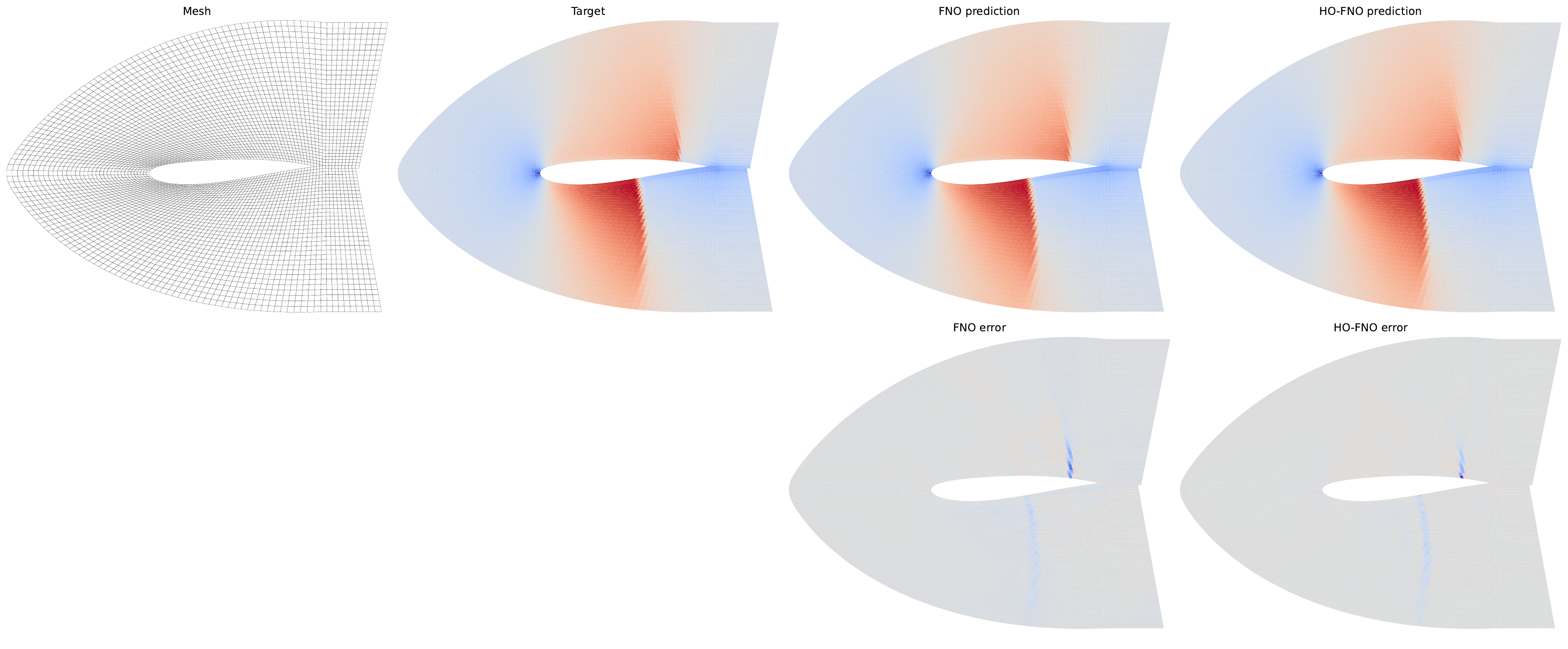}
    \caption{Qualitative visualization of predictions of FNO and HO-FNO (order 2) on the Airfoil dataset.
    } 
    \vspace{-1em}
    \label{fig:backbones}
\end{figure}

\begin{figure}[!h]
    \centering\includegraphics[width=0.7\columnwidth]{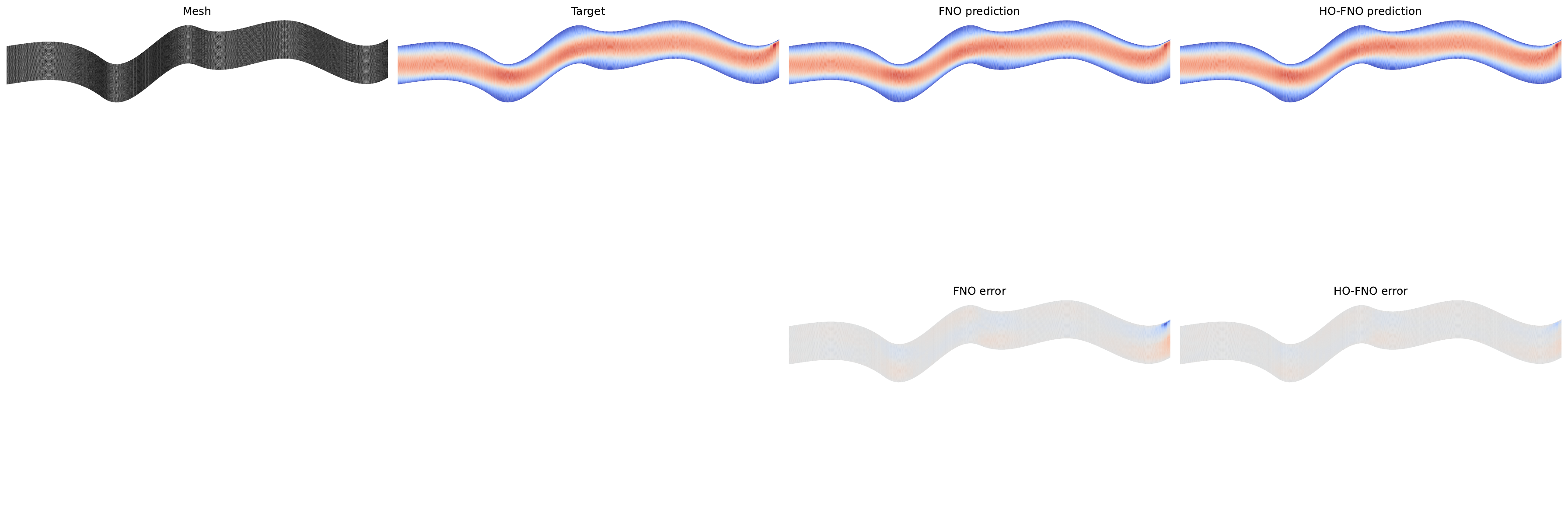}
    \caption{Qualitative visualization of predictions of FNO and HO-FNO (order 2) on the Pipe dataset.
    } 
    \vspace{-1em}
    \label{fig:backbones}
\end{figure}

\begin{figure}[!h]
    \centering\includegraphics[width=0.7\columnwidth]{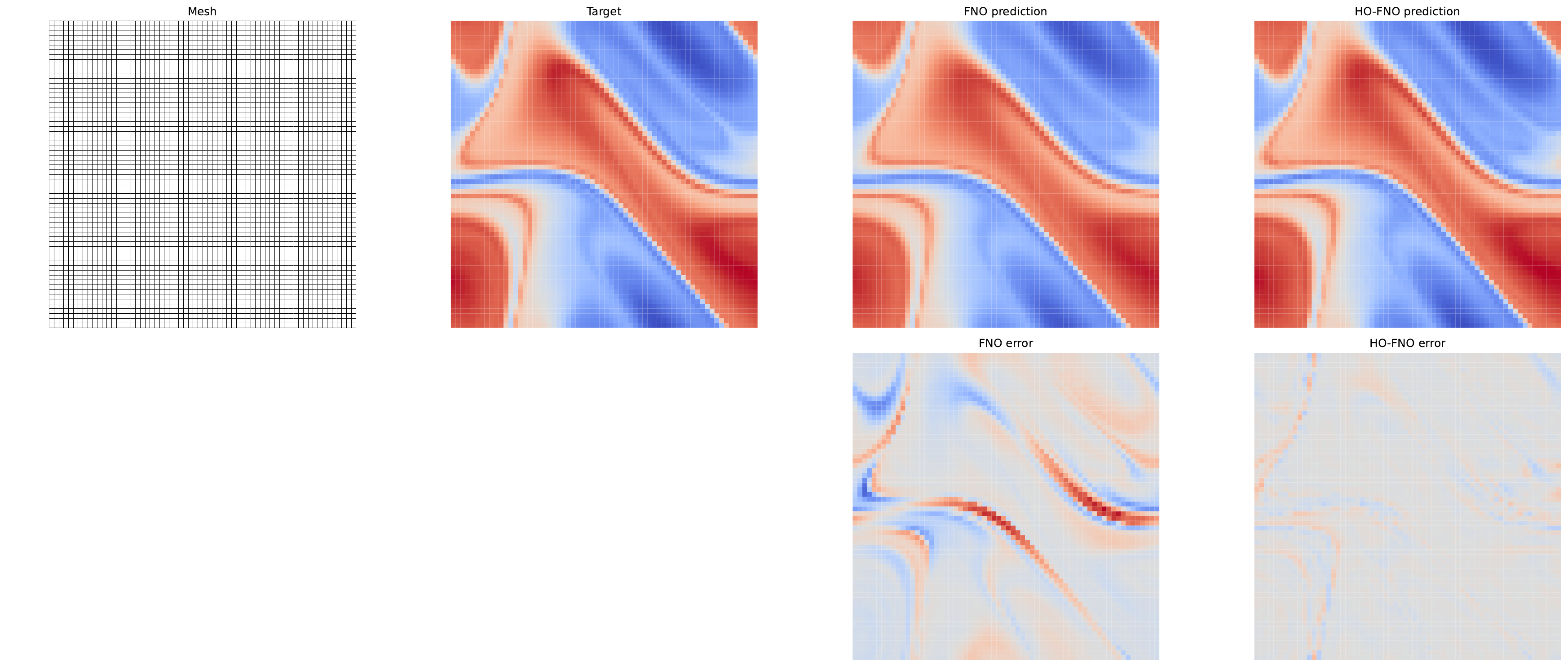}
    \caption{Qualitative visualization of predictions of FNO and HO-FNO (order 2) on the Navier--Stokes dataset.
    } 
    \vspace{-1em}
    \label{fig:backbones}
\end{figure}

\clearpage
\newpage

\end{document}